\DeclareMathOperator{\Hom}{Hom}
\newcommand{\trans}[1]{\xrightarrow{#1}}
\newcommand{\enames}{\mathcal{E}}
\newcommand{\causes}{\mathscr{K}}
\newcommand{\cat}[1]{\mathbf{#1}}
\newcommand{\catPO}{\cat{P}}
\newcommand{\catO}{\mathbb{O}}
\newcommand{\catPOm}{{\cat{P}_{m}}}
\newcommand{\catSet}{\cat{Set}}
\newcommand{\catGraph}{\cat{Graph}}
\newcommand{\pshO}{\catSet^\catO}
\newcommand{\catC}{\cat{C}}
\newcommand{\catD}{\cat{D}}
\newcommand{\psh}[1]{\catSet^{#1}}
\newcommand{\pshC}{\psh{\catC}}
\newcommand{\catCoalg}[1]{{#1}\textbf{-Coalg}}
\newcommand{\catONSet}{\catO\textbf{-Set}}
\newcommand{\catSym}[1]{\textbf{Sym}(#1)}
\newcommand{\pshOPP}{\pshO_{\diamond}}
\newcommand{\deltaFun}{\boldsymbol{\delta}}
\newcommand{\pshMark}{\mathscr{M}}
\newcommand{\lpsh}{\mathcal{L}}
\newcommand{\parts}{\mathcal{P}}
\newcommand{\finParts}{\parts_f}
\newcommand{\Parts}{\mathscr{P}}
\newcommand{\FinParts}{\Parts_f}
\newcommand{\homFun}[1]{\Hom_{#1}}
\newcommand{\oilts}{$\catO$-ILTS}
\newcommand{\oiltsC}{\oilts$_{\mathtt{AC}}$}
\newcommand{\ctrans}[3]{\trans{#1 \caus {#2}_{#3}}}
\newcommand{\actrans}[2]{\xRightarrow{#1 \caus #2}}
\newcommand{\ictrans}[3]{\,|\!\!\!\xRightarrow[#3]{#1 \caus #2}}
\newcommand{\icar}{\ar@{|=>}}
\newcommand{\icstrans}[3]{\,\Vert\!\!\!\xRightarrow[#3]{#1 \caus #2}}
\newcommand{\icsar}{\ar@{||=>}}
\newcommand{\Trans}[1]{\xymatrix@1{\ar@{=X>}[r]^{#1} &}}
\newcommand{\acTrans}[2]{\Trans{#1 \caus #2}}
\newcommand{\bstrans}[2]{\trans{#2}_{\raisebox{1.5ex}{\scriptsize${#1}$}}}
\newcommand{\caus}{\vdash}
\newcommand{\subto}{\hookrightarrow}
\newcommand{\injto}{\rightarrowtail}
\newcommand{\dclos}[2]{{{#1}\!\!\downarrow_{#2}}}
\newcommand{\minp}[1]{\llbracket #1 \rrbracket}
\newcommand{\ext}[1]{{#1}^{\!+}}
\newcommand{\isorep}[1]{[#1]_{\cong}}
\newcommand{\funext}[3]{#1[\nicefrac{#2}{#3}]}
\newcommand{\el}{\mathord{\int}}
\newcommand{\pre}[1]{{}^\bullet{#1}}
\newcommand{\post}[1]{{#1}^\bullet}
\newcommand{\fires}[1]{\mathbin{[{#1}\rangle}}
\newcommand{\bsbisim}{\sim_{bs}}
\newcommand{\cb}{\ensuremath{\mathtt{C}}}
\newcommand{\acb}{\ensuremath{\mathtt{AC}}}
\newcommand{\icb}{\ensuremath{\mathtt{IC}}}
\newcommand{\icsb}{\ensuremath{\mathtt{ICS}}}
\newcommand{\cbisim}{\sim_{\cb}}
\newcommand{\acbisim}{\sim_{\acb}}
\newcommand{\icbisim}{\sim_{\icb}}
\newcommand{\icsbisim}{\sim_{\icsb}}
\newcommand{\clts}{CG$_{\mathtt{C}}$}
\newcommand{\rclts}{CG$_{\mathtt{C}}^r$}
\newcommand{\aclts}{CG$_{\mathtt{AC}}$}
\newcommand{\iclts}{CG$_{\mathtt{IC}}$}
\newcommand{\icslts}{CG$_{\mathtt{ICS}}$}
\newcommand{\bsc}{B\ensuremath{_{\mathtt{C}}}}
\newcommand{\dpos}[2]{[#1]_{#2}}
\newcommand{\dposT}[3]{[#1]_{#2}^{#3}}
\newcommand{\imm}[1]{{#1}_{\mathtt{I}}}
\newcommand{\pushoutcorner}[1][dr]{\save*!/#1+1.2pc/#1:(1,-1)@^{|-}\restore}
\newcommand{\pullbackcorner}[1][dr]{\save*!/#1-1.2pc/#1:(-1,1)@^{|-}\restore}
\newcommand{\qqquad}{\hspace{4em}}
\newcommand{\minpfun}[1]{\langlebar #1 \ranglebar}
\newcommand{\psym}[3]{{#1} \RHD_{#2} {#3}}
\newcommand{\norm}[2]{\Vert #1 \Vert_{#2}}
\newcommand{\symrep}[1]{\Lbag \! #1 \! \Rbag}
\newcommand{\lss}{\preccurlyeq}
\newcommand{\mynewtheorem}[2]{
  \newaliascnt{#1}{dummy}
  \newtheorem{#1}[#1]{#2}
  \aliascntresetthe{#1}
  % maybe we will squish some autoref defaults, but who cares?
  \expandafter\def\csname #1autorefname\endcsname{#2}
}
\newcommand{\notationname}{Notation}
\theoremstyle{plain}
\theoremstyle{definition}
\theoremstyle{remark}
\begin{document}

\title{A coalgebraic semantics for causality in Petri nets}

\author[di]{Roberto Bruni}
\ead{bruni@di.unipi.it}
\ead[url]{http://www.di.unipi.it/~bruni}

\author[di]{Ugo Montanari}
\ead{ugo@di.unipi.it}
\ead[url]{http://www.di.unipi.it/~ugo}

\author[ru]{Matteo Sammartino\corref{cor}}
\ead{m.sammartino@cs.ru.nl}
\ead[url]{http://www.cs.ru.nl/M.Sammartino/}

\tnotetext[proj]{Research supported by the EU Integrated Project 257414 ASCENS, by the Italian MIUR Project CINA (PRIN 2010LHT4KM) and by the NWO Project 612.001.113 Practical Coinduction.}
\cortext[cor]{Corresponding author. Tel.: (+31) 0243652642}
\address[di]{University of Pisa, Computer Science Department, Largo Bruno Pontecorvo 3, 56127 Pisa, Italy}
\address[ru]{Radboud University, Institute for Computing and Information Sciences, Faculty of Science, Heyendaalseweg 135,
6525AJ Nijmegen, The Netherlands}

\begin{abstract}
In this paper we revisit some pioneering efforts to equip Petri nets with compact operational models for expressing causality. The models we propose have a bisimilarity relation and a minimal representative for each equivalence class, and they can be fully explained as coalgebras on a presheaf category on an index category of partial orders. First, we provide a set-theoretic model in the form of a a \emph{causal case graph}, that is a labeled transition system where states and transitions represent markings and firings of the net, respectively, and are equipped with causal information. Most importantly, each state has a poset representing causal dependencies among past events. Our first result shows the correspondence with behavior structure semantics as proposed by Trakhtenbrot and Rabinovich. Causal case graphs may be infinitely-branching and have infinitely many states, but we show how they can be refined to get an equivalent finitely-branching model. In it, states only keep the most recent causes for each token, are up to isomorphism, and are equipped with a symmetry, i.e., a group of poset isomorphisms. Symmetries are essential for the existence of a minimal, often finite-state, model. This first part requires no knowledge of category theory. The next step is constructing a coalgebraic model. We exploit the fact that events can be represented as names, and event generation as name generation. Thus we can apply the Fiore-Turi framework, where the semantics of nominal calculi are modeled as coalgebras over presheaves. We model causal relations as a suitable category of posets with action labels, and generation of new events with causal dependencies as an endofunctor on this category. Presheaves indexed by labeled posets represent the functorial association between states and their causal information. Then we define a well-behaved category of coalgebras. Our coalgebraic model is still infinite-state, but we exploit the equivalence between coalgebras over a class of presheaves and \emph{History Dependent automata} to derive a compact representation, which is equivalent to our set-theoretical compact model. Remarkably, state reduction is automatically performed along the equivalence.
\end{abstract}

\begin{keyword}
Petri nets, causal case graph, behavior structures, presheaves, coalgebras, HD-automata.
\end{keyword}

\maketitle

\section{Introduction}

Petri Nets are a well-known graphical and formal notation for representing concurrent computations. An interesting aspect of Petri Nets is that they allow for the representation of causal dependencies among actions. This kind of information can be useful for debugging distributed systems or for tracing expected or unwanted causal dependencies, and it is usually not provided by interleaving models.

In order to carry out verification on Petri nets, it is convenient to have an \emph{operational} model, that is a model representing single steps of computation and their observable actions. In Petri nets, steps are typically firings and actions are action labels of transitions. One important class of operational models for Petri Nets are \emph{behavior structures} \cite{TR88}. They are automata where each state is equipped with a partial order over events: events represent different occurrences of actions and the poset describes causal dependencies among such occurrences. Behavior structures come with a notion of behavioral equivalence, which later has been called \emph{history preserving bisimilarity} \cite{FroschleH99}.

Other causal models, such as \emph{event structures} \cite{NielsenPW81}, do not come with a built-in operational notion of bisimilarity. Such a notion is essential to compute minimal models, where all states with the same behavior are identified. Open maps \cite{JoyalNW96} can be used to derive \emph{hereditary history preserving bisimulations} (HHPBs), but the existence of minimal representatives is not guaranteed by that theory. Indeed, the general agreement is that HHPB is more suited to capture concurrency, whereas the non-hereditary version deals better with causality. The latter equivalence is coarser, but still causality is informative enough to characterize key security properties, such as non-interference \cite{BaldanCa14}. Moreover, the non-hereditary equivalence has better decidability properties than the hereditary one~\cite{FroschleH99}.

The main issue with causal operational models is that they often have infinitely many states, so model checking is unfeasible. This is indeed the case of behavior structures, where posets of states are enlarged at each transition, because a new event for the corresponding action is generated. Even if we minimize w.r.t.\ bisimilarity, there is no way of throwing away ``useless'' events or decreasing the size of posets.

In this paper we present an approach to obtain compact, and in many cases finite, operational models for causality in Petri nets. They will be presented in two ``flavors'': a set-theoretic and a categorical one, based on coalgebras  \cite{Rutten00,Adamek05}. 
In addition to the theoretical and practical interest of reconducting our problem to unifying and well studied models such as coalgebras, we emphasize that our coalgebraic model is simpler than the set theoretical one. In fact, even if deriving a naive set-theoretic model from a Petri net is not difficult, the technical development required to obtain a compact model is quite involved and requires some ingenuity. Instead, in a categorical setting, this machinery will become remarkably simpler and natural. Actually, in a precise sense, the construction of the compact model will be automatic, thus providing a mathematical justification of the otherwise ad hoc set-theoretic constructions.

\subsection{Set-theoretic models}

After some preliminaries on Petri nets and the presentation of a running example in \autoref{sec:basic},
in \autoref{sec:lts} we model the behavior of a labeled Petri net as a \emph{causal case graph} (CG). Recall that a case graph is a labeled transition graph where states are markings and transitions are steps, representing many firings happening simultaneously. In causal case graphs, transitions are single firings, and causal data are used to encode information about concurrency. More precisely (see \autoref{def:clts}, where CGs are called ``concrete'' as opposed to ``abstract'' CGs, introduced later):
\begin{itemize}
    \item states are of the form $O \rhd c$, where: $O$ is a poset describing causal dependencies among a finite collection of events; $c$ is a marking where each token is decorated with its \emph{causes}, i.e. the set of events that led to its creation (included in $O$);

\item the transition relation is written $\ctrans{K}{e}{a}$, where: $K$ is the set of most recent causes of tokens that enabled the firing; $e$ is a \emph{fresh} event, different from all those occurring in the source state; and $a$ is the action label of the fired transition.

\end{itemize}

\begin{table}[p]
	\centering
	\begin{tabularx}{\textwidth}{ X | X }
		\multicolumn{1}{c|}{States} & \multicolumn{1}{c}{Transition relation}
		\\
		\midrule
		\multicolumn{2}{c}{{\bf Causal case graph (CG)}} \\
		\midrule
		\[
			O \rhd c
		\]
		\begin{itemize}[leftmargin=*]
			\item $O$ is a finite poset describing causal dependencies among events
			\item $c$ is a marking including causes for each token
		\end{itemize}
		&
		\[
			\ctrans{K}{e}{a}
		\]
		\begin{itemize}[leftmargin=*]
			\item $K$ is the set of most recent causes of tokens consumed by the transition
			\item $e$ is a fresh event
			\item $a$ is the fired transition's action label
		\end{itemize}		
		\\
		\midrule
		\multicolumn{2}{c}{{\bf Abstract CG (\aclts)}} \\
		\midrule
		\[
			O \rhd c
		\]
		\begin{itemize}[leftmargin=*]
			\item $O$ is a \emph{canonical representative} of isomorphic posets
			\item $c$ contains canonical events
		\end{itemize}
		&
		\[
			\actrans{K}{a}
		\]
		\begin{itemize}[leftmargin=*]
			\item $K$ as in CG
			\item $a$ is the action label for the canonical fresh event
		\end{itemize}
		\\
		\toprule
		\multicolumn{2}{c}{{\bf Immediate causes CG (\iclts)}} \\
		\midrule
		\[
			O \RHD c
		\]
		\begin{itemize}[leftmargin=*]
			\item $O$ and $c$ contain only the most recent causes w.r.t.\ each token (\emph{immediate causes})
			\item each state is a canonical representative of isomorphic states
		\end{itemize}
		&
		\[
			\ictrans{K}{a}{h}
		\]
		\begin{itemize}[leftmargin=*]
			\item $K$ and $a$ as in \aclts
			\item $h$ is a map telling how events in the target state correspond to those of the source state
		\end{itemize}
		\\
		\toprule
		\multicolumn{2}{c}{{\bf Immediate causes CG with symmetries (\icslts)}} \\
		\midrule
		\[
			\psym{O}{\Phi}{c}
		\]
		\begin{itemize}[leftmargin=*]
			\item $O$ and $c$  as in \iclts
			\item $\Phi$ is a symmetry on $O$
		\end{itemize}
		&
		\[
			\icstrans{K}{a}{h}
		\]
		\begin{itemize}[leftmargin=*]
			\item $K$,$a$ and $h$ as in \iclts
			\item transitions are canonical representatives of ``symmetric'' ones
		\end{itemize}
	\end{tabularx}
	\caption{Set-theoretic models.}
	\label{tab:tr-ref}
\end{table}

We define a notion of bisimilarity for CGs where causal information plays a key role: only states with the same causal dependencies among past events, namely the same poset, are compared. This fact is crucial for the equivalence with history preserving bisimilarity described in \autoref{sec:bs}.

Another important aspect is that transitions draw fresh events from an infinite set of event names. For each firing, we have \emph{infinitely many} transitions in the CG, one for each possible fresh event. In this way we implement \emph{event generation} in the same way name generation is represented, e.g., in nominal calculi. This fact will be crucial for our categorical models.

We, then, derive three consecutive refinements of the CG, described in \autoref{tab:tr-ref}, each improving the CG on one aspect:

\begin{description}
	\item[\aclts~(\autoref{def:aclts}):] the transition relation becomes \emph{finitely branching}, because we don't distinguish between posets with the same structure. In fact, it is enough to generate one canonical event, instead of all possible ones, for each firing. Consequently, states contain canonical representatives of events and only the action label of the new event is recorded in the transition.

	\item[\iclts~(\autoref{def:iclts}):] removing all but immediate causes, and identifying isomorphic states, may significantly reduce the state space, and even make it finite.

	\item[\icslts~(\autoref{def:icslts}):] we equip each state with a set of isomorphisms acting as the identity on the state. These isomorphisms must form a \emph{symmetry}, i.e., a group of automorphisms, on the state's poset. Transitions are reduced accordingly: we select one representative for each collection of ``symmetric'' transitions. Two transitions are symmetric whenever they can be obtained from each other via isomorphisms belonging to the symmetries of source and target states. Symmetries allow for the computation of minimal models, because CGs that are not isomorphic, but bisimilar under a given isomorphism, have a unique minimal realization, where that isomorphism becomes part of the symmetry of a state.
\end{description}
These steps do not change the overall semantics (Theorems \ref{thm:acbisim-cbisim} and \ref{thm:corr-ic-ac}). 

Finally, in \autoref{thm:bs-c-bisim} we establish a connection between CGs and behavior structures.

\subsection{Categorical models}

In the second part of the paper (Sections \ref{sec:back}-\ref{sec:hd}) we assume the reader has some familiarity with category theory. Some preliminaries about presheaves and coalgebras are recalled in \autoref{sec:back}.

Coalgebras are convenient models of dynamic systems. Their theory is rich and well-developed, and many kinds of systems have been characterized in this setting. Coalgebras are also of practical interest: minimization procedures such as \emph{partition refinement} \cite{KanellakisS90}  can be defined in coalgebraic terms (see, e.g., \cite{AdamekBHKMS12}). This further motivates the coalgebraic framework: algorithms implemented at this level of abstraction can be instantiated to many classes of systems.

Our coalgebraic causal model of Petri nets, presented in \autoref{sec:coalg}, is based on the fact that we represent events as names and event generation as name generation, in the style of nominal calculi. This allows us to construct a coalgebra where states are equipped with nominal structures, namely causal relations between events, and event generation is explicit, along the lines of \cite{FioreT01}. The key idea is to define coalgebras over \emph{presheaves}, that are functors from a certain \emph{index category} $\catC$ to $\catSet$, the category of sets and functions. Presheaves formalize the association between a collection of names, seen as an objects of $\catC$, and a set of processes within $\catSet$, indexed by names of the collection. Fresh name generation can be formalized as an endofunctor on $\catC$, that is lifted to presheaves and used in the definition of coalgebras.

We take as index category for presheaves a suitable category of \emph{labeled posets} up to isomorphism, representing causal relations between events decorated with actions. This category provides us with the needed structure to model operations over causal relations. In fact, we use colimits to implement a \emph{well-behaved functorial} model of event generation, which augments a given poset with fresh events and relations to their causes. Our definition ensures that its lifting to presheaves, when used to define coalgebras, yields a category of coalgebras with a final object and a final semantics in agreement with coalgebraic bisimilarity. This is essential for a correct notion of minimal model. Then, we define a presheaf of causal markings, yielding, for each poset, the set of causal markings whose causes are ``compatible'' with that poset. We construct a \emph{causal coalgebra} by translating the abstract CG. The important result is that coalgebraic and ordinary bisimilarity are equivalent (\autoref{thm:bisim-equiv}).

The infinite state issue still exists in the causal coalgebra, because the poset of a causal marking keeps growing along transitions. However, if the presheaf of states is ``well-behaved'', according to \cite{CianciaKM10}, it is always possible to recover the \emph{support} of a causal marking, that is the minimal poset including all and only events that appear in the marking. This is the key condition for the equivalence between presheaf-based coalgebras and \emph{History Dependent (HD) automata} \cite{Pistore99}.

HD-automata are coalgebras with states in \emph{named-sets} \cite{CianciaM10}, that are sets whose elements are equipped with symmetry groups over finite collections of names. They have two main features:
\begin{itemize}
	\item a single state can represent the whole orbit of its symmetry, namely all the states reachable via poset isomorphisms;
	\item the names of each state are \emph{local}, related to those of other states via suitable mappings. 
\end{itemize}
Both features are important for applying finite state methods, such as minimization and model-checking, to nominal calculi. 
In particular, the latter point captures \emph{deallocation}: maps between states can discard unused names and ``compact'' remaining ones, much like \emph{garbage collectors} do for memory locations. A minimization procedure for HD-automata for the (finite-control) $\pi$-calculus has been shown and implemented in \cite{FerrariMT05}.

Interestingly, we are able to define the presheaf of causal markings in a way that computing the support corresponds to discarding all but the immediate causes.
Therefore, in \autoref{sec:hd} we show that the aforementioned equivalence amounts to deriving the immediate causes CG. Actually, it also equips states with symmetries, achieving the last refinement step. We emphasize that such equivalence is completely standard in the theory of nominal calculi. In our case, it is extended to labeled posets and allows the automatic derivation of an HD-automaton over a named set of minimal causal markings.

\section{Basic definitions and running example}
\label{sec:basic}

Given a set of labels $L$, we call \emph{$L$-labeled poset} (or just labeled poset, when $L$ is clear from the context) on a set $S$ a triple $O = (X_O,\lss_O,l_O)$, where $X_O \subseteq S$, $\lss_O$ is a reflexive, transitive and antisymmetric relation on $X_O$ and $l_O \colon X_O \to L$ is a labeling function. A morphism of labeled posets $O \to O'$ is a function $\sigma \colon X_{O} \to X_{O'}$ that preserves order and labeling, namely $x \lss_O y$ implies $\sigma(x) \lss_{O'} \sigma(y)$ and $l_O = l_{O'} \circ \sigma$. We say that $\sigma$ \emph{reflects order} whenever $\sigma(x) \lss_{O'} \sigma(y)$ implies $x \lss_O y$; $\sigma$ is an \emph{order-embedding} whenever it both preserves and reflects order. Notice that isomorphisms reflect order, because their inverses preserve order, and it can be easily checked that order-embeddings are always injective. To simplify notation, we sometimes regard $O$ as a poset on $S \times L$, we write $|O|$ for the underlying set of pairs and $x_l \in X_O \times L$ for the pair $(x,l) \in |O|$. A set $K \subseteq |O|$ is \emph{down-closed} w.r.t.\ $O$ whenever $y \in K$ and $x \lss_O y$ implies $x \in K$. We say that a poset $O$ is a \emph{prefix} of $O'$ if $O$ is a subposet of $O'$ and $|O|$ is down-closed w.r.t.\ $O'$.

In this paper we consider the following kind of Petri nets, which we call just \emph{nets}.
\begin{definition}[Net]
A net is a tuple $(S,T,F,l)$ where:
\begin{itemize}
    \item $S$ is a set of \emph{places} and $T$ is a set of \emph{transitions}, with $S \cap T= \emptyset$;
    \item $F \subseteq (S \times T) \cup (T \times S)$ is the \emph{flow relation};
    \item $l \colon T \to Act$ is a \emph{labeling function}, where $Act$ is a fixed set of action labels.
\end{itemize}
\end{definition}
If $x \in S \cup T$ then $\pre{x} = \{ y \mid (y,x) \in F \}$ and $\post{x} = \{ y \mid (x,y) \in F \}$ are called the \emph{pre-set} and \emph{post-set} of $x$, respectively; for all $t \in T$, we assume $\pre{t}, \post{t} \neq \emptyset$.
 A \emph{marking} $m$ is a multiset over $S$. 
A transition $t \in T$ is \emph{enabled} at marking $m$ if $s \in m$, for all $s \in \pre{t}$, in which case it can \emph{fire}, written $m \fires{t} m'$, i.e., a new marking $m' = (m \setminus \pre{t}) \cup \post{t}$ is produced. We say that a net is \emph{marked} whenever it has an \emph{initial marking} $m_0$. We denote by $\fires{m_0}$ the set of markings reachable from $m_0$ by a (finite) sequence of firings.

We require that elements of initial markings have multiplicity one. This implies that $m_0$ is actually a set, in agreement with the fact that pre-sets and post-set in nets are sets, meaning that they can only consume one token at a time from a given place. In typical P/T nets transitions may consume many tokens from the same place, but this difference is inessential for the development of our theory. 

\paragraph{\bf Running example}
\label{ex:running}
As a running example, we will use the marked net defined as follows: $S = \{s_1,s_2\}$, $T= \{t_1,t_2,t_3\}$, $F$ includes $(s_i,t_i),(s_i,t_3)$ (for $i=1,2$) and symmetric pairs, and $l(t_1) = l(t_2) = a$, $l(t_3) = b$. The initial marking is $m_0 = \{s_1,s_2\}$. This net is depicted below: circles denote places, squares denote transitions, edges describe the flow relation, and filled circles indicate the position of tokens in $m_0$. Notice that $\fires{m_0} = \{m_0\}$.

\begin{center}
\begin{tikzpicture}[auto]    
    \node[place,label=$s_1$,tokens=1] (s1) {};
            
    \node[place,label=$s_2$,tokens=1,right=1.5cm of s1] (s2) {};          
    \coordinate (mid) at ($(s1)!0.5!(s2)$);
    
    \node[transition] (t3) at (mid) {$b$}
        edge [pre,bend right=45] node {} (s1)
        edge [pre,bend left=45] node {} (s2)        
        edge [post,bend right=45] node {} (s2)
        edge [post,bend left=45] node {} (s1);
    
    \node[below=1ex of t3] (l3) {$t_3$} ;
    
    \node[transition] (t1) [left of=s1] {$a$}
        edge [pre,bend left=45] node {} (s1)
        edge [post,bend right=45] node {} (s1);
    
    \node[below=1ex of t1] (l1) {$t_1$} ;
        
    \node[transition] (t2) [right of=s2] {$a$}
        edge [pre,bend right=45] node {} (s2)
        edge [post,bend left=45] node {} (s2);
        
    \node[below=1ex of t2] (l2) {$t_2$} ;        
\end{tikzpicture}
\end{center}

\section{Causal semantics for Petri nets}
\label{sec:lts}

In this section we introduce our \emph{causal labeled semantics} for nets. It will be in the form of a \emph{causal case graph} (CG in short), that is a labeled transition graph whose states are markings with causal information and transitions represent firings. We start from a naive CG, derived from a given net in the simplest way, and then we give three subsequent refinements that will lead to a compact and, in some cases, finite-state CG. 
Throughout this section we fix a net $N = (S,T,F,l)$ and we assume that an infinite set $\enames$ of event names (or just events) is available.

The key idea is to equip markings with information about the occurrences of actions that led to the creation of each token. An occurrence of a transition labeled by $a \in Act$ is represented as an $Act$-labeled event $e_a$. Formally, a \emph{causal marking} $c$ is a set of the form
\[
    \{ K_1 \caus s_1,\dots,K_n \caus s_n  \}
\]
where $K_i \subseteq \finParts(\enames \times Act)$ is the set of \emph{causes} of $s_i \in S$, for $i=1,\dots,n$. More specifically, if $e_a \in K_i$ then the sequence of firings that generated the token includes a transition with action label $a$. We write $\causes(c)$ for $K_1 \cup \dots \cup K_n$ and $|c|$ for the underlying marking $\{s_1,\dots,s_n\}$ of $c$. Given a marking $m$ and $K \subseteq \finParts(\enames \times Act)$, $K \caus m$ is the causal marking obtained by assigning causes $K$ to each $s \in m$.

Transitions of our CGs will generate new events and their causal dependencies. In order to keep track of these data, we equip causal markings with $Act$-labeled posets, describing the causal relations between events which are occurrences of past actions. 
\begin{definition}[P-marking]
A P-marking is a pair $O \rhd c$, where $c$ is a causal marking and $O$ is a finite $Act$-labeled poset on $\enames$ such that: if $K \caus s \in c$ then $K$ is down-closed w.r.t.\ $O$.
\end{definition}
Down-closure requires each set of causes to contain the whole ``history'' of its events, as described by $O$. Nevertheless, $O$ may contain events that are unrelated to or caused by those of $\causes(c)$, but that are not among them.

Posets will have different purposes in the different classes of CGs we are going to introduce: they will be used to record either all the events happened so far or the ``most recent'' ones. The shape of P-markings will not change, but there will be additional requirements on their components.

We introduce a useful operation on P-markings. Their posets can be enlarged by adding events from which existing events causally depend on, but a closure operator must be applied, in order to retain down-closure of sets of causes. 
\begin{definition}[Closure operator]
Given $K \subseteq |O|$ and $O'$ such that $O$ is a subposet of $O'$, the \emph{closure} of $K$ w.r.t.\ $O'$ is given by
\[
	\dclos{K}{O'} = \bigcup_{x \in K} \{ y \in |O'| \mid y \lss_{O'} x \}
\]
Its extension to causal markings is $\dclos{(K \caus s)}{O'} = \dclos{K}{O'} \caus s$ and acts element-wise on sets.
\end{definition}
Given a P-marking $O \rhd c$ and $O' \supseteq O$, it can be easily verified that $O' \rhd \dclos{c}{O'}$ is a proper P-marking.

\subsection{Concrete CG}

The first step is deriving a CG from the net. Its states are P-markings $O \rhd c$ such that $O$ contains the whole history of past events and transition labels are of the form $K \caus e_a$, meaning that an $a$-labeled transition $t$ is fired: $e_a$ is an event fresh w.r.t.\ all the previous ones (i.e., those in $O$) and $K$ is the set of most recent causes associated to tokens that enabled $t$. We call this CG \emph{concrete} because posets with the same structure but different event names are distinguished. 
\begin{definition}[Concrete CG]
\label{def:clts}
The \emph{concrete CG} (\clts) is the smallest CG generated by the following rule
    \[
        \frac{
            t \in T
            \quad
            |c| = \pre{t}
            \quad
            a = l(t)
            \quad
            e \in \enames \setminus X_O
            \quad
            K = \max_O \causes(c)
        }
        {
            O \rhd c \cup c' \ctrans{K}{e}{a} \delta(O,K,e_a) \rhd (\causes(c) \cup \{e_a\} \caus \post{t} ) \cup c'
        }
    \]
where $\max_O K$, for $K \subseteq |O|$, is the set of maximal elements in $K$ according to $O$, and $\delta(O,K,x) =  ( O \cup (K \times \{ x \}) )^*$.

\end{definition}
Given a P-marking, the rule above checks whether it includes a causal marking $c$ such that its underlying marking is the pre-set of a transition $t$ ($|c| = \pre{t}$). If this is the case, $t$ is turned into a CG transition whose label $K \caus e_a$ is formed by the maximal causes $K$ of $c$ w.r.t.\ $O$ and by a labeled event $e_a$, where $e$ does not occur in the source poset ($e \notin \enames \setminus X_O$). The target state is obtained by replacing $c$ with the tokens produced by the firing, each equipped with the whole set of causes of $c$ plus the new event $e_a$. Since $e_a$ is causally dependent on the causes of $c$, the poset in the target state is updated with new pairs representing such dependencies by taking $\delta(O,K,e_a)$.

Note that event generation is similar to name generation in nominal calculi.\footnote{The relationship between $\pi$-calculus and causality has been investigated in \cite{BorealeS98}.}
 For instance, in a $\pi$-calculus extrusion transition $(y) \overline{x}y.p \trans{\overline{x}(z)} p[\nicefrac{z}{y}]$ we observe a free name $x$ and a fresh name $z$, which then becomes free in the continuation. Analogously, in a transition $O \rhd c \ctrans{K}{e}{a} \delta(O,K,e_a) \rhd c'$ the elements of $K$ are ``free'' events, in the sense that they occur in $c$, and $e$ is a fresh one, which is then added to the continuation. As in the $\pi$-calculus, event generation causes \clts{} to have infinitely-many states and to be infinitely-branching , because there are infinitely-many transitions and continuations from any state, differing only for the identity of the fresh event.

\begin{remark}
Even if initial markings are sets, firings may eventually produce a proper multiset, for instance when a transition puts a token in a place $s$ that is already marked. Instead, our causal markings are sets: they can never contain two occurrences of $K \caus s$, for any $K$. In fact, suppose the first of the described firings becomes a CG transition that goes to a P-marking including $K \caus s$. Then, since the second transition fires later, it will generate an event $e_a \notin K$ and a target P-marking that includes both $K \caus s$ and a new $K' \caus s$ such that $e_a \in K'$, so $K \neq K'$.
\end{remark}

\begin{example}
\autoref{fig:runex-cgc} depicts some transitions of the \clts{} for the running example. It shows only the reachable part from $\emptyset \rhd \emptyset \caus m_0$, up to a certain depth. Each state has three kinds of outgoing transitions, corresponding to the three net transitions. The figure only shows one transition for each kind, but there are actually infinitely many ones, one for each fresh event.
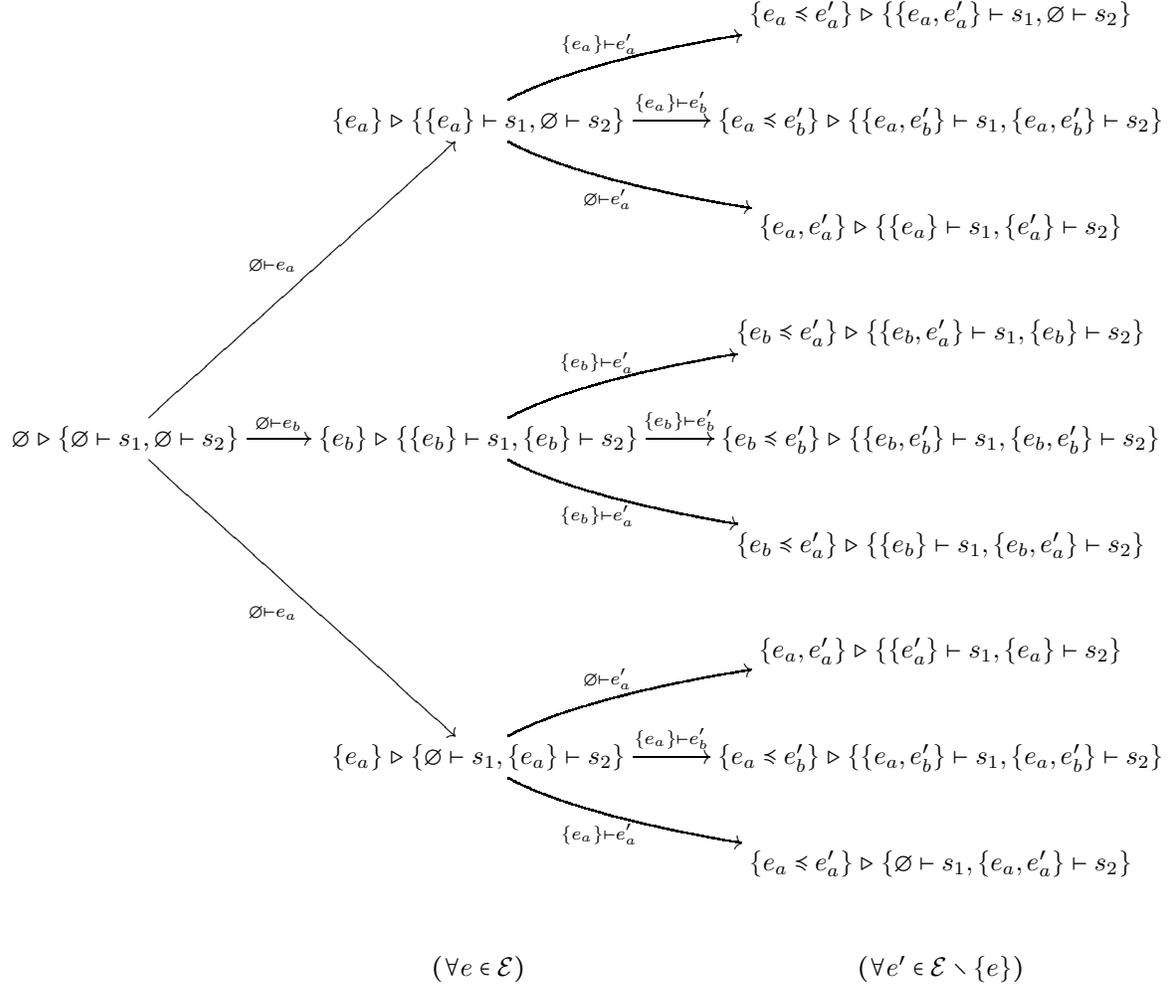
\begin{figure}[p]
\centering
\[
\xymatrix{
&& 
\{e_a \lss e'_a\} \rhd \{\{e_a,e'_a\} \caus s_1,\emptyset \caus s_2\}
\\
&
\{e_a\} \rhd \{\{e_a\}\caus s_1,\emptyset \caus s_2 \}
\ar@(ur,l)[ur]^-{\{e_a\} \caus e'_a}
\ar[r]^-{\{e_a\} \caus e'_b}
\ar@(dr,l)[dr]_-{\emptyset \caus e'_a}
&
\{e_a \lss e'_b\} \rhd \{\{e_a,e'_b\} \caus s_1,\{e_a,e'_b\} \caus s_2\}
\\
&&
\{e_a , e'_a\} \rhd \{\{e_a\} \caus s_1,\{e'_a\} \caus s_2\}
\\
&&
\{e_b \lss e'_a\} \rhd \{\{e_b,e'_a\}\caus s_1,\{e_b\} \caus s_2 \}
\\
\emptyset \rhd \{ \emptyset \caus s_1 , \emptyset \caus s_2 \}
\ar[uuur]^{\emptyset \caus e_a}
\ar[r]^-{\emptyset \caus e_b}
\ar[dddr]_{\emptyset \caus e_a}
& 
\{e_b\} \rhd \{\{e_b\}\caus s_1,\{e_b\} \caus s_2 \}
\ar@(ur,l)[ur]^-{\{e_b\} \caus e'_a}
\ar[r]^-{\{e_b\} \caus e'_b}
\ar@(dr,l)[dr]_-{\{e_b\} \caus e'_a}
&
\{e_b \lss e'_b\} \rhd \{\{e_b,e'_b\}\caus s_1,\{e_b,e'_b\} \caus s_2 \}
\\
&&
\{e_b \lss e'_a\} \rhd \{\{e_b\}\caus s_1,\{e_b,e'_a\} \caus s_2 \}
\\
&&
\{e_a , e'_a\} \rhd \{\{e'_a\}\caus s_1,\{e_a\} \caus s_2  \}
\\
&
\{e_a\} \rhd \{\emptyset\caus s_1,\{e_a\} \caus s_2 \}
\ar@(ur,l)[ur]^-{\emptyset \caus e'_a}
\ar[r]^-{\{e_a\} \caus e'_b}
\ar@(dr,l)[dr]_-{\{e_a\} \caus e'_a}
&
\{e_a \lss e'_b\} \rhd \{\{e_a,e'_b\} \caus s_1,\{e_a,e'_b\} \caus s_2 \}
\\
&&
\{e_a \lss e'_a\} \rhd \{\emptyset\caus s_1,\{e_a,e'_a\} \caus s_2 \}
\\
&
(\forall e \in \enames)
&
(\forall e' \in \enames \setminus \{e\} )
}
\]
\caption{\clts{} for the running example (initial fragment).}
\label{fig:runex-cgc}
\end{figure}
\end{example}
We now introduce bisimulations for \clts.
\begin{definition}[Concrete causal bisimulation]
\label{def:cbisim}
A \emph{concrete causal bisimulation} (\cb-bisimulation in short) is a family of relations $\{R_O\}$ on P-markings, indexed by $Act$-labeled posets, such that:
\begin{itemize}
    \item whenever $(O_1 \rhd c_1 , O_2 \rhd c_2) \in R_O$ then $O_1 = O_2 = O$;
    \item whenever $(O \rhd c_1 , O \rhd c_2) \in R_O$ and $O \rhd c_1 \ctrans{K}{e}{a} O' \rhd c_1'$ then $O \rhd c_2 \ctrans{K}{e}{a} O' \rhd c_2'$ and $(O' \rhd c_1',O' \rhd c_2') \in R_{O'}$ (and viceversa).
\end{itemize}
The concrete causal bisimilarity is the greatest such family and is denoted by $\cbisim$.
\end{definition}

\subsection{Abstract CG}
\label{ssec:aclts}

We now introduce an \emph{abstract CG}, where we only take posets up to isomorphism. We write $\isorep{O}$ for the isomorphism representative of $O$, and we call it \emph{abstract poset}. We call \emph{abstract} a P-marking of the form $\isorep{O} \rhd c$. 

Given an abstract poset $O$, $K \subseteq |O|$ and $a \in Act$, we assume the following operations:
\begin{itemize}     
    \item $\delta(O,K,a)$, generating $\isorep{\delta(O,K,e_a)}$, for any $e_a$; the actual identity of $e_a$ is not relevant, because of the quotient up to isomorphism;
    \item $new(O,K,a)$, giving the unique new event in $\delta(O,K,a)$;
    \item the morphism $old(O,K,a)$, embedding $O$ into $\delta(O,K,a)$;
\end{itemize}
These operations can be used to define the \emph{extension} of $\sigma \colon O \to O'$ (with $O,O'$ abstract posets) to a morphism $\ext{\sigma}_{K,a} \colon \delta(O,K,a) \to \delta(O',\sigma(K),a)$ given by
\[
	\ext{\sigma}_{K,a}(x) = 
	\begin{cases}
		new(O',\sigma(K),a) & x = new(O,K,a) \\
		old(O',\sigma(K),a)(\sigma(y)) & x = old(O,K,a)(y)
	\end{cases}
\]
The intuition is that $\ext{\sigma}_{K,a}$ does not mix up old and new events: it acts ``as'' $\sigma$ (modulo suitable embeddings) on events that were already in $O$, and maps the new event in $\delta(O,K,a)$ to the new one in $\delta(O',\sigma(K),a)$. To ease notation, we will just write $\ext{\sigma}$ when $K$ and $a$ are clear from the context.
\begin{example}
Suppose $O_1 = \{x_a,x'_b\}$ and $O_2 = \{y_a,y'_b,y''_c\}$ are discrete abstract posets, and let $\sigma \colon O_1 \to O_2$ map $x_a$ to $y_a$ and $x'_b$ to $y'_b$. Let $\hat{x}_z$ (resp.\ $\hat{y}_z$) be the image of $x_z$ via $old(O,\{x_a,x'_b\},d)$ (resp.\ via $old(O',\{y_a,y'_b\},d)$), for $z \in \{a,b\}$. Then we have 
\[
    \delta(O_1,\{x_a,x'_b\},d) =
    \begin{gathered}
        \xymatrix@C=0ex{
            & new(O_1,\{x_a,x'_b\},d)\\
            \hat{x}_a \ar[ur] && \hat{x}_b' \ar[ul]
        }
    \end{gathered}
    \qquad
    \delta(O_2,\{y_a,y_b'\},d) =
    \begin{gathered}
        \xymatrix@C=0ex{
            & new(O_2,\{y_a,y'_b\},d) \\
            \hat{y}_a \ar[ur] & \hat{y}'_b \ar[u] & \hat{y}''_c
        }
    \end{gathered}
\]
where arrows represent ordered pairs (reflexive pairs are omitted). Then $\ext{\sigma} \colon \delta(O_1,\{x_a,x_b'\},d) \to \delta(O_2,\{y_a,y_b'\},d)$ maps $\hat{x}_a$ to $\hat{y}_a$, $\hat{x}'_b$ to $\hat{y}'_b$ and $new(O,\{x_a,x'_b\},d)$ to $new(O_2,\{y_a,y'_b\},d)$.
\label{ex:aclts}
\end{example}
We now introduce the \emph{abstract CG}. Its states are abstract P-markings and its labels have the form $K \caus a$. Labels have the same meaning as in \clts, but here there is no need to observe the generated event: it will always be $new(O,K,a)$, if $O$ if the source P-marking's poset. 

In order to translate concrete P-markings, and their transitions, to their abstract counterparts in \aclts, we fix an \emph{abstraction isomorphism} $\alpha_O \colon O \to \isorep{O}$, for each poset $O$, giving a canonical representative of each event in $O$. In the following we write $\norm{x}{O}$ for the ``abstract version'' of $x$, namely $x \alpha_O$. We also introduce an operation $\norm{c}{O,K,e_a}$. It will be applied to causal markings $c$ appearing in continuations of transitions of \clts, namely those P-markings of the form $\delta(O,K,e_a) \rhd c$. Intuitively, given a transition in \clts, the operation $\norm{-}{O,K,e_a}$ applies the abstraction isomorphism of the source P-marking to its continuation, so that events of source and continuation are consistent with each other and the fresh event generated by the transition always becomes the canonical new one. Formally, $\norm{c}{O,K,e_a}$ is defined as follows: events in $O$ are mapped via $\alpha_O$ and then embedded into $\isorep{\delta(O,K,e_a)}$ via $old(\isorep{O},\norm{K}{O},a)$  (notice that $\isorep{\delta(O,K,e_a)}=\delta(\isorep{O},\norm{K}{O},a)$, because they are isomorphic); and $e_a$ is embedded into $\isorep{\delta(O,K,e_a)}$ as $new(\isorep{O},\norm{K}{O},a)$. 

\begin{definition}[abstract CG]
\label{def:aclts}
The \emph{abstract CG} (\aclts) is the smallest CG generated by the following rule
\[
    \frac
    {
        O \rhd c \ctrans{K}{e}{a} \delta(O,K,e_a) \rhd c'
    }
    {
        \isorep{O} \rhd \norm{c}{O} \actrans{\norm{K}{O}}{a} \delta(\isorep{O},\norm{K}{O},a) \rhd \norm{c'}{O,K,e_a}
    }
\]
\end{definition}

The most important fact to notice is that \aclts{} is finitely branching. In fact, even if there are infinitely-many concrete P-markings that generate the transitions of an abstract P-marking $O \rhd c$, they are all isomorphic. To see this, take any two P-markings $O_1 \rhd c_1$ and $O_2 \rhd c_2$ such that $\norm{c_1}{O_1} = \norm{c_2}{O_2} = c$. Then we have $c = c_1\alpha_{O_1}^{-1} = c_2\alpha_{O_2}^{-1}$, so $c_2 = c_1\sigma$, where $\sigma$ is the isomorphism $\alpha_{O_2}^{-1} \circ \alpha_{O_1}$. The following lemma states the correspondence between transitions of such P-markings.
\begin{lemma}
Let $\sigma \colon O_1 \to O_2$ be an isomorphism. Then $O_1 \rhd c_1 \ctrans{K}{e}{a} \delta(O_1,K,e_a) \rhd c_1'$ if and only if $O_2 \rhd c_1 \sigma \ctrans{\sigma(K)}{e'}{a} \delta(O_2,\sigma(K),e'_a) \rhd c_1'\funext{\sigma}{e'_a}{e_a}$, for any $e' \notin X_{O_2}$.
\label{lem:ctrans-iso}
\end{lemma}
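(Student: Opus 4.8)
The plan is to prove the "only if" direction by direct inspection of the transition rule in \autoref{def:clts}, and then observe that the "if" direction follows by symmetry, applying the same argument to $\sigma^{-1} \colon O_2 \to O_1$ (noting that $\sigma^{-1}$ is again an isomorphism, and that $(c_1\sigma)\sigma^{-1} = c_1$, etc.). So the real content is a single implication. Suppose $O_1 \rhd c_1 \ctrans{K}{e}{a} \delta(O_1,K,e_a) \rhd c_1'$. By the rule, this transition arises from a decomposition $c_1 = c \cup c'$ with $|c| = \pre{t}$ for some $t \in T$, $a = l(t)$, $e \in \enames \setminus X_{O_1}$, $K = \max_{O_1}\causes(c)$, and $c_1' = (\causes(c) \cup \{e_a\} \caus \post{t}) \cup c'$.

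First I would transport this decomposition along $\sigma$. Since $\sigma$ is injective, $c_1\sigma = (c\sigma) \cup (c'\sigma)$ with $|c\sigma| = |c| = \pre{t}$ (the morphism $\sigma$ only renames events, not places), so the same net transition $t$ is enabled with the same label $a$. Because $\sigma$ is an order-\emph{iso}morphism it preserves \emph{and} reflects order, hence it commutes with taking maximal elements: $\max_{O_2}\causes(c\sigma) = \max_{O_2}(\sigma(\causes(c))) = \sigma(\max_{O_1}\causes(c)) = \sigma(K)$. Picking any fresh $e' \notin X_{O_2}$, the rule then fires and produces $O_2 \rhd c_1\sigma \ctrans{\sigma(K)}{e'}{a} \delta(O_2,\sigma(K),e'_a) \rhd c_2'$ where $c_2' = (\causes(c\sigma) \cup \{e'_a\} \caus \post{t}) \cup (c'\sigma)$.

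It remains to identify $c_2'$ with $\funext{\sigma}{e'_a}{e_a}(c_1')$, i.e.\ with the image of $c_1'$ under the extension $\ext{\sigma} \colon \delta(O_1,K,e_a) \to \delta(O_2,\sigma(K),e'_a)$. By definition $\ext{\sigma}$ sends $e_a$ to $e'_a$ and sends $old(O_1,K,e_a)(y)$ to $old(O_2,\sigma(K),e'_a)(\sigma(y))$, i.e.\ it acts as $\sigma$ on the old events. Applying this element-wise: $\ext{\sigma}$ sends $\causes(c) \cup \{e_a\}$ to $\sigma(\causes(c)) \cup \{e'_a\} = \causes(c\sigma) \cup \{e'_a\}$, so $\ext{\sigma}(\causes(c) \cup \{e_a\} \caus \post{t}) = \causes(c\sigma) \cup \{e'_a\} \caus \post{t}$; and $\ext{\sigma}(c') = \sigma(c') = c'\sigma$ since $c'$ contains only old events. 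Hence $\ext{\sigma}(c_1') = c_2'$, which is exactly what we want. A small point to check along the way is that $\delta(O_2,\sigma(K),e'_a)$ really is (isomorphic to, and here equal as the relevant poset obtained by the $(-)^*$ transitive closure of $O_2 \cup (\sigma(K)\times\{e'_a\})$) the codomain of $\ext{\sigma}$, which follows from $\sigma$ being an order-isomorphism so that $\sigma$ together with $e_a \mapsto e'_a$ is an order-iso $\delta(O_1,K,e_a) \cong \delta(O_2,\sigma(K),e'_a)$.

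I expect the main obstacle to be purely bookkeeping: keeping straight the distinction between the \emph{abstract} operations $old, new, \delta(-,-,a)$ (defined up to iso in \autoref{ssec:aclts}) and their \emph{concrete} instances $\delta(O,K,e_a)$ used in \autoref{def:clts}, and making sure the element-wise action of $\ext{\sigma}$ on causal markings is spelled out correctly (it acts on each set of causes $K_i$ and leaves the places $s_i$ untouched). There is no genuine difficulty — the only structural facts used are that isomorphisms are bijective, preserve and reflect order, and therefore commute with $\causes(-)$, $|\cdot|$, $\max$, and the transitive-closure construction $(-)^*$ — but it is worth stating the "$\max$ commutes with order-isomorphisms" observation explicitly, since it is the one place where we use that $\sigma$ is an iso and not merely a morphism.
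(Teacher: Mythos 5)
Your argument is correct, and it is in substance the same computation the paper performs --- but the paper does not prove this lemma directly. Its proof is the one-liner ``corollary of \autoref{lem:clts-clos}'': the general lemma that \clts{} transitions are preserved and reflected by arbitrary order-embeddings $\sigma$, at the cost of applying the closure operator $\dclos{(-)}{O_2}$ to source and continuation. Specializing that lemma to an isomorphism and observing that the closures are then vacuous (an iso maps down-closed cause-sets to down-closed cause-sets) yields exactly the statement here. Your direct proof is the proof of \autoref{lem:clts-clos}$(i)$ with the closure bookkeeping stripped out: same decomposition $c_1 = c \cup c'$, same transport of $|c| = \pre{t}$ along $\sigma$, same use of ``order-isos commute with $\max$,'' same identification of the continuation via the element-wise action of $\funext{\sigma}{e'_a}{e_a}$. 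What your route buys is a self-contained, shorter argument for this particular statement; what the paper's route buys is that the harder, closure-decorated version is needed anyway (for \autoref{prop:aclts-clos} and \autoref{thm:bs-c-bisim}), so proving it once and specializing avoids duplicating the computation. One point you correctly isolate, and which is where the two statements genuinely diverge, is that $\max$ commutes with $\sigma$ only because isos reflect order --- for a mere order-embedding into a larger poset the maxima can change (the paper's \autoref{ex:tr-pres} is precisely about this), which is why the general lemma needs the closure operator and the order-embedding hypothesis. Your handling of the converse direction via $\sigma^{-1}$ and the mutual inverses $\funext{\sigma}{e'_a}{e_a}$, $\funext{\sigma^{-1}}{e_a}{e'_a}$ is also fine.
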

If we take any two transitions of $O_1 \rhd c_1$ and $O_2 \rhd c_2$ that correspond by this lemma, and we apply the rule in \autoref{def:aclts} to them, it can be easily verified that we get the same transition, no matter the choice of $e_a$ and $e'_a$.
Therefore, all the infinitely-many P-markings whose abstract version is $O \rhd c$ generate precisely the same transitions of $O \rhd c$, and transitions that differ for the choice of the fresh event are all identified. This means that \aclts{} is finitely-branching.

There is again a similarity with the $\pi$-calculus. A well-known technique to make the $\pi$-calculus LTS finitely-branching is to only take $\alpha$-equivalence representatives. For instance, if $(y)\overline{x}y.p$ is such a representative, then the transition $(y)\overline{x}y.p\trans{\overline{x}(y)}p$ is enough to represent all the analogous transitions from $\alpha$-equivalent processes. We can also omit $y$ from the label, because its identity uniquely depends on the free names of $(y)\overline{x}y.p$. This is similar to the presentation of the $\pi$-calculus using abstraction and concretion operators \cite[4.3.1]{SangiorgiW01}. Here a transition from $(y) \overline{x}y.p$ is labeled by $\overline{x}$ and goes to the concretion $\langle \nu y \rangle p$, where $y$ is bound. Incidentally, this presentation naturally arises from the coalgebraic semantics of the $\pi$-calculus \cite{FioreT01}, and its implementation in logical frameworks.

\begin{example}
\label{ex:running-aclts}
The \aclts{} for the running example can be represented again by \autoref{fig:runex-cgc}. If we assume that depicted posets are abstract (i.e., translation maps from concrete to abstract posets are identities) then, in order to get a \aclts, we just have to remove the universal quantification over events, and also remove the generated event from the label. The result is a finitely-branching CG, where each state has only one transition for each net transition. The state-space is still infinite, because posets keep growing along transitions.
\end{example}
\begin{definition}[Abstract causal bisimilarity]
\label{def:acbisim}
An \emph{abstract causal bisimulation} (\acb-bisimulation in short) is a family of relations $\{ R_O \}$, indexed by abstract posets, such that:
\begin{itemize}
    \item whenever $(O_1 \rhd c_1, O_2 \rhd c_2) \in R_O$ then $O_1 = O_2 = O$;
    \item whenever $(O \rhd c_1, O \rhd c_2) \in R_O$ and $O \rhd c_1 \actrans{K}{a} O' \rhd c_1'$ then $O \rhd c_2 \actrans{K}{a} O' \rhd c_2'$ and $(O' \rhd c_1',O' \rhd c_2') \in R_{O'}$ (and viceversa).
\end{itemize}
The greatest such relation is denoted by $\acbisim$.
\end{definition}
We have the following correspondence between $\cbisim$ and $\acbisim$.
\begin{theorem}
Let $O \rhd c_1$ and $O \rhd c_2$ be (concrete) P-markings. Then $O \rhd c_1 \cbisim O \rhd c_2$ if and only if $\isorep{O} \rhd \norm{c_1}{O} \acbisim \isorep{O} \rhd \norm{c_2}{O}$.
\label{thm:acbisim-cbisim}
\end{theorem}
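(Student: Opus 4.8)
The plan is to prove the two implications separately, each time by exhibiting an explicit bisimulation of the appropriate kind (recall that $\cbisim$ is itself a \cb-bisimulation and $\acbisim$ an \acb-bisimulation, so in each direction it suffices to build one such family containing the required pair). The shape of both candidate families is dictated by one subtlety: in the rule of \autoref{def:aclts} the target P-marking is normalised by $\norm{-}{O,K,e_a}$, which uses the abstraction isomorphism $\alpha_O$ of the \emph{source} poset (suitably extended), and not $\alpha_{\delta(O,K,e_a)}$. Hence a family indexed ``on the nose'' by abstract posets does not transfer along the rule, and both candidates must be closed under poset isomorphisms. I also adopt the harmless convention $\alpha_Q=\mathrm{id}_Q$ when $Q$ is already abstract.

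First I would upgrade the finite-branching discussion following \autoref{def:aclts} to a precise \emph{transfer correspondence}: for every isomorphism $\psi\colon O\to P$ with $P$ abstract, the concrete move $O\rhd c\ctrans{K}{e}{a}\delta(O,K,e_a)\rhd c'$ holds for some (equivalently, every) fresh $e$ iff the abstract move $P\rhd c\psi\actrans{\psi(K)}{a}\delta(P,\psi(K),a)\rhd c'\gamma_\psi$ holds, where $\gamma_\psi\colon\delta(O,K,e_a)\to\delta(P,\psi(K),a)=\isorep{\delta(O,K,e_a)}$ is the isomorphism extending $\psi$ in the manner of $\ext{\psi}$ (acting as $\psi$, modulo the embeddings $old$, on old events and sending $new(O,K,a)$ to $new(P,\psi(K),a)$). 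One direction transports the concrete move along $\psi$ with \autoref{lem:ctrans-iso}, then applies the rule of \autoref{def:aclts} and checks, for any fresh $e''$, the identity $\norm{c'\funext{\psi}{e''_a}{e_a}}{P,\psi(K),e''_a}=c'\gamma_\psi$ --- so in particular $\gamma_{\alpha_O}$ underlies $\norm{-}{O,K,e_a}$; the converse uses that the abstract move arises by \autoref{def:aclts} from \emph{some} concrete move, which \autoref{lem:ctrans-iso} transports onto $O\rhd c$. I also record from \autoref{lem:ctrans-iso} that $\cbisim$ and $\acbisim$ are \emph{equivariant} under isomorphisms of (abstract) posets.

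For the ``only if'' part, starting from $\{R_O\}={\cbisim}$ I would put, for each abstract poset $P$,
\[
S_P=\bigl\{(P\rhd c_1\beta,\ P\rhd c_2\beta)\ \bigm|\ \beta\colon O'\xrightarrow{\cong}P,\ (O'\rhd c_1,O'\rhd c_2)\in R_{O'}\bigr\},
\]
so that $\beta=\alpha_O$ witnesses $(\isorep O\rhd\norm{c_1}{O},\isorep O\rhd\norm{c_2}{O})\in S_{\isorep O}$, and then verify $\{S_P\}$ is an \acb-bisimulation: an abstract move of $P\rhd c_1\beta$ is pulled back, via the transfer correspondence with $\psi=\beta$, to a concrete move of $O'\rhd c_1$; $R_{O'}$ matches it by a concrete move of $O'\rhd c_2$ whose target pair lies in $R_{\delta(O',\beta^{-1}(K),e_a)}$; pushing that move forward along the transfer correspondence yields the matching abstract move of $P\rhd c_2\beta$, with target pair in $S_{\delta(P,K,a)}$ witnessed by $\gamma_\beta$. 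The ``if'' direction is dual: from $\{S_P\}={\acbisim}$ I set $R_O=\{(O\rhd c_1,O\rhd c_2)\mid \exists\,\beta\colon O\xrightarrow{\cong}P\text{ abstract},\ (P\rhd c_1\beta,P\rhd c_2\beta)\in S_P\}$ and run the same correspondence in the opposite order, using that by \autoref{lem:ctrans-iso} the fresh event of the matching concrete move of $O\rhd c_2$ may be taken equal to that of the move of $O\rhd c_1$, so that both concrete moves share the target poset $\delta(O,K,e_a)$, and then reading off the target pair in $R_{\delta(O,K,e_a)}$ again via $\gamma_\beta$.

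I expect the only real work to be the transfer correspondence, and within it the identification of $\norm{-}{O,K,e_a}$ with the canonical extension isomorphism $\gamma_{\alpha_O}$: this is precisely what forces the iso-closed definitions of $S$ and $R$ and what makes their target pairs fall back into the relations. Once it is available, the two bisimulation checks are routine diagram chases with \autoref{lem:ctrans-iso} and equivariance.
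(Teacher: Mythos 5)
Your proposal is correct, and it uses exactly the ingredients the paper itself relies on, but packaged differently. The paper factors the theorem into two auxiliary lemmas: equivariance of $\cbisim$ under poset isomorphisms (\autoref{lem:cbisim-iso}, proved with precisely your kind of iso-closed relation family plus \autoref{lem:ctrans-iso}), and the coincidence of $\cbisim$ and $\acbisim$ on \emph{abstract} P-markings (\autoref{lem:apm-cbisim-acbisim}, proved by reading the rule of \autoref{def:aclts} at a fixed abstract poset and correcting the continuation by the renaming isomorphism $\funext{old(O,K,e_a)}{new(O,K,e_a)}{e_a}$, i.e.\ your transfer correspondence specialised to $\psi=\mathrm{id}$). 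The theorem is then the composite: transport $O \rhd c_i$ to $\isorep{O} \rhd \norm{c_i}{O}$ inside $\cbisim$ along $\alpha_O$, then switch bisimilarity at the abstract poset. You instead fuse the two steps into a single concrete-to-abstract transfer correspondence along an arbitrary isomorphism onto an abstract poset --- which is \autoref{lem:ctrans-iso} combined with the paper's remark, stated right after that lemma, that corresponding concrete moves induce the same abstract move --- and then run both bisimulation games directly on iso-closed families; your identification of $\norm{-}{O,K,e_a}$ with the canonical extension of $\alpha_O$ is the key computation and is implicit in the paper's two proofs (your convention $\alpha_Q=\mathrm{id}$ on abstract posets is also implicitly used there). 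The paper's factorisation buys a reusable intermediate statement and bisimulation checks in which no mediating isomorphism has to be carried along in the abstract game; your route buys a single lemma from which both implications fall out without the detour through abstract P-markings viewed as concrete ones. Both arguments are sound, and the routine checks you defer (label and continuation bookkeeping in the converse direction of the transfer correspondence) are exactly the computations the paper carries out in the proofs of \autoref{lem:cbisim-iso} and \autoref{lem:apm-cbisim-acbisim}.
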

We list some closure properties, which will be important in the following. 
\begin{proposition}
\label{prop:aclts-clos}
Transitions of \aclts{} are \emph{preserved} and \emph{reflected} by order-embeddings $\sigma \colon O \to O'$, that is:
\begin{enumerate}[$(i)$]
	\item If $O \rhd c \actrans{K}{a} \delta(O,K,a) \rhd c'$ then $O' \rhd \dclos{(c \sigma)}{O'} \actrans{\sigma(K)}{a} \delta(O',\sigma(K),a) \rhd \dclos{(c'\ext{\sigma})}{\delta(O',\sigma(K),a)}$ (preservation);
	\item If $O' \rhd \dclos{(c \sigma)}{O'} \actrans{K'}{a} \delta(O',K',a) \rhd c'$ then there are $K$ and $c''$ such that $\sigma(K) = K'$, $\dclos{(c'' \ext{\sigma})}{\delta(O',K',a)} = c'$ and $O \rhd c \actrans{K}{a} \delta(O,K,a) \rhd c''$ (reflection).
\end{enumerate}
\end{proposition}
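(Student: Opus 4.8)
The proposition asserts that order-embeddings both preserve and reflect \aclts-transitions. I would prove the two directions essentially independently, but both reduce — via \autoref{def:aclts} — to facts about the \emph{concrete} rule in \autoref{def:clts}, since every abstract transition is the image of a concrete one. The first observation I would record is that an order-embedding $\sigma\colon O\to O'$ is injective (stated in the preliminaries), so it can be treated as an inclusion of $O$ as a \emph{subposet} of $O'$; moreover, because it reflects order, for any down-closed $K\subseteq|O|$ the image $\sigma(K)$ interacts well with maximal elements: $\sigma$ restricts to an iso between $K$ with the order of $O$ and $\sigma(K)$ with the order of $O'$, hence $\max_{O'}\sigma(K)=\sigma(\max_O K)$. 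This is the bookkeeping lemma that makes the label $K\caus a$ transform correctly.

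\textbf{Preservation $(i)$.} Suppose $O\rhd c\actrans{K}{a}\delta(O,K,a)\rhd c'$. By \autoref{def:aclts} this comes from a concrete transition $O\rhd d\ctrans{K}{e}{a}\delta(O,K,e_a)\rhd d'$ with $\norm{d}{O}=c$, generated by the rule in \autoref{def:clts}: there is $t\in T$ with $|c_t|=\pre t$, $a=l(t)$, $K=\max_O\causes(c_t)$ for the relevant sub-causal-marking $c_t\subseteq d$, and $d'$ replaces $c_t$ by $(\causes(c_t)\cup\{e_a\})\caus\post t$. Now push everything through $\sigma$: the causal marking $\dclos{(c\sigma)}{O'}$ still contains a copy of (the image of) $c_t$, closed under $O'$; the underlying marking is unchanged, so $t$ is still enabled; by the bookkeeping lemma the maximal causes in $O'$ of the closed image of $\causes(c_t)$ are exactly $\sigma(K)$; and the new event $new(O',\sigma(K),a)$ plays the role of $e_a$. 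Applying the concrete rule at $O'$ and then \autoref{def:aclts} yields $O'\rhd\dclos{(c\sigma)}{O'}\actrans{\sigma(K)}{a}\delta(O',\sigma(K),a)\rhd c''$, and one checks $c''=\dclos{(c'\ext\sigma)}{\delta(O',\sigma(K),a)}$ by unfolding the definition of $\ext\sigma$ (it sends old events to $\sigma$ of old events and the new event to the new event, which is exactly how $c'$ was built from $c$ and $e_a$). The only subtlety is the interaction of $\ext\sigma$ with the closure operator $\dclos{-}{-}$, i.e.\ that closing then applying $\ext\sigma$ agrees with applying $\sigma$ then closing in the larger poset; this follows because $\ext\sigma$ is again an order-embedding (it reflects order by construction) and embeddings commute with down-closure.

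\textbf{Reflection $(ii)$.} Here I start from $O'\rhd\dclos{(c\sigma)}{O'}\actrans{K'}{a}\delta(O',K',a)\rhd c'$, again unfolded via \autoref{def:aclts} to a concrete instance of the rule in \autoref{def:clts}. The enabling sub-causal-marking $c_t'$ has $|c_t'|=\pre t$ for some $t$; the key point is that the tokens of $c_t'$, being consumed, must already lie in $c\sigma$ (the closure $\dclos{-}{O'}$ only enlarges \emph{causes}, not the token multiset, and the part $c'$ of the marking not touched by $\sigma$ has causes down-closed in $O$, hence unaffected), so $c_t'$ is the $\sigma$-closure of some $c_t\subseteq c$ with $|c_t|=\pre t$. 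Then $K'=\max_{O'}\causes(c_t')=\max_{O'}\dclos{\sigma(\causes(c_t))}{O'}=\sigma(\max_O\causes(c_t))=\sigma(K)$ by the bookkeeping lemma, giving the required $K$. Taking the concrete transition $O\rhd c_t\cup(c\setminus c_t)$ fires via the rule to some $O\rhd c\actrans{K}{a}\delta(O,K,a)\rhd c''$, and the same unfolding-of-$\ext\sigma$ computation as in $(i)$ shows $\dclos{(c''\ext\sigma)}{\delta(O',K',a)}=c'$.

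\textbf{Main obstacle.} The genuinely delicate step, in both directions, is the interaction between the closure operator and the morphism extensions $\sigma\mapsto\ext\sigma$: one must verify that $\dclos{(c'\ext\sigma)}{\delta(O',\sigma(K),a)}$ is \emph{exactly} the continuation produced by the rule at $O'$, not merely isomorphic to it, and that the domain of definition matches up (the closure in $\delta(O',\sigma(K),a)$ may add, for events in $O'\setminus\sigma(O)$, causes that were not forced in $O$). The argument rests on the fact that $\ext\sigma$ is itself an order-embedding $\delta(O,K,a)\to\delta(O',\sigma(K),a)$ — which I would isolate as a small sublemma, proved directly from the case definition of $\ext\sigma$ together with $O\hookrightarrow\delta(O,K,a)$ being a prefix inclusion — and that for a prefix inclusion $j\colon P\hookrightarrow Q$ and an order-embedding $f\colon Q\to Q'$ one has $\dclos{(f(j(\cdot)))}{Q'}=f(\dclos{j(\cdot)}{Q})$ when all of $f$'s new causes already lie in the image. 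Once this commutation is in hand, everything else is the routine unfolding of \autoref{def:clts} and \autoref{def:aclts} sketched above.
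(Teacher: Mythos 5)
Your proof is correct and follows essentially the same route as the paper: the paper first establishes preservation/reflection of \emph{concrete} \clts{} transitions by order-embeddings (its \autoref{lem:clts-clos}, using exactly your observations that $\sigma$ leaves the underlying marking untouched and that order-reflection gives $\max_{O'}$ of the closed image equal to $\sigma(\max_O(\cdot))$), and then derives \autoref{prop:aclts-clos} by unfolding \autoref{def:aclts} and checking the closure/$\ext{\sigma}$ commutation via the identities $\dclos{(c\sigma)}{O'}=\dclos{c}{O}\sigma$ (for isomorphisms) and $\funext{\sigma}{e_a}{e_a}\circ\omega = \funext{old(O,K,e_a)}{new(O,K,e_a)}{e_a}\circ\ext{\sigma}$. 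The only difference is organizational: you inline the concrete-level argument where the paper factors it out as a separate lemma, and you correctly isolate the same delicate step (that $\ext{\sigma}$ is an order-embedding and that the closures in the larger poset match exactly).
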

The definition of preservation and reflection are quite involved, due to the presence of event generation and the need of applying the closure operator to compute proper continuations. We will see that the categorical counterparts of these properties will be remarkably simpler.

\begin{example}
\label{ex:tr-pres}
We motivate the requirement of order-reflection by showing that transitions of \aclts{} are not reflected by functions without such property. 

Consider the marked net of \autoref{fig:exnet}.
\begin{figure}[t]
\begin{center}
\begin{tikzpicture}[auto]    
    \node[place,label=$s_1$,tokens=1] (s1) {};
    \node[place,label=$r_1$,right= of s1] (r1) {};
    \node[place,label=$s_2$,below= of s1,tokens=1] (s2) {};
    \node[place,label=$r_2$,right= of s2] (r2) {};
    
    \coordinate (mid) at ($(r1)!0.5!(r2)$);
    \node[place,label=$s_3$,right= 2cm of mid] (s3) {};

    \node[transition] (t1) at ($(s1)!0.5!(r1)$) {$a$}
        edge [pre] node {} (s1)
        edge [post] node {} (r1);
    
    \node[transition] (t2) at ($(s2)!0.5!(r2)$) {$b$}
        edge [pre] node {} (s2)
        edge [post] node {} (r2);

    \node[transition,right= of mid] (t3) {$c$}
        edge [pre] node {} (r1)
        edge [pre] node {} (r2)
        edge [post] node {} (s3);

\end{tikzpicture}
\end{center}
\caption{Example net.}
\label{fig:exnet}
\end{figure}
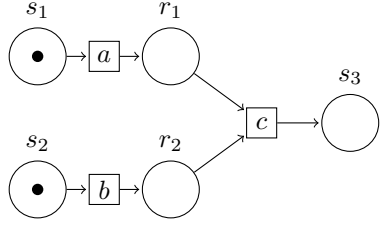
We can derive its \aclts{} as shown for the running example. In it, from the initial P-marking $\emptyset \rhd \{ \emptyset \caus s_1, \emptyset \caus s_2 \}$ we can reach the transition
\[
	\{e_a,e'_b\} \rhd \{ \{e_a\} \caus r_1,\{e'_b\} \caus r_2 \} \actrans{\{e_a,e'_b\}}{c} 
	\{e_a \lss e''_c,e'_b \lss e''_c \} \rhd
	\{ \{e_a,e'_b,e''_c\} \caus s_3 \}
\]
Consider the function $\sigma \colon \{e_a,e'_b\} \to \{e_a \lss e'_b\}$, mapping events to themselves. Clearly $\sigma$ does not reflect posets.
If we apply $\sigma$ and then $\dclos{}{\{e_a \lss e'_b\}}$ to the source P-marking we get
\[
    \{e_a \lss e'_b\} \rhd \{ \{e_a\} \caus r_1,\{e_a,e'_b\} \caus r_2 \}    
\]
but its $c$ transition is
\[
    \{e_a \lss e'_b\} \rhd \{ \{e_a\} \caus r_1,\{e_a,e'_b\} \caus r_2 \}  \actrans{\{e'_b\}}{c} \{e_a \lss e''_c \lss e'_b \} \rhd \{ \{e_a,e'_b,e''_c\} \caus s_3 \}  
\]
because only $e'_b$ is maximal. However, this transition cannot be obtained from the one of $\{e_a,e'_b\} \rhd \{ \{e_a\} \caus r_1,\{e'_b\} \caus r_2 \}$ via an application of $\sigma$.
\end{example}
The following theorem is a consequence of \autoref{prop:aclts-clos}.
\begin{theorem}
\label{thm:acbisim-clos}
$\acbisim$ is closed under order-embeddings. Explicitly: for all order-embeddings $\sigma \colon O \to O'$, we have $O \rhd c \acbisim O \rhd c'$ if and only if $O' \rhd \dclos{(c \sigma)}{O'} \acbisim O' \rhd \dclos{(c' \sigma)}{O'}$.
\end{theorem}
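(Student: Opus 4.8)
The plan is to deduce both implications from \autoref{prop:aclts-clos}, using the preservation and reflection of \aclts{} transitions along order-embeddings to transport abstract causal bisimulations back and forth, much as one shows that strong bisimilarity is closed under a transition-compatible operation on states. A preliminary fact is needed: \emph{if $\sigma \colon O \to O'$ is an order-embedding, so is its extension $\ext{\sigma} \colon \delta(O,K,a) \to \delta(O',\sigma(K),a)$}, for all $K \subseteq |O|$ and $a \in Act$. Order-preservation is immediate from the definition of $\ext{\sigma}$ together with order-preservation of $\sigma$ and of the embeddings $old(O,K,a)$, $old(O',\sigma(K),a)$; order-reflection follows by a short case split according to whether the two compared elements are the fresh event $new(O,K,a)$ or lie in the image of $old(O,K,a)$, using that $\sigma$ reflects order and that, by $\delta(O,K,x) = (O \cup (K\times\{x\}))^*$, the fresh event sits exactly above the $O$-downward closure of $K$. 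This lets us iterate \autoref{prop:aclts-clos} along transitions.

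For the ``only if'' implication, assume $O \rhd c \acbisim O \rhd c'$, and define a family $\{R_P\}$, indexed by abstract posets, by declaring $(P \rhd \dclos{(d\tau)}{P},\, P \rhd \dclos{(d'\tau)}{P}) \in R_P$ whenever $\tau \colon Q \to P$ is an order-embedding and $Q \rhd d \acbisim Q \rhd d'$. The indexing condition holds since both components sit over $P$. For the transfer condition, take a transition $P \rhd \dclos{(d\tau)}{P} \actrans{K'}{a} \delta(P,K',a) \rhd e$. By reflection (\autoref{prop:aclts-clos}$(ii)$) there are $K$ and $d''$ with $\tau(K) = K'$, $\dclos{(d''\ext{\tau})}{\delta(P,K',a)} = e$, and $Q \rhd d \actrans{K}{a} \delta(Q,K,a) \rhd d''$. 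Since $Q \rhd d \acbisim Q \rhd d'$, there is $d'''$ with $Q \rhd d' \actrans{K}{a} \delta(Q,K,a) \rhd d'''$ and $\delta(Q,K,a) \rhd d'' \acbisim \delta(Q,K,a) \rhd d'''$. Applying preservation (\autoref{prop:aclts-clos}$(i)$) to the order-embedding $\ext{\tau} \colon \delta(Q,K,a) \to \delta(P,K',a)$ yields $P \rhd \dclos{(d'\tau)}{P} \actrans{K'}{a} \delta(P,K',a) \rhd \dclos{(d'''\ext{\tau})}{\delta(P,K',a)}$, a transition with the same label and target poset whose target is $R_{\delta(P,K',a)}$-related to $\delta(P,K',a) \rhd e$ by construction (via the order-embedding $\ext{\tau}$, using $\delta(Q,K,a) \rhd d'' \acbisim \delta(Q,K,a) \rhd d'''$). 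The symmetric case is identical, so $\{R_P\}$ is an abstract causal bisimulation and therefore contained in $\acbisim$; choosing $\tau = \sigma$ and $(Q \rhd d, Q \rhd d') = (O \rhd c, O \rhd c')$ gives the conclusion.

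For the ``if'' implication, assume $O' \rhd \dclos{(c\sigma)}{O'} \acbisim O' \rhd \dclos{(c'\sigma)}{O'}$, and dually declare $(Q \rhd d, Q \rhd d') \in R_Q$ whenever some order-embedding $\tau \colon Q \to P$ satisfies $P \rhd \dclos{(d\tau)}{P} \acbisim P \rhd \dclos{(d'\tau)}{P}$. Given $Q \rhd d \actrans{K}{a} \delta(Q,K,a) \rhd d''$, preservation pushes it to $P \rhd \dclos{(d\tau)}{P} \actrans{\tau(K)}{a} \delta(P,\tau(K),a) \rhd \dclos{(d''\ext{\tau})}{\delta(P,\tau(K),a)}$; the hypothesis supplies a matching transition of $P \rhd \dclos{(d'\tau)}{P}$ with the same label and target poset, and reflection turns it into $Q \rhd d' \actrans{K}{a} \delta(Q,K,a) \rhd d'''$ (the label is forced to be $K$ since $\tau$ is injective) with the two targets still bisimilar over $\delta(P,\tau(K),a)$, i.e. $(\delta(Q,K,a) \rhd d'', \delta(Q,K,a) \rhd d''') \in R_{\delta(Q,K,a)}$ via $\ext{\tau}$. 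Again the symmetric case is identical, so $\{R_Q\}$ is contained in ${\acbisim}$, and $\tau = \sigma$ gives $O \rhd c \acbisim O \rhd c'$. I expect the only real work to be bookkeeping: verifying the preliminary fact about $\ext{\sigma}$, and checking that the P-markings delivered by \autoref{prop:aclts-clos} are literally the ones needed to close the two families under transitions — in particular that applying $\dclos{-}{-}$ and extending along $\ext{\tau}$ commute as the proposition asserts. This is precisely the fiddly algebra of event generation that the later categorical reformulation is designed to make transparent.
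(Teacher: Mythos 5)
Your proof is correct and follows exactly the route the paper intends: the paper gives no explicit proof of \autoref{thm:acbisim-clos}, stating only that it is a consequence of \autoref{prop:aclts-clos}, and your argument is a faithful elaboration of that — building the two candidate bisimulation families from order-embeddings and closing them under transitions via preservation and reflection, with the needed auxiliary fact that $\ext{\sigma}$ is again an order-embedding (which the paper records in \autoref{lem:o-delta-diag}). The only cosmetic slip is the phrase ``applying preservation to the order-embedding $\ext{\tau}$'': preservation is applied with $\tau$ to the matching transition of $Q \rhd d'$, and $\ext{\tau}$ merely appears in the resulting continuation, as your displayed formula correctly shows.
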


\subsection{Immediate causes CG}

We now introduce a further refinement of \aclts{}, called \emph{immediate causes CG} (\iclts): we keep only \emph{immediate causes}, i.e., causes that are maximal w.r.t.\ at least one of the tokens, and we identify isomorphic states. Immediate causes of a causal marking w.r.t.\ a poset $O$ are given by
\[
	ic_O(K \caus s) = max_O(K) \qqquad
	ic_O(c_1 \cup c_2) = ic_O(c_1) \cup ic_O(c_2) 
\]
We define isomorphism of P-markings as follows: $O \rhd c \cong O' \rhd c'$ if and only if there is an isomorphism $\sigma \colon O \to O'$ such that $c \sigma = c'$. We denote by $\isorep{O \rhd c}$ a chosen representative for the isomorphism class of $O \rhd c$.
\begin{definition}[Minimal P-marking]
\label{def:min-pmark}
A \emph{minimal} P-marking $O \RHD c$ is an abstract P-marking such that:
\begin{itemize}
    \item $|O| = \causes(c)$;
    \item for each $K \caus s \in c$, $K \subseteq ic_O(c)$;
    \item it is a canonical isomorphism representative, i.e., $O \RHD c = \isorep{O \rhd c}$.
\end{itemize}
\end{definition}
Consider an abstract P-marking $O \rhd c$. In order to compute the corresponding minimal P-marking $\minp{O \rhd c}$, we first take immediate causes for each token. Then, since the resulting P-marking may not be abstract, we take its canonical isomorphism representative. Formally, let $\imm{O}$ be $O$ restricted to ${ic_O(c)}$, then
\[
	\minp{O \rhd c} = [\imm{O} \rhd norm_{\imm{O}}(c)]_{\cong}
\]
where $norm_O(K \caus s) =  K \cap |\imm{O}| \caus s$ and has an element-wise action on sets. We denote by $\minpfun{O \rhd c}$ the map $\isorep{\imm{O}} \to O$ obtained by composing a chosen isomorphism $\isorep{\imm{O}} \to \imm{O}$ and the embedding $\imm{O} \subto O$. 
\begin{definition}[Immediate causes CG]
\label{def:iclts}
The \emph{immediate causes CG} (\iclts) is the smallest CG generated by the following rule
\[
	\frac
	{
		O \rhd c \actrans{K}{a} O' \rhd c' 
	}
	{
		O \RHD c \ictrans{K}{a}{\minpfun{O' \rhd c'}}
		\minp{O' \rhd c'}
	}
\]
\end{definition}
This rule relies on the fact that minimal P-markings are also ordinary ones, so it takes the transition in \aclts{} from a minimal P-marking, replaces the continuation $O' \rhd c'$ with its minimal version $\minp{O' \rhd c'}$ and, in order to keep track of the original identity of events, equips the transition with a \emph{history map} $\minpfun{O' \rhd c'}$, mapping canonical events to the original ones. In particular, the one with image $new(O',K,a)$ is the fresh event generated by the original transition. 

The \iclts{} has a finite state-space in many cases. We give a sufficient condition on the net from which the \iclts{} is generated.

\begin{proposition}
Given a net $N$ with initial marking $m_0$, if $\fires{m_0}$ is finite then the corresponding \iclts, reachable from $\emptyset \RHD \emptyset \caus m_0$, has a finite state-space.
\label{prop:reach-size}
\end{proposition}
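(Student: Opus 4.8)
The plan is to bound the state space of the \iclts{} reachable from $\emptyset \RHD \emptyset \caus m_0$ by a finite set, exploiting the fact that a minimal P-marking $O \RHD c$ is completely determined, up to isomorphism, by (i) the underlying marking $|c| \subseteq \fires{m_0}$, (ii) the ``shape'' of the poset $\imm{O}$ together with how its elements are distributed among the tokens of $c$ as immediate causes. Since $\fires{m_0}$ is finite, the first datum ranges over a finite set. The crux is to show that the second datum also ranges over a finite set; this is where the restriction to \emph{immediate} causes does the work, since in the \aclts{} the posets grow without bound, but after applying $ic_O$ only the maximal causes of each token survive.

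First I would establish a quantitative bound on minimal P-markings. Fix $n = \max\{|m| : m \in \fires{m_0}\}$, which is finite. For a minimal P-marking $O \RHD c$ with $|c| = \{s_1,\dots,s_k\}$ ($k \le n$) and $c = \{K_1 \caus s_1,\dots,K_k \caus s_k\}$, each $K_i$ is a down-closed subset of $O$ and, by the second clause of \autoref{def:min-pmark}, $K_i \subseteq ic_O(c)$. Moreover $|O| = \causes(c) = K_1 \cup \dots \cup K_k$ by the first clause. The key observation is that $ic_O(c) = \bigcup_i \max_O(K_i)$, so every element of $O$ lies below some $\max_O(K_i)$; hence the set $M = ic_O(c)$ of maximal-per-token causes has at most $k \le n$ elements (one maximal element per token $K_i$), and every element of $O$ is $\le_O$ some element of $M$. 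I would then argue that in fact $|O|$ itself is bounded: since $K_i \subseteq M$ for every $i$, we get $|O| = \bigcup_i K_i \subseteq M$, so $|O| = M$ has at most $n$ elements. Thus $O$ is a poset on at most $n$ events, each event carries one of finitely many labels from $\mathit{Act}$ that actually appear — more carefully, one restricts to the (finite) set of labels $l(T)$ occurring in the net — and $c$ assigns to each of the $\le n$ tokens a subset of the $\le n$ events. Up to isomorphism there are only finitely many such structures, so the set of minimal P-markings reachable in principle is finite.

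The remaining step is to confirm that all states appearing in the reachable \iclts{} are of this bounded form. The initial state $\emptyset \RHD \emptyset \caus m_0$ is trivially minimal, and by \autoref{def:iclts} every successor of a state $O \RHD c$ is $\minp{O' \rhd c'}$ for some \aclts-transition $O \rhd c \actrans{K}{a} O' \rhd c'$; by construction $\minp{O' \rhd c'}$ is a minimal P-marking, and its underlying marking is obtained from $|c|$ by one firing, hence still lies in $\fires{m_0}$. So by induction every reachable state is a minimal P-marking whose underlying marking is in $\fires{m_0}$, and we have just shown there are only finitely many of those.

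\textbf{Main obstacle.} I expect the delicate point to be the bound $|O| = ic_O(c)$: a priori a minimal P-marking only requires $K_i \subseteq ic_O(c)$ \emph{and} $|O| = \causes(c)$, and one must check these two clauses force $|O|$ to consist solely of maximal-per-token elements, i.e.\ that $\minp{-}$ really discards every non-immediate cause (the operator $norm_{\imm O}$ intersecting each $K_i$ with $|\imm O|$ is what guarantees this, but one should verify that after this intersection the first clause $|O| = \causes(c)$ is re-established, so that no ``internal'' comparabilities can survive). Once this is pinned down, the counting argument and the induction on reachability are routine.
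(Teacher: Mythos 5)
Your overall strategy matches the paper's: bound the number of events in each reachable minimal P-marking by the number of tokens, then observe that there are only finitely many abstract labeled posets of bounded size and finitely many ways to distribute their elements as causes over a marking drawn from the finite set $\fires{m_0}$. However, there is a genuine gap at the one step that carries all the weight: the parenthetical claim that $ic_O(c)=\bigcup_i\max_O(K_i)$ has at most $k$ elements because there is ``one maximal element per token $K_i$''. Nothing in \autoref{def:min-pmark} forces $\max_O(K_i)$ to be a singleton: a single token $s$ with causes $K=\{e^1_a,\dots,e^m_a\}$ pairwise incomparable yields a perfectly legitimate minimal P-marking for every $m$, so the class you actually count in your final induction --- ``minimal P-markings whose underlying marking lies in $\fires{m_0}$'' --- is infinite. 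Your closing induction only tracks minimality and $|c|\in\fires{m_0}$, which is therefore not enough to conclude finiteness.

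What is missing is the reachability invariant that the paper's proof leans on when it says each token ``has been created by at most $|c|$ transitions'': in the concrete CG a token produced by a firing receives causes $\causes(c_1)\cup\{e_a\}$, where $e_a$ dominates $\max_O\causes(c_1)$ and hence all of $\causes(c_1)$, so its cause-set has a \emph{unique} maximum, namely the event of the firing that created it (and tokens of $m_0$ have empty causes). One must then check that this per-token property survives $\minp{-}$ --- it does, since the maximal elements of each $K_i$ all lie in $ic_O(c)$ and the restriction performed by $norm_{\imm{O}}$ creates no new maxima --- and is preserved along \iclts{} transitions. With that invariant in hand, $|O|=ic_O(c)\le|c|\le\max\{|m| : m\in\fires{m_0}\}$ and your counting argument closes. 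Incidentally, the difficulty you flag as the main obstacle (re-establishing $|O|=\causes(c)$ after applying $norm_{\imm{O}}$) is the easy part: it follows directly from $\bigcup_iK_i=\causes(c)=|O|$ together with $K_i\subseteq ic_O(c)$, exactly as you compute; the real obstacle is the singleton-maximum claim above.
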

\begin{example}
\label{ex:running-iclts}
In order to derive a \iclts{} for the running example, we take the P-markings of \autoref{fig:runex-cgc} and we compute their minimal versions. For instance, we have
\[
    \xymatrix{
    \{e_b \lss e'_b\} \rhd \{\{e_b , e'_b\} \caus s_1,\{e_b,e_b'\} \caus s_2 \}
    \ar[d]^{\text{immediate causes}}
    \\
    \{e'_b\} \rhd \{\{e'_b\} \caus s_1, \{e'_b\} \caus s_2 
    \}
    \ar[d]^{\text{canonical representative}}
    \\
    \{e_b\} \RHD \{\{e_b\} \caus s_1,\{e_b\} \caus s_2 
    \}    
    }
\]
because we assumed that $\{e_b\}$ is an abstract poset.
Notice that the resulting P-marking is already in \autoref{fig:runex-cgc}. This is a crucial fact: minimization identifies many states and in some cases it even produces a finite state-space, as stated in \autoref{prop:reach-size}.
This is indeed the case for the running example.

\autoref{fig:runex-iccg} shows the part of the running example's \iclts{} that is reachable from  $\{e_b\} \RHD \{\{e_b\} \caus s_1,\{e_b\} \caus s_2\}$. Most history maps are irrelevant, so they are omitted. Notice that in the \aclts{}, from this P-marking, there are infinitely many transitions with action $b$. These all become a single loop over the same P-marking in the \iclts; the associated history map $h_1$ tells that $e_b$, after the transition, represents the most recent event, and that the previous event is discarded. Analogously for the two loops over $\{e_a,e'_a\} \RHD \{\{e_a\} \caus s_1 \{e'_a\} \caus s_2\}$. The interesting fact to notice is that our definition of $h_2$ and $h_3$ is not the only possible one. For instance, we could exchange the images of $e_a$ and $e'_a$ in the definition of $h_2$. This is due to the fact that $\{e_a,e'_a\}$ has an automorphism that swaps $e_a$ and $e'_a$. 
\begin{figure}
\centering
\[
\xymatrix@R+3ex@C=0ex{
&&
\{e_b \lss e'_a\} \RHD \{\{e_b,e'_a\}\caus s_1,\{e_b\} \caus s_2 \}
\icar[dr]^-{\{e_b\} \caus a}
\icar@/^1pc/[dl]_-{\{e_b\} \caus b}
\icar@(ul,ur)^{\{e'_a\} \caus a}
\\
& 
\{e_b\} \RHD \{\{e_b\}\caus s_1,\{e_b\} \caus s_2 \}
\icar@/^1.5pc/[ur]^-{\{e_b\} \caus a}
\save 
!<-20pt,0pt> \icar@(ul,ur)^{\{e_b\} \caus b}_{h_1} 
\restore
\icar@/_1.5pc/[dr]^-{\{e_b\} \caus a}
&&
\{e_a , e'_a\} \RHD \{\{e_a\} \caus s_1,\{e'_a\} \caus s_2\}
\icar[ll]_-{\{e_a,e_a'\} \caus b}
\icar@(ul,ur)^{\{e_a\} \caus a}_{h_2}
\icar@(dl,dr)_{\{e'_a\} \caus a}^{h_3}
\\
&&
\{e_b \lss e'_a\} \RHD \{\{e_b\}\caus s_1,\{e_b,e'_a\} \caus s_2  \}
\icar@(dl,dr)_{\{e'_a\} \caus a}
\icar[ur]^{\{e_b\} \caus a}
\icar@/_1pc/[ul]_-{\{e'_a\} \caus b}
}
\]

\begin{align*}
h_1 \colon \{e_b\} &\to \{ e_b \lss e'_b \} 
&
h_2 \colon \{e_a,e'_a\} &\to \{ e_a \lss e''_a , e'_a \}
&
h_3 \colon \{e_a,e'_a\} &\to \{ e_a , e'_a \lss e''_a \}
\\
e_b &\mapsto e'_b
&
e_a &\mapsto e''_a
&
e_a &\mapsto e_a
\\
&&
e'_a &\mapsto e'_a
&
e'_a &\mapsto e''_a
\end{align*}

\caption{\iclts{} for the running example.}
\label{fig:runex-iccg}
\end{figure}
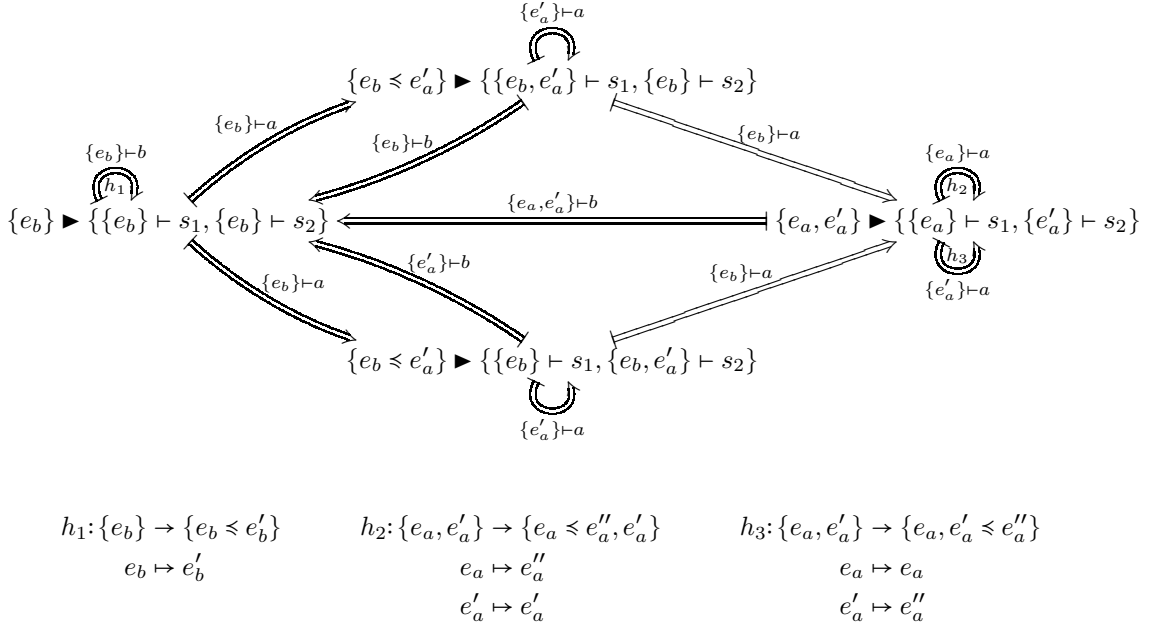
\end{example}
\begin{remark}
The generation of the \iclts{} from a net has been performed in two steps for the sake of clarity, but we can easily imagine an algorithm that performs it in a single step and incrementally. Given any P-marking, this is turned into a minimal one by taking immediate causes and then its canonical representative. Then outgoing transitions are computed from this P-marking, and the algorithm is applied to their continuations. Notice that minimizing a P-marking may yield a previously computed one: in this case the algorithm is not reapplied on that P-marking.
\end{remark}
The notion of bisimilarity for \iclts{} is more involved: 
while, given two P-markings, we may find a common poset for them (if any), which enables them to be compared w.r.t.\ $\acbisim$, this is not always possible for posets of minimal P-markings. In other words, events in ordinary P-markings have a \emph{global} identity, while those in minimal P-markings have a \emph{local} identity. Therefore, we need to introduce an explicit correspondence between them. This correspondence can be a partial function, because some events may not be observable.
\begin{definition}[Immediate causes bisimilarity]
\label{def:icbisim}
An \emph{immediate causes bisimulation} $R$ (\icb-bisimulation in short) is a ternary relation such that, whenever $(O_1 \RHD c_1,\sigma,O_2 \RHD c_2) \in R$:
\begin{itemize} 
    \item $\sigma$ is a partial isomorphism (i.e., an isomorphism between subposets) from $O_1$ to $O_2$;
	\item if $O_1 \RHD c_1 \ictrans{K}{a}{h_1} O_1' \RHD c_1'$ then $\sigma$ is defined on $K$, and there are $O_2 \RHD c_2 \ictrans{\sigma(K)}{a}{h_2} O_2' \RHD c_2'$ and $\sigma'$ such that $(O_1' \RHD c_1',\sigma',O_2' \RHD c_2') \in R$ and the following diagram commutes
\[
    \xymatrix{
        O_1' \ar[r]^-{h_1} \ar[d]_{\sigma'} & \delta(O_1,K,a) \ar[d]^{\ext{\sigma}} \\
        O_2' \ar[r]_-{h_2} & \delta(O_2,\sigma(K),a) \\
    }
\]
	\item if $O_2 \RHD c_2 \ictrans{K}{a}{h_2} O_2' \RHD c_2'$ then $\sigma$ is defined on $K$, and there are $O_1 \RHD c_1 \ictrans{\sigma(K)^{-1}}{a}{h_1} O_1' \RHD c_1'$ and $\sigma'$ as in the previous item.
\end{itemize}
The greatest such bisimulation is denoted $\icbisim$. We write $O_1 \RHD c_1 \icbisim^\sigma O_2 \RHD c_2$ to mean 
\[
	(O_1 \RHD c_1,\sigma, O_2 \RHD c_2) \in \icbisim.
\]
\end{definition}
The commuting diagram essentially says that $\sigma'$ should never map old events to new ones (or viceversa). More precisely, given $x \in |O_1'|$, we have two cases:
\begin{itemize}
    \item $h_1(x) = new(O_1,K,a)$, then, by definition, $h_1(x)$ is mapped by $\ext{\sigma}$ to $new(O_2,\sigma(K),a)$, so $\sigma'(x) = y$ such that $h_2(y) = new(O_2,\sigma(K),a)$;
    \item $h_1(x) = old(O_1,K,a)(x')$, for some $x'$, then $h_1(x)$ is mapped by $\ext{\sigma}$ to $old(O_2,\sigma(K),a)(\sigma(x'))$, so $\sigma'(x) = y$ such that $h_2(y) = old(O_2,\sigma(K),a)(\sigma(x'))$.
\end{itemize}
We have the following correspondence between $\icbisim$ and $\acbisim$.
\begin{theorem}
\label{thm:corr-ic-ac}
$\icbisim$ is fully abstract w.r.t.\ $\acbisim$ in the following sense:
\begin{enumerate}[$(i)$]
	\item If $O \rhd c_1 \acbisim O \rhd c_2$ then $\minp{O \rhd c_1} \icbisim \minp{O \rhd c_2}$;
	\label{ac2ic}
	\item If $O_1 \RHD c_1 \icbisim^\sigma O_2 \RHD c_2$ then for all $O \rhd \hat{c}_1$ and $O \rhd \hat{c}_2$ such that:
	\begin{enumerate}
		\item $\minp{O \rhd \hat{c}_1} = O_1 \RHD c_1$ and $\minp{O \rhd \hat{c}_2} = O_2 \RHD c_2$;
		\item $\minpfun{O \rhd \hat{c}_1}|_{dom(\sigma)} = \minpfun{O \rhd \hat{c}_2} \circ \sigma$;
	\end{enumerate}
	we have $O \rhd \hat{c}_1 \acbisim O \rhd \hat{c}_2$.
	\label{ic2ac}
\end{enumerate}
\end{theorem}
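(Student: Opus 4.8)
The plan is to prove both implications by exhibiting explicit witnessing relations built from the given bisimulation: for part $(i)$ one minimizes every related pair, and for part $(ii)$ one collects all admissible ``un‑minimizations''. The argument rests on three routine auxiliary facts. (L1): for an abstract P-marking $O \rhd c$, writing $P \RHD d$ for $\minp{O \rhd c}$, the history map $\minpfun{O \rhd c} \colon P \to O$ is an order-embedding with image $ic_O(c)$, and $c$ is recovered as $c = \dclos{(d\,\minpfun{O \rhd c})}{O}$, because every set of causes $K$ in a P-marking is down-closed and hence equals $\dclos{\max_O K}{O}$. (L2): minimization is natural with respect to order-embeddings, i.e.\ for an order-embedding $\tau \colon O \to O'$ one has $\minp{O' \rhd \dclos{(c \tau)}{O'}} = \minp{O \rhd c}$ and, for a coherent choice of the isomorphism representatives underlying $\minpfun{-}$, $\minpfun{O' \rhd \dclos{(c \tau)}{O'}} = \tau \circ \minpfun{O \rhd c}$; since $\ext{(-)}$ preserves order-embeddings and commutes with immediate causes, the same applies to the continuation P-markings of \aclts-transitions, with $\ext{\tau}$ in place of $\tau$. (L3): the label $K$ of any \aclts-transition out of $O \rhd c$ satisfies $K \subseteq ic_O(c)$, since an event maximal in $\causes(c'')$ for the consumed submarking $c''$ is maximal in the causes of whichever token of $c''$ contains it.

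For part $(i)$, let $R = \{R_O\}$ be an \acb-bisimulation, set $g_i := \minpfun{O \rhd c_i}$, and let $R'$ consist of all triples $(\minp{O \rhd c_1},\, g_2^{-1} \circ g_1,\, \minp{O \rhd c_2})$ with $(O \rhd c_1, O \rhd c_2) \in R_O$; by (L1) the middle component is a partial isomorphism. To check the transfer condition of \autoref{def:icbisim}, one takes a transition $\minp{O \rhd c_1} \ictrans{K}{a}{h_1} m_1'$, which by \autoref{def:iclts} arises from an \aclts-transition $\minp{O \rhd c_1} \actrans{K}{a} P_1 \rhd c_1^\star$ with $m_1' = \minp{P_1 \rhd c_1^\star}$ and $h_1 = \minpfun{P_1 \rhd c_1^\star}$. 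By (L1) and \autoref{prop:aclts-clos}$(i)$ this lifts along $g_1$ to $O \rhd c_1 \actrans{g_1(K)}{a} O'' \rhd c_1''$ with $O'' = \delta(O, g_1(K), a)$; matching it in $R$ gives $O \rhd c_2 \actrans{g_1(K)}{a} O'' \rhd c_2''$ with the continuations $R_{O''}$-related. By (L3), $g_1(K) \subseteq ic_O(c_2)$, the image of $g_2$, so $\sigma := g_2^{-1} \circ g_1$ is defined on $K$, and \autoref{prop:aclts-clos}$(ii)$ brings the $c_2$-transition down along $g_2$ to $\minp{O \rhd c_2} \actrans{\sigma(K)}{a} P_2 \rhd c_2^\star$, hence, by \autoref{def:iclts}, to $\minp{O \rhd c_2} \ictrans{\sigma(K)}{a}{h_2} m_2'$ with $h_2 = \minpfun{P_2 \rhd c_2^\star}$. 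By (L2), $m_1' = \minp{O'' \rhd c_1''}$ and $m_2' = \minp{O'' \rhd c_2''}$, so $(m_1',\, (g_2')^{-1} \circ g_1',\, m_2') \in R'$ with $g_i' := \minpfun{O'' \rhd c_i''}$; and the square of \autoref{def:icbisim} commutes, because composing both sides with the injective map $\ext{g_2}$ and using (L2) in the form $\ext{g_i} \circ h_i = g_i'$, functoriality of $\ext{(-)}$, and $g_2 \circ \sigma = g_1$ reduces both composites to $g_1'$. The symmetric clause is identical, so $R'$ is an \icb-bisimulation and $\minp{O \rhd c_1} \icbisim \minp{O \rhd c_2}$.

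For part $(ii)$, let $R$ be an \icb-bisimulation and, for each abstract poset $O$, let $R''_O$ be the set of pairs $(O \rhd \hat c_1, O \rhd \hat c_2)$ for which there is $\sigma$ with $(\minp{O \rhd \hat c_1}, \sigma, \minp{O \rhd \hat c_2}) \in R$ and $\minpfun{O \rhd \hat c_1}|_{dom(\sigma)} = \minpfun{O \rhd \hat c_2} \circ \sigma$; by hypothesis the given pair lies in $R''_O$, so it suffices to show $R'' = \{R''_O\}$ is an \acb-bisimulation. Write $g_i := \minpfun{O \rhd \hat c_i}$. Given $O \rhd \hat c_1 \actrans{\hat K}{a} O'' \rhd \hat c_1'$ with $O'' = \delta(O, \hat K, a)$, by (L1), (L3) and \autoref{prop:aclts-clos}$(ii)$ it reflects along $g_1$ to $\minp{O \rhd \hat c_1} \actrans{K_1}{a} P_1 \rhd c_1^\star$ with $g_1(K_1) = \hat K$, hence to $\minp{O \rhd \hat c_1} \ictrans{K_1}{a}{h_1} \minp{P_1 \rhd c_1^\star}$. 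The \icb-bisimulation supplies $\minp{O \rhd \hat c_2} \ictrans{\sigma(K_1)}{a}{h_2} m_2'$ and a witness $\sigma'$ making the square of \autoref{def:icbisim} commute; unfolding \autoref{def:iclts}, this comes from $\minp{O \rhd \hat c_2} \actrans{\sigma(K_1)}{a} P_2 \rhd c_2^\star$, which by \autoref{prop:aclts-clos}$(i)$ lifts along $g_2$ to $O \rhd \hat c_2 \actrans{g_2(\sigma(K_1))}{a} O''' \rhd \hat c_2'$. The hypothesis $g_1|_{dom(\sigma)} = g_2 \circ \sigma$ gives $g_2(\sigma(K_1)) = g_1(K_1) = \hat K$, so the two \aclts-transitions carry the same label and $O''' = O''$; this is the required matching transition. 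By (L2) the continuations minimize to $\minp{O'' \rhd \hat c_1'} = \minp{P_1 \rhd c_1^\star}$ and $\minp{O'' \rhd \hat c_2'} = \minp{P_2 \rhd c_2^\star} = m_2'$, related by $\sigma'$ in $R$, and the compatibility $\minpfun{O'' \rhd \hat c_1'}|_{dom(\sigma')} = \minpfun{O'' \rhd \hat c_2'} \circ \sigma'$ follows from the commuting square, (L2), and functoriality of $\ext{(-)}$, exactly as in part $(i)$. Thus $(O'' \rhd \hat c_1', O'' \rhd \hat c_2') \in R''_{O''}$, and the symmetric clause is analogous.

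The main obstacle is not conceptual but lies in the bookkeeping for the history maps. Establishing (L2) in the strong, on-the-nose form used above requires the isomorphism representatives implicit in $\minpfun{-}$ to be chosen coherently, so that $\ext{\tau}$ genuinely commutes with them rather than only up to isomorphism; and the two diagram chases must be done carefully, because $\sigma$ — and hence $\ext{\sigma}$ — is only a partial map, so one must confirm that the images of $h_1$ and $h_2$ land inside the sub-posets on which $\ext{g_2} \circ \ext{\sigma}$ and $\ext{g_1}$ actually coincide, which is precisely what the commuting square of \autoref{def:icbisim} guarantees. Once (L1)--(L3) and this coherence are in place, the rest is just keeping transition labels aligned and invoking \autoref{prop:aclts-clos} in its preservation and reflection forms.
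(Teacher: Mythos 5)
Your overall architecture — build the witnessing relations explicitly, recover the abstract P-marking from its minimal version via the embedding $\minpfun{O \rhd c}$ (your (L1), which is correct), and shuttle transitions up and down along that embedding with \autoref{prop:aclts-clos} — is sound, and it is in the spirit of the proof the paper actually invokes (the paper only cites the analogous theorem of the earlier Degano--Darondeau paper and says the argument carries over). The genuine gap is your (L2) in its ``on-the-nose'' form, $\minpfun{O' \rhd \dclos{(c\tau)}{O'}} = \tau \circ \minpfun{O \rhd c}$ and its consequence $\ext{g_i} \circ h_i = g_i'$, which you acknowledge needs a ``coherent'' choice of representatives but then treat as routine bookkeeping. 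No such coherent choice exists in general: whenever the canonical minimal poset has a nontrivial automorphism fixing the marking, the same continuation $O'' \rhd c''$ arises as the image of a minimal P-marking along two embeddings differing by that automorphism, and the two required equalities for $\minpfun{O'' \rhd c''}$ are contradictory. The paper's own running example exhibits exactly this obstruction: the poset $\{e_a,e'_a\}$ with the swap automorphism, and the remark after the \iclts{} figure that the history maps $h_2,h_3$ are \emph{not} canonical and could equally well be defined with the images of $e_a,e'_a$ exchanged. This ambiguity is precisely what later forces the introduction of symmetries in \icslts.

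Because your relations $R'$ and $R''$ pin the mediating partial isomorphism to the specific composite $(g_2')^{-1}\circ g_1'$ of the chosen $\minpfun{-}$ maps, the commuting square of \autoref{def:icbisim} then \emph{needs} the impossible strong (L2): without it, $g_i'$ and $\ext{g_i}\circ h_i$ differ by automorphisms $\theta_i$ of the minimal continuation that fix its marking, and these $\theta_i$ spoil both the square in part $(i)$ and the compatibility condition $(ii)(b)$ for the continuations in part $(ii)$. The repair is not mere bookkeeping: you must either enlarge the candidate relations so that the mediating maps can absorb such marking-preserving automorphisms (e.g.\ define $R'$ by the same compatibility condition as in statement $(ii)(b)$, closed under these automorphisms) and prove the additional lemma that a minimal P-marking is \icb-bisimilar to itself under any automorphism fixing its marking, or exploit the closure of $\acbisim$ under isomorphisms (\autoref{thm:acbisim-clos}, \autoref{lem:cbisim-iso}) to re-normalize the lifted transitions before matching them. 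As written, the transfer step of both parts does not go through.
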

Statement \ref{ac2ic} is self-explanatory. Statement \ref{ic2ac} says that if we have two equivalent minimal P-markings $O_1 \RHD c_1 \icbisim^\sigma O_2 \RHD c_2$ and we take any two P-markings $O \rhd \hat{c}_1$ and $O \rhd \hat{c}_2$ whose minimal versions are $O_1 \RHD c_1$ and $O_2 \RHD c_2$ respectively ((ii)(a)), these are equivalent provided that local events matched by $\sigma$ have the same global interpretation as events of $O$ ((ii)(b)).

\subsection{Immediate causes CG with symmetries}
\label{ssec:icslts}

The final step is to introduce \emph{symmetries} over states of CG. Given an abstract poset $O$, a \emph{symmetry over $O$} is a set $\Phi$ of automorphisms $O \to O$ (called just \emph{permutations} hereafter) such that $id \in \Phi$ and it is closed under composition. This section is an adaptation of the work in \cite{Pistore99,MontanariP05} on the set-theoretic version of HD-automata for the $\pi$-calculus.

We now motivate the introduction of symmetries. We say that two \iclts s are \emph{isomorphic} when there is a bijective correspondence $\omega$ between their P-markings and, for each P-marking $O \RHD c$ of the former such that $\omega(O \RHD c) = O' \RHD c'$, transitions from $O' \RHD c'$ can be obtained from those of $O \RHD c$ via an isomorphism. In the case of ordinary labeled transition systems (LTSs), one can compute minimal versions w.r.t. bisimilarity, where all bisimilar states have been identified. Bisimilar LTSs have isomorphic minimal versions, so we may use any of them as canonical representative of the class of bisimilar LTSs. This cannot be done for \iclts s, because of the following fact.
\begin{proposition}
There are minimal \iclts s that are $\icbisim$-bisimilar but not isomorphic.
\end{proposition}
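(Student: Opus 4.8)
The plan is to exhibit a concrete counterexample: a net whose two reachable minimal P-markings are $\icbisim$-bisimilar (via a nontrivial partial isomorphism $\sigma$) but cannot be related by an \emph{isomorphism} of P-markings in the sense of \autoref{def:min-pmark}, so that the induced minimal \iclts s are not isomorphic. The key point to exploit is the slack in the history maps $h_2,h_3$ already observed in \autoref{ex:running-iclts}: when a state's poset has a nontrivial automorphism, the reduction to \iclts{} must arbitrarily break the symmetry, and this choice is not canonical. So I would build a net in which two different, but bisimilar, behaviours force two ``copies'' of such a symmetric situation to sit in the reachable fragment, with the reduction forced to break the symmetry in incompatible ways.

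Concretely, I would take (a small variant of) the running example — e.g.\ a net with two independent $a$-transitions over places $s_1,s_2$, so that the state $\{e_a,e'_a\} \RHD \{\{e_a\}\caus s_1,\{e'_a\}\caus s_2\}$ of \autoref{fig:runex-iccg} occurs, whose poset $\{e_a,e'_a\}$ carries the swap automorphism — and pair it disjointly with a ``mirror'' copy whose reachable minimal P-marking is the \emph{same} abstract object $\{e_a,e'_a\} \RHD \{\{e_a\}\caus s_1,\{e'_a\}\caus s_2\}$ but reached along a transition whose history map uses the \emph{other} branch of the symmetry (the swapped version of $h_2$). Then: (1) the two minimal P-markings are literally equal as P-markings, hence trivially $\icbisim$-bisimilar with $\sigma$ the relevant partial isomorphism, by using \autoref{thm:corr-ic-ac}\ref{ac2ic} applied to their obvious common \aclts{} ancestors, or directly by coinduction using the commuting-diagram condition of \autoref{def:icbisim} together with the automorphism to repair the mismatch of history maps; (2) nevertheless the two \iclts s are not isomorphic, because an isomorphism $\omega$ would have to send transitions to transitions \emph{together with their history maps up to an isomorphism of posets}, and the two incoming transitions use history maps $h_2$ vs.\ its swap, which differ by the nontrivial automorphism of $\{e_a,e'_a\}$ that is \emph{not} an isomorphism of the P-markings (it does not fix $c$ — it exchanges the roles of $s_1$ and $s_2$). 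One then checks that no relabelling of events can reconcile the two, because the underlying markings pin down which local event is a cause of which token.

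The main obstacle, and the step deserving the most care, is part (2): making precise the claim that the two minimal \iclts s are ``genuinely'' non-isomorphic. This requires unwinding the ad hoc notion of isomorphism of \iclts s introduced just before the statement (``transitions from $\omega(O\RHD c)$ can be obtained from those of $O\RHD c$ via an isomorphism''), and arguing that the data recorded in a transition — in particular the history map $h$, which is an element of $\delta(O,K,a)$ mod nothing, i.e.\ a \emph{rigid} piece of information — cannot be matched after composing with a poset isomorphism, precisely because the P-marking isomorphisms available (those $\sigma\colon O\to O'$ with $c\sigma=c'$) are too few: the symmetric situation has only the trivial such isomorphism, while the ``repair'' needed is the swap. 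The $\icbisim$ side is comparatively routine, since $\icbisim$ is explicitly designed (via the partial isomorphism $\sigma$ and the commuting square) to absorb exactly this kind of symmetry-induced discrepancy, and \autoref{thm:corr-ic-ac} does the heavy lifting. Once the example is fixed, both verifications are finite diagram chases, so the proof is short modulo the bookkeeping of history maps.
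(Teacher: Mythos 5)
Your proposal is correct and takes essentially the same route as the paper, whose proof is the example immediately following the proposition: it exhibits the state $\{e_a,e'_a\} \RHD \{\{e_a\}\caus s_1,\{e'_a\}\caus s_2\}$ together with a copy whose looping transition carries the history map $h_3\circ\phi$ (with $\phi$ the swap of $e_a$ and $e'_a$), observes that the only available P-marking isomorphism is the identity (since $\phi$ does not fix the causal marking) so the two CGs are not isomorphic, yet they are $\icbisim^{\phi}$-related because the commuting-square condition absorbs the swap. Your variant with $s_1'=s_1$, $s_2'=s_2$ is explicitly covered by the paper's closing remark in that example.
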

\begin{example}
Consider the P-marking $\{e_a,e'_a\} \RHD \{\{e_a\} \caus s_1,\{e'_a\} \caus s_2 \}$ of \autoref{ex:running-iclts} and its looping transitions. Take another P-marking $\{e_a,e'_a\} \RHD \{\{e_a\} \caus s_1', \{e'_a\} \caus s_2' \}$ with the following transitions
\[
    \xymatrix{
        \{e_a,e'_a\} \RHD \{\{e_a\} \caus s_1', \{e'_a\} \caus s_2' \}
        \icar@(ul,ur)^{\{e_a\} \caus a}_{h_4}
        \icar@(dl,dr)_{\{e'_a\} \caus a}^{h_5}
    }
    \quad
    \begin{array}{rlrl}
         h_4 \colon \{e_a,e'_a\} \hspace{-1.5ex}&\mapsto \{e_a \lss e''_a,e'_a \} 
         & 
         \quad
         h_5 \colon \{e_a,e'_a\} \hspace{-1.5ex}&\mapsto \{e_a , e'_a \lss e''_a\}
         \\
         e_a &\mapsto e''_a
         &
         e_a &\mapsto e''_a
         \\
         e'_a &\mapsto e'_a
         &
         e'_a &\mapsto e_a
    \end{array}
\]
Notice that we have $h_4 = h_2$ and $h_5 = h_3 \circ \phi$, where $\phi$ switches $e_a$ and $e'_a$. 

Suppose we want to find a minimal realization of these CGs. They are not isomorphic, in the sense that there is no permutation on $\{e_a,e'_a\}$ that, applied to labels and composed with history maps, turns transitions of the former CG into those of the latter. However, we have
\[
    \{e_a,e'_a\} \RHD \{\{e_a\} \caus s_1,\{e'_a\} \caus s_2 \} \icbisim^{\phi} \{e_a,e'_a\} \RHD \{\{e_a\} \caus s_1', \{e'_a\} \caus s_2' \} \enspace ,
\]
so these states should be identified in some way. This way is provided by symmetries: minimal behavior, according to $\icsbisim$, is invariant under $\phi$, so we can identify those P-markings, provided that the resulting state is annotated with $\phi$ and possibly other permutations that fix the state.

The same argument applies when considering versions of the same \iclts{} that only differ for the choice of history maps: if $s_1' = s_1$ and $s_2' = s_2$ in the P-marking $\{e_a,e'_a\} \RHD \{\{e_a\} \caus s_1', \{e'_a\} \caus s_2' \}$ above, then the P-marking $\{e_a,e'_a\} \RHD \{\{e_a\} \caus s_1', \{e'_a\} \caus s_2' \}$ is bisimilar to itself under the permutation $\phi$. This has a practical consequence: when constructing the \iclts{} for a given net, one should not spend computational effort in computing the ``right'' history maps, because the choice of history maps does not affect bisimilarity and thus minimal models.
\end{example}

\begin{definition}[Minimal P-marking with symmetry]
A \emph{minimal P-marking with symmetry} is a triple $\psym{O}{\Phi}{c}$, where $O \RHD c$ is a minimal P-marking and $\Phi$ is a symmetry over $O$ such that $c \phi = c$, for all $\phi \in \Phi$.
\end{definition}
Symmetries allow us to remove some transitions from \iclts: we can only take one representative transition among all the \emph{symmetric} ones, i.e., those whose observable causes and history maps only differ for some permutations in the symmetries of source and target states. 
\begin{definition}[Symmetric transitions]
Given $\psym{O}{\Phi}{c}$, $\psym{O'}{\Phi'}{c'}$ and two transitions
\[
    O \RHD c \ictrans{K_1}{a}{h_1} O' \RHD c'
    \qquad 
    O \RHD c \ictrans{K_2}{a}{h_2} O' \RHD c'
\]
they are \emph{symmetric} if and only if there are $\phi \in \Phi$ and $\phi' \in \Phi'$ such that $K_2 = \phi(K_1)$ and the following diagram commutes
\[
    \xymatrix{
        O' \ar[r]^-{h_1} \ar[d]_{\phi'} & \delta(O,K_1,a)\ar[d]^{\ext{\phi}} \\
        O' \ar[r]_-{h_2} & \delta(O,K_2,a)
    }  
\]
We write $\symrep{K}$ and $\symrep{h}$ for a canonical choice of $K$ and $h$ among those of all the symmetric transitions. Actually $\symrep{-}$ depends on the considered symmetries $\Phi$ and $\Phi'$, but they are omitted to simplify notation: they will always be clear from the context.
\label{def:sym-tran}
\end{definition}
\begin{definition}[\icslts]
The \emph{\iclts{} with symmetries} (\icslts) is the smallest CG generated by the following rule
\[
    \frac
    {
        O \RHD c \ictrans{K}{a}{h} O' \RHD c'
    }
    {
        \psym{O}{\Phi}{c} \icstrans{\symrep{K}}{a}{\symrep{h}} \psym{O'}{\Phi'}{c'}
    }
\]
\label{def:icslts}
\end{definition}
The notion of bisimulation is analogous to \icb-bisimulation. However, P-markings are required to simulate each other only up to symmetries. More specifically, when comparing $\psym{O_1}{\Phi_1}{c_1}$ and $\psym{O_2}{\Phi_2}{c_2}$ under a mediating map $\sigma$, for each permutation in $\Phi_1$ and each transition of the first P-marking, we have to find a permutation in $\Phi_2$ and a transition of the second P-marking. The correspondence between observable causes and between history maps must be as in \icb-bisimulations, but the action of mediating maps is changed according to the considered permutations.
\begin{definition}[Immediate causes bisimulation with symmetries]
An \emph{immediate causes bisimulation with symmetries} $R$ (\icsb-bisimulation in short) is a ternary relation such that, whenever $(\psym{O_1}{\Phi_1}{c_1},\sigma,\psym{O_2}{\Phi_2}{c_2}) \in R$:
\begin{itemize} 
    \item $\sigma$ is a partial isomorphism from $O_1$ to $O_2$;
	\item for each $\phi_1 \in \Phi_1$ and $\psym{O_1}{\Phi_1}{c_1} \icstrans{K_1}{a}{h_1} \psym{O_1'}{\Phi_1'}{c_1}'$, $\sigma$ is defined on $\phi_1(K)$ and there are $\phi_2 \in \Phi_2$ and $\psym{O_2}{\Phi_2}{c_2} \icstrans{K_2}{a}{h_2} \psym{O_2'}{\Phi_2'}{c_2}'$ such that:
	\begin{itemize}
	    \item $K_2 = \gamma(K_1)$, for $\gamma = \phi_2^{-1} \circ \sigma \circ \phi_1$;
	    \item there is $\sigma'$ such that $(\psym{O_1'}{\Phi_1'}{c_1'},\sigma',\psym{O_2'}{\Phi_2'}{c_2'}) \in R$ and the following diagram commutes
\[
    \xymatrix{
        O_1' \ar[r]^-{h_1} \ar[d]_{\sigma'} & \delta(O_1,K_1,a) \ar[d]^{\ext{\gamma}} \\
        O_2' \ar[r]_-{h_2} & \delta(O_2,K_2,a) \\
    }
\]
\end{itemize}
(and viceversa)
\end{itemize}
The greatest such relation is denoted $\icsbisim$ and we write $\psym{O_1}{\Phi_1}{c_1} \icsbisim^{\sigma} \psym{O_2}{\Phi_2}{c_2}$ whenever $(\psym{O_1}{\Phi_1}{c_1},\sigma,\psym{O_2}{\Phi_2}{c_2}) \in \icsbisim$.
\end{definition}
As mentioned, symmetries allow computing minimal realizations, where all bisimilar P-markings are identified. More precisely, we can identify $\icsbisim$-equivalent P-markings, namely $\psym{O_1}{\Phi_1}{c_1}$ and $\psym{O_2}{\Phi_2}{c_2}$ that are related by $\icsbisim^{\sigma}$, for some $\sigma$. Then $\sigma$ becomes part of the state symmetry. Actually, $\sigma$ is a permutation between subposets of $O_1$ and $O_2$, but it can be shown that all $\icsbisim$-equivalent P-markings have the same poset of observable events on which $\sigma$ is defined. This means that $\sigma$ is indeed a permutation on that poset. 
\begin{definition}[Minimal \icslts]
The minimal \icslts{} is defined as follows:
\begin{itemize}
    \item states are canonical representatives of $\icsbisim$-equivalence, namely $\psym{O}{\Phi}{c}$ such that $\Phi = \{ \sigma \mid \exists \Phi' : \psym{O}{\Phi'}{c} \icsbisim^\sigma \psym{O}{\Phi'}{c} \}$;
    \item transitions are derived according to \autoref{def:icslts}.
\end{itemize}
\end{definition}
In order to compute the symmetry $\Phi$ of a canonical representative $\psym{O}{\Phi}{c}$, we take P-markings of the form $\psym{O}{\Phi'}{c}$ and we consider triples where $\psym{O}{\Phi'}{c}$ is bisimilar to itself. Notice that $\Phi$ may be different than $\Phi'$: some $\phi \in \Phi$, in fact, may not act as the identity of $c$; with a little abuse of notation, $\psym{O}{\Phi}{c}$ stands for a P-marking where every $\phi \in \Phi$ has identical action on $c$ up to bisimilarity. It can be proved that we do not need to consider non-canonical P-markings for the computation of $\Phi$ (see, e.g., \cite[5.2]{MontanariP05}).
\begin{example}
Consider the \iclts{} of \autoref{ex:running-iclts}. It can be regarded as a \icslts{} where all states have the singleton symmetry $\{id\}$. Its minimal version is depicted in \autoref{fig:runex-min}.
 Notice that the P-marking $\psym{\{e_a , e'_a\}}{\Phi_3}{\{\{e_a\} \caus s_1,\{e'_a\} \caus s_2\}}$ has a non-trivial symmetry, because we have $\psym{\{e_a , e'_a\}}{\{id\}}{\{\{e_a\} \caus s_1,\{e'_a\} \caus s_2\}} \icsbisim^{(e_a \ e'_a)} \psym{\{e_a , e'_a\}}{\{id\}}{\{\{e_a\} \caus s_1,\{e'_a\} \caus s_2\}}$.
\begin{figure}[t]
\centering
\[
\xymatrix@R+3ex@C=0ex{
\psym{\{e_b\}}{\Phi_1}{\{\{e_b\}\caus s_1,\{e_b\} \caus s_2 \}}
\icsar@(ul,ur)^{\{e_b\} \caus b}
\icsar@/^1pc/[dr]^-{\{e_b\} \caus a}
&&
\psym{\{e_a , e'_a\}}{\Phi_3}{\{\{e_a\} \caus s_1,\{e'_a\} \caus s_2\}}
\icsar[ll]_-{\{e_a,e_a'\} \caus b}
\icsar@(ul,ur)^{\{e_a\} \caus a}
\icsar@(dl,dr)_{\{e'_a\} \caus a}
\\
&
\psym{\{e_b \lss e'_a\}}{\Phi_2}{\{\{e_b,e'_a\}\caus s_1,\{e_b\} \caus s_2 \}}
\icsar@/^1pc/[ul]^-{\{e'_a\} \caus b}
\icsar@(dl,dr)_{\{e'_a\} \caus a}
\icsar[ur]^{\{e_b\} \caus a}
}
\]
\[
    \Phi_1 = \Phi_2 = \{id\} \qquad \Phi_3 = \{ (e_a \ e'_a) , id \}
\]
\caption{Minimal \icslts{} for the running example.}
\label{fig:runex-min}
\end{figure}
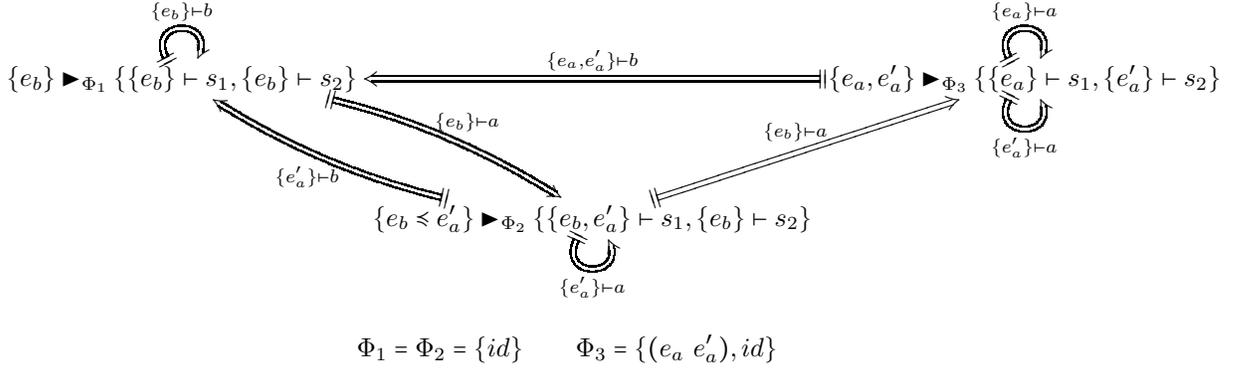

\end{example}

\section{Causal case graphs and behavior structures}
\label{sec:bs}

In the pioneering work \cite{TR88} of Trakhtenbrot and Rabinovich, \emph{behavior structures} have been introduced as causal models for Petri nets. In this section we compare them with our causal models. We recall a slightly simplified definition.
\begin{definition}[Behavior structure]
\label{def:beh-str}
Let $Act$ be a set of action labels. A behavior structure (BS in short) is a triple $B = (M,P,\phi)$, where:
\begin{itemize}
    \item $M$ is an automaton such that:
    \begin{itemize}
        \item transitions have the form $n \bstrans{B}{a} m$, with $a \in Act$;
        \item all states are reachable from the initial one $r$;
        \item there are no oriented cycles, i.e., sequences of transitions where the first and last state coincide;
        \item there are no parallel edges, i.e., $n \bstrans{B}{a} m$ and $n \bstrans{B}{b} m$ implies $a = b$.
    \end{itemize}
    \item $P$ is a family $P_n$ of $Act$-labeled posets of events, one for each state $n$ of $M$ (for the root state $r$ we must have $P_r = \emptyset$);
    \item $\phi$ is a family of labeled posets morphisms: for each pair of states $n$ and $m$ such that $n \bstrans{B}{a} m$
    \begin{itemize}
        \item $\phi_{n,m}$ is an isomorphic embedding of $P_n$ as a prefix of $P_m$;
        \item $|P_m| \setminus |\phi_{n,m}(P_n)| = \{ e_a \}$, for some event $e$;
    \end{itemize}
    \end{itemize}
\end{definition}
In a BS, each state $n$ has a poset $P_n$ over labeled events, describing causal dependencies among occurrences of actions that led to $n$. For each transition $n \bstrans{B}{a} m$ we have a map $\phi_{n,m}$ telling the correspondence between $P_n$ and $P_m$: $P_n$ is required to be isomorphic to a prefix of $P_m$ because it should specify causal dependencies for all the previous actions. The only additional event in $P_m$ represents an occurrence of the most recent action $a$.

The associated notion of behavioral equivalence is called \emph{BS-bisimilarity}. In \cite{TR88}, this equivalence  compares two different behavior structures. Here states belong to the same behavior structure.
\begin{definition}[BS-bisimulation]
\label{def:bsbisim}
Given a behavior structure $B$, a \emph{BS-bisimulation} on $B$ is a relation $R$ on triples such that, whenever $(n_1,\sigma,n_2) \in R$:
\begin{itemize}
    \item $\sigma$ is an isomorphism between $P_{n_1}$ and $P_{n_2}$;
    \item if $n_1 \bstrans{B}{a} m_1$ then there exist $m_2,\sigma'$ such that $n_2 \bstrans{B}{a} m_2$ with $(m_1,\sigma',m_2) \in R$ and the following diagram commutes
    \[
        \xymatrix{
            P_{n_1} \ar[r]^\sigma \ar[d]_{\phi_{n_1,m_1}} & P_{n_2} \ar[d]^{\phi_{n_2,m_2}} \\
            P_{m_1} \ar[r]_{\sigma'} & P_{m_2}
        }
    \]
    (and viceversa)
\end{itemize}
The greatest such relation, denoted $\bsbisim$, is called \emph{BS-bisimilarity}.
\end{definition}
Notice that states are related by BS-bisimulations up to an isomorphism of their posets. This is because the actual identity of events should not matter when comparing states. Only the causal dependencies between occurrences of actions are relevant. BS-bisimilarity has been called \emph{history preserving bisimilarity} \cite{FroschleH99} in later work.

\subsection{Relationship with causal case graphs}
\newcommand{\mc}{M^{\mathtt{C}}}
\newcommand{\phic}{\phi^{\mathtt{C}}}
\newcommand{\psc}{P^{\mathtt{C}}}

When used to represent the behavior of Petri nets, states of behavior structures are states of deterministic, non-sequential processes equipped with information about the past history of events. They can equivalently be seen as tokens equipped with causal information (see, e.g., \cite{MontanariP97}). Therefore, we will consider behavior structures over causal markings. This will enable a more direct comparison with our causal case graphs.

We characterize a sub-LTS of \clts{} that is equivalent to a BS.
\begin{definition}[Reachable \clts]
The \emph{reachable} \clts{} (\rclts) is defined as follows:
\begin{itemize}
    \item it has an \emph{initial} P-marking $\emptyset \rhd \emptyset \caus m_0$, where $m_0$ is an initial marking for $N$;
    \item transitions are only those reachable from $\emptyset \rhd \emptyset \caus m_0$.
\end{itemize}
\end{definition}
\rclts{} enjoys some properties that allow us to define a BS on top of it.
\begin{lemma}
\label{lem:rclts-prop}
\hfill
\begin{enumerate}[$(i)$]
    \item Each state $O_c \rhd c$ of \rclts{} has a unique possible poset, i.e., for any other state $O \rhd c$ we have $O=O_c$; moreover, we have $|O_c| = \causes(c)$.
    \item \rclts{} does not have parallel transitions and directed cycles.
\end{enumerate}
\end{lemma}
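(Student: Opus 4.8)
The plan is to prove both items by induction on the length of the derivation (i.e. the number of firings) needed to reach a state of \rclts{} from the initial P-marking $\emptyset \rhd \emptyset \caus m_0$. For item $(i)$, I would maintain as an induction invariant the stronger statement that every reachable state $O \rhd c$ satisfies $|O| = \causes(c)$, i.e.\ the poset contains exactly the events occurring in the causal marking, with no extra unrelated events. The base case is immediate: the initial state has $O = \emptyset$ and $\causes(\emptyset \caus m_0) = \emptyset$. For the inductive step, suppose $O \rhd \hat{c}$ is reachable and satisfies the invariant, and consider a transition $O \rhd \hat{c} \ctrans{K}{e}{a} \delta(O,K,e_a) \rhd (\causes(c) \cup \{e_a\} \caus \post{t}) \cup c'$ obtained by the rule of \autoref{def:clts}, where $\hat{c} = c \cup c'$ with $|c| = \pre{t}$. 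Here the new poset is $\delta(O,K,e_a) = (O \cup (K \times \{e_a\}))^*$, whose underlying set is $X_O \cup \{e\}$. In the target causal marking, the tokens of $\post{t}$ carry causes $\causes(c) \cup \{e_a\}$, and the remaining tokens carry their causes from $c'$; so the set of events appearing in the target is $\causes(c) \cup \{e_a\} \cup \causes(c') = \causes(\hat{c}) \cup \{e_a\} = X_O \cup \{e\}$ by the invariant. Hence $|{\delta(O,K,e_a)}| = \causes(\text{target})$, establishing the invariant for the target. Once the invariant holds, uniqueness of the poset follows: if $O \rhd c$ and $O' \rhd c$ are both reachable states with the same causal marking $c$, then $X_O = \causes(c) = X_{O'}$, and I would argue the order relations also coincide --- by a parallel induction, the order on the poset of a reachable state is completely determined by $c$, since at each step $\delta$ adds exactly the pairs $K \times \{e_a\}$ forced by the new token's causes and then takes transitive closure, and the maximal-causes set $K = \max_O \causes(c)$ used in the rule is itself determined by the source poset, which inductively is determined by its causal marking.

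For item $(ii)$, absence of parallel transitions: suppose $O \rhd c \ctrans{K}{e}{a} O_1 \rhd c_1$ and $O \rhd c \ctrans{K'}{e'}{b} O_1 \rhd c_1$ reach the same target state. Since $e$ is fresh w.r.t.\ $O$ and, by the rule, $e_a$ (resp.\ $e'_b$) is the unique event of the target poset not in $O$, and the target posets coincide, we get $e = e'$ as sets, and then the labels force $a = b$ (the event $e$ carries label $a$ in one and $b$ in the other, but labels are functions). Moreover $K$ is recovered as $\max_{O}$ of the causes of the consumed tokens; I would observe that $K$ equals the set of immediate predecessors of $e_a$ in $O_1$, which is determined by $O_1$, so $K = K'$. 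Hence the two transitions are literally equal.

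For absence of directed cycles, I would use the fact that each transition strictly enlarges the poset: the target state's poset $\delta(O,K,e_a)$ has underlying set $X_O \uplus \{e\}$ with $e \notin X_O$, so $|X_{O}|$ strictly increases along every transition. Combined with item $(i)$ --- each reachable state has a \emph{unique} associated poset, hence two states with posets of different cardinalities are distinct --- any directed path visits states with strictly increasing poset size, so it can never return to a previously visited state. This rules out directed cycles (and in particular self-loops).

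The main obstacle I anticipate is not any single step but getting the induction invariant for $(i)$ stated at exactly the right strength: the naive statement ``the poset is unique'' is not obviously closed under the transition rule, whereas the sharper ``$|O| = \causes(c)$ and the order is determined by $c$'' threads through the $\delta(-)$-closure operation cleanly. One has to be a little careful that the rule's side condition $K = \max_O \causes(c)$ refers to the \emph{source} poset, so the recursion on ``poset determined by causal marking'' must be set up so that the source poset is already pinned down before computing $K$; organizing the two inductions ($X_O = \causes(c)$, and order-determined-by-$c$) as a single simultaneous induction on derivation length handles this.
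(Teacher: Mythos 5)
Your handling of item $(ii)$ and of the invariant $|O|=\causes(c)$ is correct and coincides with the paper's own argument (no parallel edges because the target poset determines the fresh event, its label and hence $K$; no cycles because every transition strictly enlarges the poset). The genuine gap is in the uniqueness half of item $(i)$. Your induction runs over a single derivation: the inductive hypothesis pins down the source poset of the last transition from its causal marking, and from there the target poset is determined. But the statement quantifies over \emph{all} derivations reaching the same causal marking $c$, and the target causal marking does not determine the source causal marking, nor which of its events was generated last, nor therefore the set $K$. Concretely, already in the running example both
\[
\{e_a \lss e'_a\} \rhd \{\{e_a,e'_a\} \caus s_1,\ \emptyset \caus s_2\}
\qquad\text{and}\qquad
\{e'_a \lss e_a\} \rhd \{\{e_a,e'_a\} \caus s_1,\ \emptyset \caus s_2\}
\]
are reachable from $\emptyset \rhd \emptyset \caus m_0$ (fire $t_1$ twice, choosing the two fresh events in either order --- both choices are present in \clts{} by construction). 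They carry the same causal marking, since $\{e_a,e'_a\}$ is a set, but different posets. So the order on $O_c$ is \emph{not} a function of $c$, and the claim your ``parallel induction'' is supposed to establish is not closed under the transition rule. This is exactly the point you flag as the main obstacle, and the proposed remedy --- a simultaneous induction on derivation length --- does not address it, because the difficulty is not how the two invariants are threaded through one run but the comparison of two distinct runs ending in the same $c$.

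In fairness, the paper's own proof of $(i)$ is a one-sentence appeal to the fact that any path ``builds $O_c$ and $c$ incrementally,'' which glosses over precisely the same issue; your write-up makes the gap visible where the paper hides it. What your induction does legitimately prove is $|O_c|=\causes(c)$, and with a little more work one can get uniqueness of $O_c$ up to an isomorphism that fixes $c$ (which is all that the behaviour-structure correspondence in \autoref{sec:bs} actually needs, since BS-bisimilarity already works up to poset isomorphism). But uniqueness on the nose, as stated, cannot be proved by your argument or any other, so either the order-determination claim must be dropped or the statement weakened.
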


\begin{proposition}
\label{prop:bsc}
The triple $\bsc = (\mc,\phic,\psc)$ is a behavior structure, where
\begin{itemize}
    \item $\mc$ is the smallest automaton generated from \rclts{} via the following rule
\[
    \frac
    {
        O_{c} \rhd c \ctrans{K}{e}{a} O_{c'} \rhd c'
    }
    {
        c \bstrans{\bsc}{a} c'
    }
\]
\item $\psc = \{ O_c \mid \text{$O_c \rhd c$ is a state of \rclts} \}$;
\item $\phic = \{ \phic_{c,c'} : O_{c} \subto O_{c'} \mid  O_{c} \rhd c \ctrans{K}{e}{a} O_{c'} \rhd c' \}$.
\end{itemize}
\end{proposition}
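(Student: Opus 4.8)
The plan is to check, one after another, the three groups of conditions in \autoref{def:beh-str}, leaning on \autoref{lem:rclts-prop} throughout. The enabling observation is that, by part $(i)$ of that lemma, each causal marking $c$ occurring in \rclts{} carries a \emph{unique} poset $O_c$, with $|O_c| = \causes(c)$; hence $O_c \rhd c \mapsto c$ is a bijection between the states of \rclts{} and those of $\mc$, the family $\psc$ is genuinely a family of $Act$-labeled posets indexed by states of $\mc$, and (applying the generating rule of $\mc$ edge by edge) paths in \rclts{} correspond exactly to paths in $\mc$.

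First I would handle the automaton $\mc$. Its transition labels have the form $a = l(t) \in Act$ by construction. Its root is $r = \emptyset \caus m_0$, and since $\causes(\emptyset \caus m_0) = \emptyset$, part $(i)$ forces $P_r = O_{\emptyset \caus m_0} = \emptyset$, as required. Every state is reachable from $r$: a state $c$ of $\mc$ comes from a state $O_c \rhd c$ of \rclts{}, which by definition of \rclts{} is reachable from $\emptyset \rhd \emptyset \caus m_0$, and that path projects to a path from $r$ to $c$. Absence of directed cycles and of parallel edges transfers from \rclts{} by lifting along the state bijection: a directed cycle in $\mc$ lifts --- the poset at each state being determined --- to a directed cycle of \rclts{}, contradicting part $(ii)$; and a parallel pair $c \bstrans{\bsc}{a} c'$, $c \bstrans{\bsc}{b} c'$ lifts to two coterminal transitions of \rclts{}, which part $(ii)$ forces to coincide, so $a = b$.

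The only step with real content is the clause on $\phic$. Fix a transition $c \bstrans{\bsc}{a} c'$, arising from $O_c \rhd c \ctrans{K}{e}{a} O_{c'} \rhd c'$ with $O_{c'} = \delta(O_c, K, e_a) = (O_c \cup (K \times \{e_a\}))^*$ as in \autoref{def:clts}; I must show that $\phic_{c,c'} \colon O_c \subto O_{c'}$ is an order-embedding of $O_c$ onto a prefix of $O_{c'}$ with $|O_{c'}| \setminus |\phic_{c,c'}(O_c)| = \{e_a\}$. The crux is freshness of $e$: since $e \notin X_{O_c}$, in the relation generating $O_{c'}$ the event $e$ occurs only on the right of the pairs $(k, e_a)$, $k \in K$, so $e_a$ is maximal in $O_{c'}$ and taking transitive closure adds no order relation between two elements of $X_{O_c}$. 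Hence the inclusion carries the induced order and the same labeling, so it is an order-embedding; $X_{O_{c'}} = X_{O_c} \uplus \{e\}$ gives $|O_{c'}| \setminus |O_c| = \{e_a\}$; and $|O_c|$ is down-closed in $O_{c'}$, since $y \in |O_c|$ and $x \lss_{O_{c'}} y$ force $x \neq e_a$ (else $e_a \lss_{O_{c'}} y$ with $y \neq e_a$ would contradict maximality of $e_a$), whence $x \in |O_c|$. Thus $O_c$ is a prefix of $O_{c'}$, and all the conditions of \autoref{def:beh-str} hold.

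I expect this last $\delta$-unwinding to be the only delicate point: once \autoref{lem:rclts-prop} is available the proposition is almost a corollary, everything besides the prefix/new-event check being routine bookkeeping, and assembling the three clauses then closes the argument.
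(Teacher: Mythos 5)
Your proof is correct and takes exactly the route the paper intends: the paper in fact gives no explicit proof of this proposition, presenting it as an immediate consequence of \autoref{lem:rclts-prop} (``\rclts{} enjoys some properties that allow us to define a BS on top of it''), and your verification is precisely that argument spelled out. The only content beyond the lemma --- that the inclusion $O_c \subto O_{c'} = \delta(O_c,K,e_a)$ is a prefix embedding adding exactly the single event $e_a$, by freshness and maximality of $e_a$ under transitive closure --- is handled correctly in your last paragraph, so nothing is missing.
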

We have the following relation between $\cbisim$ and BS bisimilarity. 
\begin{theorem}
\label{thm:bs-c-bisim}
Let $c_1,c_2$ be states of \bsc. Then
\begin{enumerate}[$(i)$]
    \item If $O \rhd c_1 \cbisim O \rhd c_2$ and there is an isomorphism $\sigma \colon O_{c_1} \to O_{c_2}$ then $c_1 \bsbisim^{\sigma} c_2$;
    \item $c_1 \bsbisim^\sigma c_2$ implies $O_{c_2} \rhd c_1\sigma \cbisim O_{c_2} \rhd c_2$. 
\end{enumerate}
\end{theorem}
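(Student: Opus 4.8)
The plan is to establish both implications by coinduction, exhibiting the obvious candidate relation in each case and discharging the transfer conditions with \autoref{lem:ctrans-iso} (which transports a \clts-transition along a poset isomorphism) and \autoref{lem:rclts-prop} (each reachable causal marking $c$ carries a unique poset $O_c$, with $|O_c| = \causes(c)$, and \rclts{} has no cycles and no parallel edges). A useful preliminary observation is that, along \rclts, a \clts-transition is determined by its source and target: if $O_c \rhd c \ctrans{K}{e}{a} O_{c'} \rhd c'$ then $\causes(c') = \causes(c) \cup \{e_a\}$, so $e$ is the unique event of $X_{O_{c'}} \setminus X_{O_c}$ and $K$ is the antichain of its immediate predecessors in $O_{c'}$; consequently every $c \bstrans{\bsc}{a} c'$ has a unique underlying \clts-transition and $\phi^{\mathtt{C}}_{c,c'}$ is the canonical embedding $O_c \subto O_{c'}$.

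For direction~(ii), assume $c_1 \bsbisim^{\sigma} c_2$, $\sigma \colon O_{c_1} \to O_{c_2}$, and take the candidate \cb-bisimulation
\[
    R_O = \{\, (O \rhd d_1\tau,\ O \rhd d_2) \mid d_1, d_2 \text{ states of \bsc},\ O = O_{d_2},\ \tau \colon O_{d_1} \xrightarrow{\sim} O_{d_2},\ d_1 \bsbisim^{\tau} d_2 \,\},
\]
which contains $(O_{c_2} \rhd c_1\sigma,\ O_{c_2} \rhd c_2)$. The index condition is immediate since isomorphisms reflect order. For the transfer, given $O \rhd d_1\tau \ctrans{K}{e}{a} O' \rhd c'$ with $O = O_{d_2}$, I use \autoref{lem:ctrans-iso} along $\tau^{-1}$ to pull it back to a \rclts-transition $O_{d_1} \rhd d_1 \ctrans{\tau^{-1}(K)}{e'}{a} O_{c'_1} \rhd c'_1$ with $c'_1 = c'\funext{\tau^{-1}}{e'_a}{e_a}$, hence to $d_1 \bstrans{\bsc}{a} c'_1$. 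From $d_1 \bsbisim^{\tau} d_2$ I get $d_2 \bstrans{\bsc}{a} c'_2$ and an isomorphism $\tau' \colon O_{c'_1} \to O_{c'_2}$ with $c'_1 \bsbisim^{\tau'} c'_2$ and the BS-square commuting. Because an isomorphism of $\delta$-posets sends the unique new event to the unique new event, and hence its immediate predecessors to its immediate predecessors, the commuting square forces the underlying transition $O_{d_2} \rhd d_2 \ctrans{K_2}{e_2}{a} O_{c'_2} \rhd c'_2$ to have $K_2 = K$. One more use of \autoref{lem:ctrans-iso} (along the identity on $O_{d_2}$) renames $e_2$ to $e$, giving $O_{d_2} \rhd d_2 \ctrans{K}{e}{a} O' \rhd c''$ with $c'' = c'_2\funext{id}{e_a}{(e_2)_a}$; a final chase of the square identifies the composite $O_{c'_1} \xrightarrow{\tau'} O_{c'_2} \to O_{c''}=O'$ with the pullback isomorphism $\nu$ for which $c' = c'_1\nu$, so $(O' \rhd c',\, O' \rhd c'') \in R_{O'}$. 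The ``viceversa'' clause is symmetric. Two auxiliary facts are needed: $\bsbisim$ is stable under renaming a state by a poset isomorphism (which follows from \autoref{lem:ctrans-iso} and uniqueness of posets in \rclts, and licenses replacing $c'_2$ by $c''$ above), and $O'\rhd c''$ is again a \bsc-state because $c''$ is reachable.

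Direction~(i) is dual. From $O \rhd c_1 \cbisim O \rhd c_2$ and $\sigma \colon O_{c_1} \to O_{c_2}$ I take the candidate BS-bisimulation
\[
    \mathcal R = \{\, (d_1, \tau, d_2) \mid d_1, d_2 \text{ states of \bsc},\ \tau \colon O_{d_1} \xrightarrow{\sim} O_{d_2},\ O_{d_2} \rhd d_1\tau \cbisim O_{d_2} \rhd d_2 \,\},
\]
whose seed triple $(c_1,\sigma,c_2)$ belongs to it by hypothesis once the two markings are put over the common poset $O_{c_2}$ via $\sigma$, using closure of $\cbisim$ under poset isomorphisms (a consequence of \autoref{lem:ctrans-iso}; a change of ambient poset is absorbed by closure under order-embeddings, cf.\ \autoref{thm:acbisim-clos}). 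Given $(d_1,\tau,d_2) \in \mathcal R$ and $d_1 \bstrans{\bsc}{a} c'_1$ with underlying transition $O_{d_1} \rhd d_1 \ctrans{K_1}{e_1}{a} O_{c'_1} \rhd c'_1$, I transport it along $\tau$ to a \clts-transition out of $O_{d_2} \rhd d_1\tau$ with label $\tau(K_1)\caus a$; since $O_{d_2} \rhd d_1\tau \cbisim O_{d_2} \rhd d_2$, there is a matching \clts-transition out of $O_{d_2} \rhd d_2$ with the same label, the same target poset, and $\cbisim$-related continuations, hence a \bsc-transition $d_2 \bstrans{\bsc}{a} c'_2$. The required BS-square is the naturality square of the extension operation, so letting $\tau'$ be the extension of $\tau$ along this transition (the concrete form of $\ext{\tau}$, which respects the $old$/$new$ split of events) makes it commute by construction, and $(c'_1,\tau',c'_2) \in \mathcal R$ because the matched continuations are $\cbisim$ over the common target poset. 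The ``viceversa'' case is symmetric.

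The main obstacle is the bookkeeping of event identities. In \clts/\rclts{} events are \emph{global} names and $\cbisim$ matches transitions on the nose---same maximal causes $K$, same fresh event $e$, same target poset---whereas $\bsbisim$ works up to a poset isomorphism and remembers only the label $a$. The heart of both directions is therefore to show that the isomorphism handed over by one notion can be realized by the canonical, $old$/$new$-respecting isomorphism $\ext{\sigma}$ required by the other, and that renaming a reachable state again lands on a reachable state carrying the renamed local \bsc-structure. \autoref{lem:ctrans-iso} and \autoref{lem:rclts-prop} are exactly the tools for this, but assembling the commuting squares (one per direction, plus the symmetric ``viceversa'' ones) while keeping every event accounted for is where the real work lies.
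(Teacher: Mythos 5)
Your direction (ii) is essentially the paper's own argument: the same candidate relation $R_{O}$, the same use of \autoref{lem:ctrans-iso} to pull the challenge back along $\tau^{-1}$, the same square-chasing to force the matching causes to be $K$, with the only difference that you handle the mismatch of fresh-event names explicitly (renaming plus a stability-of-$\bsbisim$ lemma) where the paper says ``suppose for simplicity'' and appeals to minor changes. That part is fine.

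Direction (i), however, has a genuine gap at its very first step. You seed your candidate BS-bisimulation with the triple $(c_1,\sigma,c_2)$ by claiming that the hypothesis $O \rhd c_1 \cbisim O \rhd c_2$ yields $O_{c_2} \rhd c_1\sigma \cbisim O_{c_2} \rhd c_2$ ``by closure of $\cbisim$ under poset isomorphisms and order-embeddings''. The available closure facts do not give this: \autoref{lem:cbisim-iso} applies an isomorphism to the ambient poset and to \emph{both} markings simultaneously, and \autoref{thm:acbisim-clos} changes the ambient poset along an embedding, again acting on both sides with the same map. What you need here is asymmetric: rename only $c_1$ along $\sigma \colon O_{c_1} \to O_{c_2}$ (two different subposets of $O$) while leaving $c_2$ untouched. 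Concretely, via the embedding $O_{c_2} \subto O$ your claim reduces to $O \rhd c_1 \cbisim O \rhd \dclos{(c_1\sigma)}{O}$, and this fails in general, since renaming one side changes the observable cause-sets $K$ in its transition labels while $\cbisim$ at a fixed ambient poset matches labels on the nose. In fact, the implication you assume is exactly the composite of statements (i) and (ii) of the theorem, so your proof of (i) is circular at the seed. The paper's proof of (i) does the real work precisely here: it lifts the underlying \rclts{} transition of $c_1$ along the prefix embedding $O_{c_1} \subto O$ using \autoref{lem:clts-clos}(i), matches it at the ambient poset $O$ (where the hypothesis lives, with identical labels and no renaming), and then reflects the matching transition down along $O_{c_2} \subto O$ using \autoref{lem:clts-clos}(ii), finally extending $\sigma$ with the new event. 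Your argument never uses the hypothesis at the ambient poset $O$ and omits this lift--match--reflect step, which is the heart of (i); to repair it you would need to replace the seed step by that argument (or prove the asymmetric transfer directly, which amounts to the same thing).
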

Statement (i) says that two states $c_1$ and $c_2$ in $\bsc$ with isomorphic posets are $\bsbisim$-bisimilar whenever any two P-markings over $c_1$ and $c_2$ are $\cbisim$-bisimilar. Statement (ii) is somewhat dual: if $c_1$ and $c_2$ are $\bsbisim$-bisimilar under an isomorphism $\sigma$, then we can use $\sigma$ to turn them into $\cbisim$-bisimilar P-markings.
\begin{remark}
The behavior structure we have introduced has some common aspects with \iclts{}: for both, posets in states have local meanings; in fact, bisimilarities require explicit mappings between posets of simulating states. However, \iclts{} can discard event names along transitions and go back to an already visited state, whereas this is explicitly forbidden for BSs.
\end{remark}

\section{Background on category theory}
\label{sec:back}

We assume that the reader is familiar with elementary category theory. In this section we recall some notions that will be needed in the following.

\subsection{Functor categories}
\label{ssec:funcs}

\begin{definition}[Functor category]
Let $\catC$ and $\catD$ be two categories. The \emph{functor category} $\catD^\catC$ has functors $\catC \rightarrow \catD$ as objects and natural transformations between them as morphisms.
\end{definition}
Functors from any category $\catC$ to $\catSet$ are called (covariant) \emph{presheaves}. Hereafter we assume that the domain category $\catC$ for presheaves is \emph{small}, i.e., its collection of objects is actually a set. A presheaf $P$ can be intuitively seen as a family of sets indexed over the objects of $\catC$ plus, for each $\sigma \colon c \to c'$ in $\catC$, an action of $\sigma$ on $Pc$, which we write 
\[
	p[\sigma]_P = P\sigma(p) \qquad (p \in Pc) \enspace ,
\]
omitting the subscript $P$ in $[\sigma]_P$ when clear from the context. This notation intentionally resembles the application of a renaming $\sigma$ to a process $p$, namely $p\sigma$: it will, in fact, have this meaning in later sections.
The set $\el P$ of \emph{elements} of a presheaf $P$ is
\[
	\el P := \sum_{c \in |\catC|} Pc
\]
where the sum symbol denotes the coproduct in $\catSet$, and we denote by $c \rhd p$ a pair belonging to $\el P$.
Presheaf categories have the following nice property.
\begin{property}
For any $\catC$, $\psh{\catC}$ has all limits and colimits, both computed pointwise.
\end{property}

\subsection{Coalgebras}
\label{ssec:coalg}

The \emph{behavior} of systems can be modeled in a categorical setting through \emph{coalgebras} \cite{Rutten00,Adamek05}. Given a \emph{behavioral endofunctor} $B \colon  \catC \to \catC$, describing the ``shape'' of a class of systems, we have a corresponding category of coalgebras.
\begin{definition}[$\catCoalg{B}$]
The category $\catCoalg{B}$ is defined as follows: objects are \emph{$B$-coalgebras}, i.e., pairs $(X,h)$ of an object $X \in |\catC|$, called \emph{carrier}, and a morphism $h \colon X \to BX$, called \emph{structure map}; \emph{$B$-coalgebra homomorphisms} $f \colon (X,h) \to (Y,g)$ are morphisms $f \colon X \to Y$ in $\catC$ making the following diagram commute
\[
	\xymatrix{
		X \ar[r]^{h} \ar[d]_f & BX \ar[d]^{Bf} \\
		Y \ar[r]_{g} & BY
	}
\] 
\end{definition}
For instance, given a set of labels $L$, consider the functor 
\[
	B_{flts} := \finParts( L \times - ) 
\]
where $\finParts \colon \catSet \to \catSet$ is the \emph{finite powerset functor}, defined on a set $A$ and on a function $h \colon A \to A'$ as follows
\[
	\finParts A := \{ B \subseteq A \mid \text{$B$ finite} \} \qquad
	\finParts h (B):= \{ h(b) \mid b \in B \} 
\]
$B_{flts}$-coalgebras $(X,h)$ are \emph{finitely-branching labeled transition systems}, with labels $L$ and states $X$. The function $h(x)$ returns the set of labeled transitions $x \trans{a} y$ such that $(a,y) \in h(x)$. Homomorphisms of $B_{flts}$-coalgebras are functions between states that preserve and reflect transitions.

Many notions of behavioral equivalence can be defined for coalgebras (see \cite{Staton11}). We adopt the one by Hermida and Jacobs and we simply call it $B$-bisimulation. To introduce it, we need some preliminary notions. A (binary) \emph{relation on $X \in |\catC|$} is a jointly-monic span $X \leftarrow R \rightarrow X$ in $\catC$. An \emph{image} of a morphism $f \colon A \to C$ is a monomorphism $m \colon B \injto C$ through which $f$ factors, such that if $f$ factors through any other mono $B' \injto C$, then $B$ is a subobject of $B'$. The factoring morphism $A \to B$ is called \emph{cover}. In $\catSet$ all these notions become the usual ones: a relation $R$ is a binary relation on $X$ and the span is made of left/right projections; the image of $f$ is $f(A) \subto C$, and its cover is $f$ with restricted codomain $f(A)$. Given a relation $R$ on $X$, the \emph{relation lifting} $\overline{B} R$ is the image of the morphism $BR \to B(X \times X) \to BX \times BX$, taking $R$ to a relation on $BX$.
\begin{definition}[$B$-bisimulation]
Given a $B$-coalgebra $(X,h)$, a \emph{$B$-bisimulation} on it is a relation $R$ on $X$ such that there is $r$ making the following diagram commute
\[
	\xymatrix{
		X \ar[d]_h & \ar[l] R \ar[r] \ar@{..>}[d]_r & X \ar[d]^h \\
		BX & \ar[l] \overline{B}R \ar[r] & BX \\
	}
\]
The greatest such relation is called $B$-bisimilarity.
\label{def:coalg-bisim}
\end{definition}    
A $B_{flts}$-bisimulation $R$ on a $B_{flts}$-coalgebra is an ordinary bisimulation on the corresponding transition system. In fact, $\overline{B}R$ is the set of pairs $(X_1,X_2) \in BX \times BX$ such that
$(l,x') \in X_1$ only if there is some $(l,(x',y')) \in BR$, but then we also have $(l,y') \in X_2$ and $(x',y') \in R$ (the symmetric statement holds if $(l,x') \in X_2$). Clearly $r$ exists if and only if $R$ is a bisimulation, and is given by $(x,y) \in R \mapsto (h(x),h(y))$.

An important property of categories of coalgebras is the existence of the terminal object; the unique morphism from each coalgebra to it assigns to each state its abstract semantics. The ideal situation is when the induced equivalence, relating all the states with the same abstract semantics, agrees with $B$-bisimilarity. A sufficient condition for this property is when \emph{$B$ covers pullbacks}.
\begin{property}[$B$ covers pullbacks]
Consider a cospan $X_1 \rightarrow X_3 \leftarrow X_2$, and the morphism $m$ from the image of the pullback (the left square below) to the pullback of the image
\[
    \xymatrix@=2.5ex{
        & X_1 \ar[dr] & \\
        P \pullbackcorner[r] \ar[ur] \ar[dr] & & X_3 \\
        & X_2 \ar[ur]
    }
    \qquad\qquad
    \xymatrix@=2.5ex{
        && BX_1 \ar[dr] & \\
        BP \ar@{..>}[r]^m \ar@/^1pc/[urr]^{B\pi_1} \ar@/_1pc/[drr]_{B\pi_2} &P' \pullbackcorner[r] \ar[ur] \ar[dr] & & BX_3 \\
        && BX_2 \ar[ur]
    } 
\]
Then $B$ \emph{covers pullbacks} if $m$ is always a cover.
\end{property}
For the best-known Aczel-Mendler bisimulations, defined as spans of coalgebras, the condition on $B$ that guarantees the agreement of behavioral equivalences is more demanding: $B$ should preserve weak pullbacks. The finite powerset functor on $\catSet$ preserves weak pullbacks, but other finite powerset functors do not, for instance the one on presheaves that we will use, which instead covers pullbacks. This motivates our preference of Hermida-Jacobs bisimulations over Aczel-Mendler ones (another important reason for this will be explained in \autoref{sec:coalg}).

A sufficient condition for the existence of the final coalgebra is that $B$ is an \emph{accessible} functor on a \emph{locally finitely presentable} category (see \cite{AdamekR94,Worrell99,Adamek05} for details). A category $\catC$ is \emph{filtered} if each finite diagram is the base of a cocone in $\catC$; filtered categories generalize the notion of directed preorders, that are sets such that every finite subset has an upper bound. For any category $\catD$, a \emph{filtered colimit} in $\catD$ is the colimit of a diagram of shape $\catC$, i.e., a functor $\catC \to \catD$, such that $\catC$ is a filtered category.
\begin{definition}[Locally finitely presentable category] An object $c$ of a category $\catC$ is \emph{finitely presentable} if the functor $\homFun{\catC}(c,-) \colon \catC \to \catSet$ preserves filtered colimits. A category $\catC$ is locally finitely presentable if it has all colimits and there is a set of finitely presentable objects $X \subseteq | \catC |$ such that every object is a filtered colimit of objects from $X$. 
\end{definition}
For instance, locally finitely presentable objects in $\catSet$ are precisely finite sets. $\catSet$ is locally finitely presentable: every set is the filtered colimit, namely the union, of its finite subsets and the whole $\catSet$ is generated by the set containing one finite set of cardinality $n$ for all $n \in \mathbb{N}$.

For functor categories we have the following.
\begin{proposition}
For each locally finitely presentable category $\catC$ and small category $\catD$, the functor category $\catC^\catD$ is locally finitely presentable.
\end{proposition}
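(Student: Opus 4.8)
The plan is to invoke the standard characterisation of locally finitely presentable categories (\cite{AdamekR94}): a category is locally finitely presentable precisely when it is cocomplete and admits a strong generator consisting of finitely presentable objects. Accordingly the proof splits into two parts — cocompleteness of $\catC^\catD$, and the construction of such a strong generator.

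Cocompleteness is routine: since $\catD$ is small and $\catC$ is cocomplete, $\catC^\catD$ has all colimits, computed pointwise (just as for presheaves, see \autoref{ssec:funcs}); the colimit of $F \colon I \to \catC^\catD$ is the functor $d \mapsto \mathrm{colim}_i\, F(i)(d)$. In particular, for every $d \in |\catD|$ the evaluation functor $\mathrm{ev}_d \colon \catC^\catD \to \catC$, $G \mapsto Gd$, preserves all colimits, hence in particular filtered ones.

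For the strong generator, note first that each $\mathrm{ev}_d$ has a left adjoint $L_d \colon \catC \to \catC^\catD$, namely the left Kan extension along the functor $\mathbf 1 \to \catD$ picking out $d$ (this exists since $\catC$ is cocomplete); explicitly $(L_d c)(d') = \homFun{\catD}(d,d') \bullet c$, a copower of copies of $c$. Because $\mathrm{ev}_d$ preserves filtered colimits, $L_d$ sends finitely presentable objects to finitely presentable objects: for $c$ finitely presentable in $\catC$ and $F \colon I \to \catC^\catD$ filtered,
\begin{align*}
    \homFun{\catC^\catD}(L_d c, \mathrm{colim}_i F_i)
    &\cong \homFun{\catC}\bigl(c, \mathrm{colim}_i\, \mathrm{ev}_d F_i\bigr)
    \cong \mathrm{colim}_i\, \homFun{\catC}(c, \mathrm{ev}_d F_i) \\
    &\cong \mathrm{colim}_i\, \homFun{\catC^\catD}(L_d c, F_i).
\end{align*}
Now pick a set $\mathcal G$ of finitely presentable objects of $\catC$ such that every object of $\catC$ is a filtered colimit of objects from $\mathcal G$ (this exists by hypothesis, and any such $\mathcal G$ is automatically a strong generator of $\catC$). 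Put
\[
    \mathcal S := \{\, L_d c \mid d \in |\catD|,\ c \in \mathcal G \,\},
\]
a small set of finitely presentable objects of $\catC^\catD$. I claim $\mathcal S$ is a strong generator. For joint faithfulness: two natural transformations $\alpha,\beta \colon F \to F'$ that agree after precomposition with every morphism out of an object of $\mathcal S$ agree in particular, for each $d$, after precomposition with all maps $c \to Fd$ with $c \in \mathcal G$; writing $Fd$ as a filtered colimit of such $c$'s shows $\alpha_d = \beta_d$, hence $\alpha = \beta$. For joint conservativity: if $\phi \colon F \to F'$ induces a bijection $\homFun{\catC^\catD}(L_d c, F) \to \homFun{\catC^\catD}(L_d c, F')$ — equivalently a bijection $\homFun{\catC}(c, Fd) \to \homFun{\catC}(c, F'd)$ — for all $d \in |\catD|$ and $c \in \mathcal G$, then each $\phi_d \colon Fd \to F'd$ is an isomorphism because $\mathcal G$ is a strong generator of $\catC$, so $\phi$ is an isomorphism. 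By the cited characterisation, $\catC^\catD$ is locally finitely presentable.

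As for difficulty: modulo the Adámek--Rosický characterisation (which does the real work) the argument is essentially bookkeeping, the only genuine checks being that $L_d$ preserves finitely presentable objects (immediate from the displayed chain of isomorphisms) and that restricting the index $c$ to a strong generator $\mathcal G$ of $\catC$ still produces a strong generator of $\catC^\catD$. A self-contained route avoiding the characterisation would instead close $\mathcal S$ under finite colimits — obtaining a set $\mathcal S^{+}$ that still consists of finitely presentable objects, since finite limits commute with filtered colimits in $\catSet$ — and then show that every $F \in \catC^\catD$ is the filtered colimit of the comma category $\mathcal S^{+}\!\downarrow F$; there the delicate point, and the place where closure under finite colimits is genuinely used, is the filteredness of $\mathcal S^{+}\!\downarrow F$.
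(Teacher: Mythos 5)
Your proposal is correct, but there is nothing in the paper to compare it against: the paper states this proposition as background, without proof, leaving it to the cited literature (it is a standard fact about functor categories over a small index, found e.g.\ in Ad\'amek--Rosick\'y as a corollary of the characterisation of locally finitely presentable categories), so your argument supplies a proof the paper deliberately omits. The argument itself is sound: cocompleteness via pointwise colimits, the adjunctions $L_d \dashv \mathrm{ev}_d$, the chain of isomorphisms showing each $L_d c$ is finitely presentable, and the faithfulness/conservativity checks for $\mathcal S$ are all fine. One small point worth making explicit: you verify that the hom-functors of $\mathcal S$ are jointly faithful and jointly conservative, whereas the Ad\'amek--Rosick\'y notion of strong generator is ``generator plus: no proper subobject receives all morphisms from the generator''; the bridge is a one-liner (if every map from $\mathcal S$ into $F$ factors through a monomorphism $m$, then each $\homFun{\catC^\catD}(S,m)$ is a bijection, so conservativity forces $m$ to be invertible), and the converse direction is what you silently use when concluding that each $\phi_d$ is an isomorphism because $\mathcal G$ is a strong generator of $\catC$ (there one first gets $\phi_d$ monic from joint faithfulness, then invertible from the subobject condition). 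Likewise the parenthetical claim that any set $\mathcal G$ of finitely presentable objects whose closure under filtered colimits is all of $\catC$ is automatically a strong generator deserves its one-line proof (factorisations of the colimit injections through a monomorphism form a cocone, yielding a splitting). With those two sentences added, the proof is complete; your concluding remark about the alternative route via closing $\mathcal S$ under finite colimits and checking filteredness of $\mathcal S^{+}\!\downarrow F$ correctly identifies where the extra work would lie if one avoided the characterisation theorem.
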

In particular, since $\catSet$ is locally finitely presentable, we have that the presheaf category $\catSet^\catD$ is locally finitely presentable as well.
\begin{definition}[Accessible functor] Let $\catC$ and $\catD$ be locally finitely presentable categories. A functor $F \colon \catC \to \catD$ is \emph{accessible} if it preserves filtered colimits. 
\end{definition}
Here are some useful properties of accessible functors: their products, coproducts and composition is accessible as well; adjoint functors between locally finitely presentable categories are accessible. Moreover, it is a well-known fact that the finite powerset functor $\finParts$ introduced in \autoref{ssec:coalg} is accessible.

\subsection{Coalgebras over presheaves}

Coalgebras for functors $B \colon \pshC \to \pshC$ are pairs $(P,\rho)$ of a presheaf $P \colon \catC \to \catSet$ and a natural transformation $\rho \colon P \to BP$. The naturality of $\rho$ imposes a constraint on behavior
\[
	\xymatrix{
		c \ar[d]_{f} & 
		p \in Pc \ar@{|->}[d]_{[f]_P} \ar@{|->}[r]^{\rho_c} & 
		beh(p) \ar@{|->}[d]^{[f]_{BP}} \\
		c' & p[f]_P \in P(c') \ar@{|->}[r]_{\rho_{c'}} & 
		beh(p)[\sigma]_{BP}
	}
\]
Intuitively, this diagram means that, if we take a state, apply a function to it and then compute its behavior, we should get the same thing as first computing the behavior and then applying the function to it. In other words, behavior must be \emph{preserved} and \emph{reflected} by the index category morphisms. 

$B$-bisimulations have a similar structure. A $B$-bisimulation $R$ is a presheaf in $\pshC$ and all the legs of the bisimulation diagram in \autoref{def:coalg-bisim} are natural transformations. In particular, the naturality of projections implies that, given $(p,q) \in Rc$ and $f \colon c \to c'$ in $\catC$, $(p[f],q[f]) \in R(c')$, i.e., $B$-bisimulations are \emph{closed under the index category morphisms}.

\section{Coalgebraic semantics}
\label{sec:coalg}

In this section we construct a coalgebraic causal semantics for Petri Nets. We first show that the notions of \autoref{ssec:aclts} have a categorical interpretation. Then we translate \aclts{} into a coalgebra.

We introduce two categories of $Act$-labeled posets. Recall that, given a category $\catC$, a \emph{skeletal category} is a full subcategory of $\catC$ such that each object is isomorphic to one of $\catC$ and two distinct objects cannot be isomorphic.
\begin{definition}[Category $\cat{O}$ and $\catO$]
Let $\cat{O}$ be the skeletal category of the category of $Act$-labeled posets and their morphisms. The category $\catO$ is the subcategory of $\cat{O}$ whose morphisms are order-embeddings.
\end{definition}
Taking a skeletal category amounts to choosing one canonical representative of each isomorphism class of posets, i.e., using the terminology of \autoref{ssec:aclts}, the objects of $\cat{O}$ and $\catO$ are abstract posets. The difference between $\cat{O}$ and $\catO$ is similar to that between $\mathbf{F}$, the category of finite ordinals and all functions, and its subcategory $\mathbf{I}$, including only injective functions (indeed $\catO$ only includes injective morphisms). Presheaves over these categories are used in \cite{FioreT01} to give a coalgebraic semantics for the $\pi$-calculus.

\begin{remark}
In \cite{ACTA} we have introduced the category $\catPO$ of finite posets up to isomorphisms and its subcategory $\catPOm$ with only order-embeddings. The category $\cat{O}$ can be understood as a comma category $U \downarrow Act$, where $U \colon \catPO \to \catSet$ takes a poset to its underlying set and $Act$ is the constant functor mapping every set to $Act$. Similarly for $\catO$, whenever $U\colon \catPOm \to \catSet$.
\end{remark}

\begin{proposition}
\label{}
The category $\catO$ is small and has pullbacks.
\label{prop:o-small-pull}
\end{proposition}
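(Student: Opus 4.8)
The plan is to verify the two claims separately, with the smallness of $\catO$ being essentially bookkeeping and the existence of pullbacks being the substantive part.

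\emph{Smallness.} First I would argue that $\catO$ is small. By definition $\catO$ is a subcategory of $\cat{O}$, which is the \emph{skeletal} category of $Act$-labeled posets, so it suffices to show $\cat{O}$ is small. Each object of $\cat{O}$ is a finite $Act$-labeled poset up to isomorphism, and every finite poset is isomorphic to one whose carrier is an initial segment $\{0,\dots,n-1\}$ of $\Nat$; since $Act$ is a fixed set, there is only a set's worth of order relations and labeling functions on each such carrier, hence only a set of isomorphism classes. Choosing one representative per class (which is what "skeletal" does) yields a set of objects. Homsets are sets of functions between finite sets, hence small. Therefore $\cat{O}$, and a fortiori its subcategory $\catO$, is small.

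\emph{Pullbacks.} The core of the proof is constructing pullbacks in $\catO$. Given a cospan $O_1 \xrightarrow{\sigma_1} O_3 \xleftarrow{\sigma_2} O_2$ of order-embeddings, I would first build the candidate pullback in $\catSet$: the set $P = \{(x,y) \in X_{O_1} \times X_{O_2} \mid \sigma_1(x) = \sigma_2(y)\}$ with the two projections $\pi_1,\pi_2$. I then equip $P$ with the $Act$-labeled poset structure pulled back along the projections: order $(x,y) \lss (x',y')$ iff $x \lss_{O_1} x'$ (equivalently, using that $\sigma_1,\sigma_2$ are order-embeddings and land on the same element, iff $y \lss_{O_2} y'$), and label $(x,y)$ by $l_{O_1}(x) = l_{O_3}(\sigma_1(x)) = l_{O_2}(y)$. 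One checks this is a genuine labeled poset (reflexivity/transitivity/antisymmetry are inherited), that $\pi_1,\pi_2$ are order-\emph{embeddings} — preservation is immediate, and reflection holds because $(x,y)\lss(x',y')$ was \emph{defined} as $x \lss_{O_1} x'$ — and hence that $P$ (strictly, its skeletal representative) is an object of $\catO$. The square commutes by construction. For the universal property: given any $O$ with order-embeddings $f_1 \colon O \to O_1$, $f_2 \colon O \to O_2$ with $\sigma_1 f_1 = \sigma_2 f_2$, the underlying-set pullback gives a unique function $u \colon X_O \to P$, $u(z) = (f_1(z), f_2(z))$; it remains to check $u$ is an order-embedding, which follows since $u$ followed by $\pi_1$ equals the order-embedding $f_1$ (so $u$ reflects order) and $u$ preserves order because $f_1$ does and the order on $P$ is defined via the first coordinate. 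Finally, since $\catO$ is skeletal, I replace $P$ by its canonical isomorphic representative and transport the projections accordingly; uniqueness of $u$ is inherited from $\catSet$.

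\emph{Main obstacle.} The delicate point, and the one I would treat most carefully, is that pullbacks must stay \emph{inside} $\catO$ — i.e. that the projections out of $P$ are order-embeddings and not merely order-preserving maps, and that the induced mediating morphism $u$ is also an order-embedding. This is exactly where the hypothesis that $\catO$'s morphisms are order-embeddings (rather than arbitrary labeled-poset morphisms) is used: if $\sigma_1,\sigma_2$ only preserved order, the naive set-level pullback would generally fail to carry a well-defined poset structure reflecting both orders, and this is presumably why the analogous statement is claimed for $\catO$ but not for $\cat{O}$. I would also double-check the compatibility of labels on $P$ (that $l_{O_1}(x)$ and $l_{O_2}(y)$ genuinely coincide for $(x,y)\in P$), which follows from $\sigma_1,\sigma_2$ being label-preserving together with $\sigma_1(x)=\sigma_2(y)$. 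Everything else is a routine transfer of the $\catSet$ pullback through the forgetful functor.
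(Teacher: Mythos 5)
Your proof is correct and follows essentially the same route as the paper: smallness from the fact that there is only a set of isomorphism classes of finite $Act$-labeled posets, and pullbacks computed as the underlying set-level pullback equipped with the inherited order and labels, with the order-embedding hypothesis doing the work of making the projections and the mediating map embeddings. The only difference is that the paper outsources the unlabeled, order-theoretic part to a prior result (pullbacks in $\catPOm$ computed as in $\catGraph$) and only checks the labeling, whereas you verify that part directly, which makes your argument self-contained but not substantively different.
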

The category $\catO$ lacks colimits, but the ones we are interested in can be computed in $\cat{O}$. We will be more precise when presenting such colimits.

We introduce some notation for particular objects and morphisms of $\cat{O}$. We denote by $\dpos{k}{l}$ the discrete poset with $k$ elements and labeling function $l$; if $k=1$ then we simply write $\dpos{1}{a}$ to assign label $a$ to the only event. We write $\dposT{k}{l}{a}$ for the poset $\dpos{k}{l}$ plus a top element with label $a$. Two maps will be useful: 
\[
    \xymatrix
    {
        \dpos{k}{l} \ar[r]^{b(\dposT{k}{l}{a})} & \dposT{k}{l}{a} & \ar[l]_{\top(\dposT{k}{l}{a})} \dpos{1}{a}
    }
\]
the left map picks the bottom elements in $\dposT{k}{l}{a}$, and the right one picks the top element.

In $\cat{O}$ we can use a pushout to compute $\delta(O,K,a)$, the associated maps $old(O,K,a)$ and $new(O,K,a)$, and the extension $\ext{\sigma}$ of a morphism $\sigma \colon O \to O'$, all defined in \autoref{ssec:aclts}. Given $O \in |\cat{O}|$, let $K \colon \dpos{k}{l} \hookrightarrow O$ be the subobject in $\catO$ picking $K$ within $O$. Then we have
\begin{equation}
	\begin{gathered}
	\xymatrix{
		\dpos{k}{l} 
		\ar[r]^{K}  
		\ar[d]_{b(\dposT{k}{l}{a})} 
		& 
		O 
		\ar[d]^{old(O,K,a)} 
		\ar[r]^{\sigma} 
		& 
		O' 
		\ar[dd]^{old(O',\sigma(K),a)}  
		\\
		\dposT{k}{l}{a} 
		\ar[d]_{id}
		\ar[r]_{K^a} 
		& 
		\pushoutcorner
		\delta(O,K,a) 
		\ar@{-->}[dr]^{\ext{\sigma}} 
		\\
		\dposT{k}{l}{a} 
		\ar[rr]_{(\sigma(K))^a}
		&& 
		\pushoutcorner\delta(O,\sigma(K),a)
	}
	\end{gathered}
	\qqquad
	new(O,K,a) = K^a \circ \top(\dposT{k}{l}{a})
	\label{diag:delta}
\end{equation}
Explicitly, $\delta(O,K,a)$ is constructed as follows: the disjoint union of $O$ and $\dposT{k}{l}{a}$ is made, and then the bottom elements of $\dposT{k}{l}{a}$ and the causes $K$ are identified, resulting in $O$ plus a fresh $a$-labeled top event for $K$; the transitive closure of this relation gives $\delta(O,K,a)$. Notice that, since $K$ reflects order, causes of the fresh event must be incomparable, i.e., they are maximal events in $O$. This agrees with the definition of $K$ in \autoref{def:clts}. The map $\ext{\sigma} \colon \delta(O,K,a) \to \delta(O',\sigma(K),a)$ is induced by the universal property of pushouts: we compute $\delta(O',\sigma(K),a)$ via the pushout of
\[
    \xymatrix@1{\dpos{k}{l} &\ar[l]_{b(\dposT{k}{l}{a})} \dposT{k}{l}{a} \ar[r]^{\sigma \circ K} & O'} 
\]
that is the outer pushout in \eqref{diag:delta}, and then we define $\ext{\sigma}$ as the mediating morphism between the inner and the outer pushout. It can be easily verified that $\ext{\sigma}$ indeed acts as described in \autoref{ssec:aclts}. All these constructions has been given in $\cat{O}$ but we have the following property.

\begin{lemma}
The diagram \eqref{diag:delta} also exists in $\catO$.
\label{lem:o-delta-diag}
\end{lemma}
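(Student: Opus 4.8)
The plan is to show that diagram \eqref{diag:delta}, which was built using a pushout in $\cat{O}$, actually lives inside the subcategory $\catO$ — that is, all its objects are abstract posets (which is automatic, since $\delta(O,K,a)$ is defined as an iso-representative) and, crucially, all its morphisms are \emph{order-embeddings}, not just order-preserving maps. So the real content is: $old(O,K,a)$, $old(O',\sigma(K),a)$, $K^a$, $(\sigma(K))^a$, and the induced map $\ext{\sigma}$ are all order-embeddings. Once this is established, the diagram is a diagram in $\catO$; whether the square is still a pushout \emph{in} $\catO$ is not claimed by the lemma (indeed $\catO$ lacks general colimits), only that the diagram \emph{exists} there, so I do not need to verify a universal property in $\catO$.

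First I would handle the structural embeddings. The maps $old(O,K,a)$ and $K^a$ are the two coprojections into the pushout $\delta(O,K,a)$, which by the explicit description is obtained from the disjoint union $O \sqcup \dposT{k}{l}{a}$ by gluing the $k$ bottom elements of $\dposT{k}{l}{a}$ onto the subset $K \subseteq X_O$ and then taking transitive closure. I would argue directly on this concrete construction: the underlying set map $old(O,K,a)$ is injective (nothing in $O$ gets identified with anything else, since the only identifications are bottoms-of-$\dposT{k}{l}{a}$ with elements of $K$, and those land on genuine elements of $O$), and it reflects order because the new order on $\delta(O,K,a)$ restricted to the image of $O$ is exactly $\lss_O$: the only new comparabilities created are $(\text{element of }K) \lss (\text{new top})$ and their transitive consequences, none of which is internal to $O$ — here I use that $K$ picks \emph{incomparable} maximal elements, which is exactly why $b(\dposT{k}{l}{a})$ composed with $K$ adds no order inside $O$. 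The same reasoning, run with $O'$ and $\sigma(K)$ in place of $O$ and $K$, shows $old(O',\sigma(K),a)$ and $(\sigma(K))^a$ are order-embeddings; and $b(\dposT{k}{l}{a})$ and $K \colon \dpos{k}{l}\hookrightarrow O$ are already order-embeddings by hypothesis (the former is the inclusion of a discrete antichain into $\dposT{k}{l}{a}$; the latter because $K$ is a subobject in $\catO$).

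The main obstacle is the induced map $\ext{\sigma}$. It is defined via the universal property of the pushout in $\cat{O}$, so a priori it is only order-\emph{preserving}. I would show it reflects order by unwinding its explicit action, already spelled out in Section~\ref{ssec:aclts}: $\ext{\sigma}$ sends $new(O,K,a)$ to $new(O',\sigma(K),a)$ and $old(O,K,a)(y)$ to $old(O',\sigma(K),a)(\sigma(y))$. Take $x_1, x_2 \in |\delta(O,K,a)|$ with $\ext{\sigma}(x_1) \lss \ext{\sigma}(x_2)$. If $x_2 = new(O,K,a)$, then $\ext{\sigma}(x_2)$ is the top of $\delta(O',\sigma(K),a)$ and $\ext{\sigma}(x_1) \lss \ext{\sigma}(x_2)$ forces either $x_1 = new(O,K,a)$ (hence $x_1 = x_2$) or $x_1 = old(O,K,a)(y)$ with $\sigma(y) \in \sigma(K)$; since $\sigma$ reflects order (it is an order-embedding), $\sigma(y) \in \sigma(K)$ gives $y \in K$, hence $x_1 \lss x_2$ in $\delta(O,K,a)$. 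If instead $x_2 = old(O,K,a)(y_2)$, then $\ext{\sigma}(x_2) = old(O',\sigma(K),a)(\sigma(y_2))$ lies in the image of $O'$, so $\ext{\sigma}(x_1)$ must also lie there (nothing strictly below an element of $O'$ is the new top), i.e.\ $x_1 = old(O,K,a)(y_1)$ and $\sigma(y_1) \lss_{O'} \sigma(y_2)$; reflecting twice — through $old(O',\sigma(K),a)$ and through $\sigma$ — yields $y_1 \lss_O y_2$, hence $x_1 \lss x_2$. Injectivity of $\ext{\sigma}$ is immediate from injectivity of $\sigma$ and the fact that it respects the old/new partition. This completes the verification that every arrow in \eqref{diag:delta} is an order-embedding, so the whole diagram is a diagram in $\catO$, as claimed; I would close by remarking that no pushout property in $\catO$ is asserted, only the existence of the diagram, so nothing further is needed.
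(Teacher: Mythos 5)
Your proof is correct in substance and takes a genuinely different route from the paper's. The paper argues by reduction: it invokes a result from \cite{ACTA} stating that pushouts of order-embeddings in the unlabeled category $\catPO$ are already commuting squares in $\catPOm$, then uses \autoref{lem:o-po-push} to transfer the labeling to the pushout objects, and simply asserts that $\ext{\sigma}$ reflects order ``by its definition''. You instead verify everything directly on the concrete construction of $\delta(O,K,a)$ as a transitive closure: the coprojections are order-embeddings because the only new comparabilities involve the fresh maximal event, and $\ext{\sigma}$ reflects order by a case analysis on the old/new partition. Your route is self-contained and makes explicit exactly the step the paper leaves implicit (order-reflection of $\ext{\sigma}$), at the price of redoing by hand what the citation to \cite{ACTA} and \autoref{lem:o-po-push} buy; the paper's route handles the unlabeled part and the labeling once and for all, and reuses machinery needed elsewhere. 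One small imprecision in your case analysis: when $\ext{\sigma}(x_1) \lss new(O',\sigma(K),a)$ with $x_1 = old(O,K,a)(y)$, you conclude $\sigma(y) \in \sigma(K)$, but the elements strictly below the new event form the whole down-closure $\dclos{\sigma(K)}{O'}$, not just $\sigma(K)$; the repair is immediate with the same tool you already use: $\sigma(y) \lss_{O'} \sigma(k)$ for some $k \in K$, order-reflection of $\sigma$ gives $y \lss_{O} k$, hence $y \in \dclos{K}{O}$ and $x_1 \lss new(O,K,a) = x_2$ in $\delta(O,K,a)$. You are also right that the lemma asserts only the existence of the diagram in $\catO$, not a universal property there, consistently with the paper's remark that the relevant colimits are computed in $\cat{O}$.
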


Now we want to turn the computation of $\delta(O,K,a)$ into a functorial operation on $\catO$. This operation can only have $O$ as parameter. The dependency from $a$ and $K$ is removed by adding a new event for \emph{each} set of independent causes and \emph{each} action. Formally, consider all $K_1 \colon \dpos{k_1}{l_1} \subto O, \dots , K_m \colon \dpos{k_m}{l_m} \subto O$. Suppose $Act = \{a_1,\dots,a_n\}$. Then we can compute $\deltaFun(O)$ via the colimit shown in \autoref{fig:colimit-delta}.
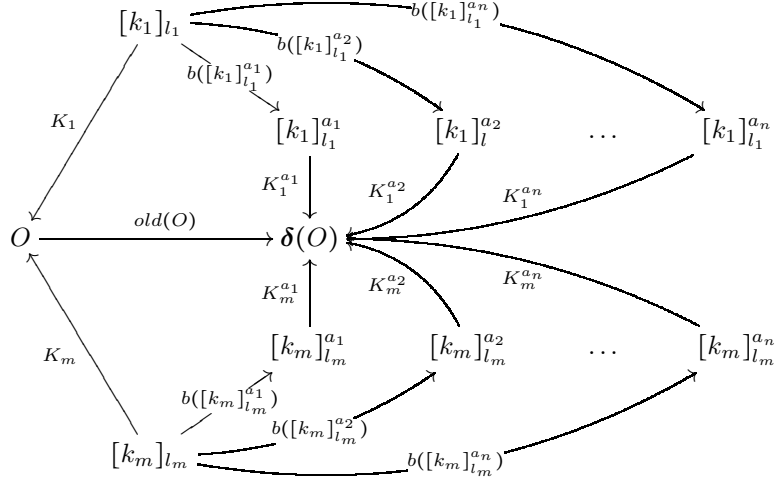
\begin{figure}[t]
\[
    \xymatrix{
        & \dpos{k_1}{l_1} 
        \ar[dr]|{b(\dposT{k_1}{l_1}{a_1})} 
        \ar@/^1pc/[drr]|{b(\dposT{k_1}{l_1}{a_2})} 
        \ar@/^2pc/[drrrr]|{b(\dposT{k_1}{l_1}{a_n})}         
        \ar[ddl]_{K_1}
        \\
        &&
        \dposT{k_1}{l_1}{a_1}         
        \ar[d]_{K_1^{a_1}}
        & 
        \dposT{k_1}{l}{a_2} 
        \ar@/^1pc/[dl]_{K_1^{a_2}}
        & 
        \dots 
        & 
        \dposT{k_1}{l_1}{a_n} 
        \ar@/^1pc/[dlll]_{K_1^{a_n}}
        \\
        O \ar[rr]^{old(O)} && \deltaFun(O)
        \\
        &&
        \dposT{k_m}{l_m}{a_1}    
        \ar[u]^{K_m^{a_1}}     
        & 
        \dposT{k_m}{l_m}{a_2} 
        \ar@/_1pc/[ul]^{K_m^{a_2}}     
        & 
        \dots 
        & 
        \dposT{k_m}{l_m}{a_n} 
        \ar@/_1pc/[ulll]^{K_m^{a_n}}     
        \\
        &
        \dpos{k_m}{l_m}
        \ar[uul]^{K_m}
        \ar[ur]|{b(\dposT{k_m}{l_m}{a_1})} 
        \ar@/_1pc/[urr]|{b(\dposT{k_m}{l_m}{a_2})} 
        \ar@/_2pc/[urrrr]|{b(\dposT{k_m}{l_m}{a_n})}                
    }
\]
\caption{Colimit computing $\deltaFun(O)$.}
\label{fig:colimit-delta}
\end{figure}
It is the colimit of $m$ cospans with vertex $\dpos{k_i}{l_i}$. Each cospan is similar to the cospan in (\ref{diag:delta}), but its legs include all morphisms $K_i^{a} \colon \dposT{k_i}{l_i}{a} \to \deltaFun(O)$, for all $a \in Act$, instead of a single morphism for a given $a$. This means that, for each set of causes $K_i$, in $\deltaFun(O)$ we have fresh events labeled by all possible actions.

Notice that $\deltaFun(O)$ and $old(O)$ do not depend on $K$ and $a$. We can recover $new$ maps as follows
\[
    new(O,K_i,a) = K^a_i \circ \top(\dposT{k}{l_i}{a}) \colon \dpos{1}{a} \to \deltaFun(O)
\]
Given a morphism $\sigma \colon O \to O'$, we denote $\deltaFun(\sigma) \colon \deltaFun(O) \to \deltaFun(O')$ the corresponding morphism induced by the universal property of the above colimit. Since the colimit in \autoref{fig:colimit-delta} is formed by many diagrams like the inner pushout in (\ref{diag:delta}), by the universal property of pushouts there are unique maps 
\[
\epsilon(O,K_i,a) \colon \delta(O,K_i,a) \to \deltaFun(O) \enspace .
\]
Then we can relate $\deltaFun(O)$ and each $old(O,K_i,a)$
\[
    old(O) =  \epsilon(O,K_i,a) \circ old(O,K_i,a) \colon O \to \deltaFun(O)
\]
and see how each $\ext{\sigma}$ ``embeds'' into $\deltaFun(\sigma)$, namely
\[
    \xymatrix@C+3ex{
        \delta(O,K_i,a) \ar[d]_{\ext{\sigma}} \ar[r]^-{\epsilon(O,K_i,a)} & \deltaFun(O) \ar[d]^{\deltaFun(\sigma)} \\
        \delta(O',\sigma(K_i),a) \ar[r]_-{\epsilon(O',\sigma(K_i),a)} & \deltaFun(O')
    }
\]
The intuition is that $\deltaFun(\sigma)$ acts as $\sigma$ on old events (as all $\ext{\sigma}$ do) and as the specific $\ext{\sigma}$ on new ones. Since each $\ext{\sigma}$ is an order-embedding (\autoref{lem:o-delta-diag}), also $\deltaFun(\sigma)$ is, so $\deltaFun(\sigma)$ is a morphism of $\catO$. This means that $\delta$ defines a proper \emph{allocation endofunctor} on $\catO$.
\begin{example}
Suppose $Act = \{c,d\}$ and let $O$ be the discrete abstract poset $\{e_a,e'_b\}$. Then $\deltaFun(O)$ contains $new(O,\emptyset,c)$, $new(O,\emptyset,d)$, and the following pairs (we omit reflexive ones):
\begin{alignat*}{3}
    e_a &\lss new(O,\{e_a\},c) 
    &&& \qqquad e_a &\lss new(O,\{e_a\},d) 
    \\
    e_a &\lss new(O,\{e_a,e'_b\},c)
    &&& \qqquad e_a &\lss new(O,\{e_a,e'_b\},d)
    \\[2ex]
    e'_b &\lss new(O,\{e'_b\},c) 
    &&& \qqquad e'_b &\lss new(O,\{e'_b\},d) 
    \\
    e'_b &\lss new(O,\{e_a,e'_b\},c)
    &&& \qqquad e'_b &\lss new(O,\{e_a,e'_b\},d)
\end{alignat*}
\end{example}
\begin{remark}
Our definition of $\deltaFun$ may not seem the best one, as it generates a new event for each possible set of causes and each label, whereas a transition only generates one of these events. However, having a functor on $\catO$ allows us to lift it to presheaves in a way that ensures the existence of both left and right adjoint (giving Kan extensions along $\deltaFun$) for the lifted functor, and then preservation of both limits and colimits, which is essential for coalgebras employing such functor. Generation of unused events is not really an issue: as we will see later, it is always possible to recover the support of a P-marking, i.e., the poset formed by events actually appearing in it.
\end{remark}
Now we look at the category $\pshO$ of presheaves on labeled posets. Since $\catO$ is small it follows that $\pshO$ is locally finitely presentable and has all limits and colimits, in particular products and coproducts. The following functors are relevant for us.

\paragraph{Presheaf of event names} $\enames \colon \catO \to \catSet$ maps $O$ to the set $|O|$. Formally
\[
	\enames = \sum_{a \in Act} \Hom_{\catO}(\dpos{1}{a},-)
\]
where $e_a \in |O|$ is represented as a morphism $\dpos{1}{a} \to O$. The action of $\enames$ on a morphism $\sigma\colon O \to O'$ gives the function $\lambda e_a \in \enames(O).\sigma \circ e_a$, which renames the event $e_a$ according to $\sigma$.

\paragraph{Finite powerset} $\FinParts \colon \pshO \to \pshO$, defined as $\mathcal{P}_f \circ (-)$, where $\mathcal{P}_f$ is the finite powerset on $\catSet$.
%	%
\paragraph{Event allocation operator} $\Delta \colon \pshO \to \pshO$, given by $(-) \circ \delta$. Explicitly, for $P \colon \catO \to \catSet$ and $O \in |\catO|$, $\Delta P (O) = P( \deltaFun(O) )$. Intuitively, it generates causal markings with additional fresh events.
%	%
\paragraph{Presheaf of labels} $\lpsh \colon \catO \to \catSet$ given by
\[
		 \lpsh(O) = Act \times \FinParts \enames(O)
\]
For each $O \in |\catO|$, this functor gives pairs $(a,K)$ of an action $a$ and a finite set of causes $K$, selected among events in $O$. 

We use these operators to define our behavioral endofunctor.
\begin{definition}[Behavioral functor]
\label{def:beh-fun}
The behavioral functor $B \colon \pshO \to \pshO$ is
\[
	B P = \FinParts (  \lpsh \times \Delta P ) \enspace .
\]
\end{definition}
To understand this definition, consider a $B$-coalgebra $(P,\rho)$. Given $O \in |\catO|$ and $p \in P(O)$, $\rho_O(p)$ is a finite set of triples $(a,K,p')$, meaning that $p'$ is the continuation of $p$ after observing $K \caus a$. The continuation always belongs to $\Delta P (O)$, because every transition allocates a new event.

The category $\catCoalg{B}$ is well-behaved: it has a final $B$-coalgebra, and the behavioral equivalence it induces coincides with $B$-bisimilarity. This is thanks to the following properties.
\begin{proposition}
\label{prop:B-acc}
	$B$ is accessible and covers pullbacks.
\end{proposition}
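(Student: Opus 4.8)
The statement has two independent parts: accessibility and covering pullbacks. Accessibility I would handle first, since it is essentially a bookkeeping exercise using the closure properties listed in the excerpt. The functor $B = \FinParts(\lpsh \times \Delta(-))$ is built from three pieces. The constant-ish presheaf $\lpsh$ plays no role in accessibility except through the product; the finite powerset functor $\FinParts$ on $\pshO$ is accessible (it is the pointwise lifting of $\mathcal P_f$ on $\catSet$, which is well known to be accessible, and pointwise liftings preserve filtered colimits since colimits in $\pshO$ are computed pointwise); products with a fixed object preserve filtered colimits in a locally finitely presentable category; and composition of accessible functors is accessible. So the only non-routine point is the accessibility of $\Delta = (-)\circ\deltaFun$, i.e.\ precomposition with the allocation endofunctor $\deltaFun\colon\catO\to\catO$. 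Precomposition functors $\catSet^{\catO}\to\catSet^{\catO}$ of the form $(-)\circ F$ always preserve all colimits (again because colimits of presheaves are pointwise, and $((-)\circ F)$ evaluated at $O$ is just evaluation at $FO$), hence in particular filtered ones. I would spell this out: for a filtered diagram $P_i$ in $\pshO$, $\Delta(\mathrm{colim}\, P_i)(O) = (\mathrm{colim}\, P_i)(\deltaFun O) = \mathrm{colim}(P_i(\deltaFun O)) = \mathrm{colim}(\Delta P_i(O))$. Assembling: $B$ is accessible. (One should also note $\pshO$ is locally finitely presentable, which follows from $\catO$ being small and $\catSet$ being locally finitely presentable, as recalled in the excerpt.)

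For "covers pullbacks" I would argue functor-by-functor as well, using that this property is stable under composition and under products with a constant, and that a pointwise lifting of a $\catSet$-functor covering pullbacks again covers pullbacks (since in $\pshO$ images and pullbacks are computed pointwise, the comparison map $m$ is pointwise the corresponding $\catSet$-comparison map, so it is a pointwise cover, hence a cover). Thus it suffices to check: (a) $\mathcal P_f$ on $\catSet$ covers pullbacks; (b) $\Delta$ covers pullbacks. For (b), precomposition $(-)\circ\deltaFun$ sends a cospan in $\pshO$ to its value at $\deltaFun(-)$ applied everywhere, and since pullbacks, images and covers in $\pshO$ are all pointwise, $\Delta$ of the comparison map at $O$ is the comparison map of the transported cospan at $\deltaFun O$ — so $\Delta$ preserves covers of pullbacks and in particular covers pullbacks (indeed any functor of the form $(-)\circ F$ preserves pullbacks and epis/covers on the nose, which is stronger). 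The genuine content is (a): I would verify directly that for $f_1\colon X_1\to X_3$, $f_2\colon X_2\to X_3$ in $\catSet$ with pullback $P$, the map $\mathcal P_f P \to P'$ (where $P'$ is the pullback of $\mathcal P_f X_1 \to \mathcal P_f X_3 \leftarrow \mathcal P_f X_2$) is surjective. Concretely $P' = \{(A_1,A_2) \mid f_1(A_1) = f_2(A_2)\}$ and given such a pair one takes $A = \{(x_1,x_2)\in P \mid x_1\in A_1,\ x_2\in A_2\} \subseteq P$; since $f_1(A_1)=f_2(A_2)$, the projections of $A$ are exactly $A_1$ and $A_2$, and $A$ is finite, so $A \in \mathcal P_f P$ maps to $(A_1,A_2)$. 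Hence $m$ is surjective, i.e.\ a cover. (This is the standard fact that $\mathcal P_f$ covers pullbacks even though it does not preserve weak pullbacks — matching the remark in the excerpt.)

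The main obstacle — or rather, the only place where one must be a little careful — is making sure the pointwise/stability lemmas I am invoking are actually applicable to $\Delta$: $\Delta$ is precomposition, not a pointwise lifting, so it is even better behaved (it preserves \emph{all} limits and colimits, and all monos/epis, since it is just reindexing), and this should be stated cleanly rather than waved at. I would therefore organize the proof as: (1) recall $\pshO$ is locally finitely presentable and that limits/colimits/images in $\pshO$ are pointwise; (2) a lemma that $\Delta=(-)\circ\deltaFun$ preserves filtered colimits and covers pullbacks (trivially, as reindexing); (3) a lemma that the pointwise lifting $\FinParts$ inherits accessibility and covering-pullbacks from $\mathcal P_f$ on $\catSet$, together with the elementary verification of those two properties for $\mathcal P_f$; (4) note $\lpsh\times(-)$ preserves filtered colimits and covers pullbacks (product with a fixed object in an lfp category); (5) conclude by closure of both properties under composition. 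No step requires more than the facts already assembled in Section~\ref{sec:back}.
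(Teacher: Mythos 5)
Your proof is correct and follows the same decomposition as the paper's: accessibility of $B$ via closure of accessible functors under products and composition, and covering of pullbacks by exhibiting $B$ as $\FinParts$ applied to a pullback-preserving functor. The differences lie in how the ingredients are justified. For $\Delta = (-)\circ\deltaFun$ the paper invokes its adjoints (left and right Kan extensions along $\deltaFun$): the right adjoint gives accessibility, the left adjoint gives preservation of pullbacks. You instead argue directly that precomposition/reindexing preserves all limits and colimits pointwise, which is more elementary and avoids Kan extensions altogether; both arguments are valid. For $\FinParts$ covering pullbacks the paper cites the literature, whereas you verify the $\catSet$-level fact explicitly (the subset $A=\{(x_1,x_2)\in P \mid x_1\in A_1,\ x_2\in A_2\}$ witnesses surjectivity of the comparison map) and transport it pointwise, which makes the proof self-contained. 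One small caution: ``covers pullbacks'' is not in general closed under composition of two such functors (the outer functor must additionally preserve covers), so the loose phrase ``stable under composition'' should be replaced by the precise assembly you in fact carry out in steps (2)--(5): the inner functor $\lpsh\times\Delta(-)$ preserves pullbacks on the nose and only the outermost $\FinParts$ merely covers them --- which is exactly the paper's $B=\FinParts\circ B'$ structure.
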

$B$-coalgebras can be regarded as particular LTSs whose states are elements of presheaves, i.e., pairs $O \rhd p$. 
\begin{definition}[\oilts]
\label{def:po-ilts}
An \emph{$\catO$-indexed labeled transition system} (\oilts) is a pair $(P,\Trans{})$ of a presheaf $P \colon \catO \to \catSet$ and a finitely-branching transition relation $\Trans{\quad} \subseteq  \el P \times \el \lpsh \times \el P$ of the form:
\[
	O \rhd p \Trans{K \caus a} \deltaFun(O) \rhd p' \qquad (a,K) \in \lpsh (O) 
\]
such that, for each morphism $\sigma\colon O \to O'$ in $\catO$:
\begin{enumerate}[$(i)$]
	\item if $O \rhd p \Trans{l} \deltaFun(O) \rhd p'$ then $O' \rhd p[\sigma] \Trans{l[\sigma]} \deltaFun(O') \rhd p'[\deltaFun(\sigma) ]$ (transitions are \emph{preserved} by $\sigma$);
	\item if $O' \rhd p[\sigma] \Trans{l} \deltaFun(O') \rhd p'$ then there are $l'$ and $\deltaFun(O) \rhd p''$ such that $l'[\sigma] = l$, $p''[\deltaFun(\sigma)] = p'$ and $O \rhd p \Trans{l'} \deltaFun(O) \rhd p''$ (transitions are \emph{reflected} by $\sigma$);
\end{enumerate}
\end{definition}
Now, notice that labels and continuations of \oilts s agree with those generated by $B$, and (i) and (ii) say that the transition relation behaves like a natural transformation. Therefore we have the following correspondence.
\begin{proposition}
\label{prop:coalg-poilts}
\oilts s are in bijection with $B$-coalgebras.
\end{proposition}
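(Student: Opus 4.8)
The plan is to exhibit the bijection explicitly at the level of structure maps, with the carrier presheaf $P \colon \catO \to \catSet$ fixed on both sides. Given a $B$-coalgebra $(P,\rho)$, define a relation on $\el P$ by declaring $O \rhd p \Trans{K \caus a} \deltaFun(O) \rhd p'$ exactly when $(a,K,p') \in \rho_O(p)$; conversely, given an \oilts{} $(P,\Trans{})$, set $\rho_O(p) = \{(a,K,p') \mid O \rhd p \Trans{K \caus a} \deltaFun(O) \rhd p'\}$. These two assignments are visibly inverse to one another once we check that each lands in the intended class of structures, so the proof reduces to matching the defining conditions on the two sides.

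First, the typing matches by construction. By \autoref{def:beh-fun}, $BP(O) = \FinParts(\lpsh(O) \times P(\deltaFun(O)))$, so a triple in $\rho_O(p)$ consists of a pair $(a,K) \in \lpsh(O)$ — an action together with a finite set of causes drawn from $|O|$ — and a continuation $p' \in P(\deltaFun(O)) = \Delta P(O)$; this is precisely the shape $O \rhd p \Trans{K \caus a} \deltaFun(O) \rhd p'$ demanded by \autoref{def:po-ilts}. Moreover $\rho_O(p)$ is a finite set iff the transition relation is finitely-branching at $O \rhd p$, which is the finite-branching requirement. Hence the only substantive point is that naturality of $\rho$ is equivalent to conditions $(i)$ and $(ii)$ of \autoref{def:po-ilts}.

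For this, fix $\sigma \colon O \to O'$ in $\catO$ and unwind $BP(\sigma)$. Since $\lpsh(\sigma)$ acts on $(a,K)$ by $(a,\sigma(K))$ — the $Act$ component being constant and $\FinParts\enames(\sigma)$ sending $K$ to $\sigma(K)$ — and $\Delta P(\sigma) = P(\deltaFun(\sigma))$ sends $p'$ to $p'[\deltaFun(\sigma)]$, the naturality square of $\rho$ evaluated at $p$ asserts
\[
    \rho_{O'}(p[\sigma]) = \{\, (a,\sigma(K), p'[\deltaFun(\sigma)]) \mid (a,K,p') \in \rho_O(p) \,\} \enspace .
\]
The inclusion $\supseteq$ unpacks to: every transition $O \rhd p \Trans{K \caus a} \deltaFun(O) \rhd p'$ forces $O' \rhd p[\sigma] \Trans{\sigma(K) \caus a} \deltaFun(O') \rhd p'[\deltaFun(\sigma)]$, which is exactly preservation, condition $(i)$ (with $l[\sigma] = \sigma(K)\caus a$ for $l = K\caus a$). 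The inclusion $\subseteq$ unpacks to: every transition out of $O' \rhd p[\sigma]$ arises in this way from some transition of $O \rhd p$ — that is, there are $l'$ and $p''$ with $l'[\sigma] = l$, $p''[\deltaFun(\sigma)] = p'$ and $O \rhd p \Trans{l'} \deltaFun(O) \rhd p''$ — which is exactly reflection, condition $(ii)$. So the square commutes for all $\sigma$ and $p$ iff $(i)$ and $(ii)$ both hold, which finishes the correspondence; well-definedness and the mutual-inverse checks are then purely formal.

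The main obstacle is routine rather than conceptual: one must carefully compute the action of $B$ on morphisms, tracking that $\lpsh$ is constant on the action component and acts as direct image on the set of causes, and that $\Delta$ is precomposition with $\deltaFun$, so that $BP(\sigma)$ really does ``relabel the causes by $\sigma$ and push the continuation forward along $\deltaFun(\sigma)$''. Once this is pinned down, the two set inclusions line up with preservation and reflection verbatim, and no further work is needed.
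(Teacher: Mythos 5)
Your proof is correct and follows exactly the route the paper takes: the paper gives no separate appendix proof of \autoref{prop:coalg-poilts}, justifying it only by the remark that labels and continuations of \oilts s match those generated by $B$ and that conditions $(i)$ and $(ii)$ of \autoref{def:po-ilts} express naturality of the structure map, which is precisely the correspondence you spell out (finite branching $\leftrightarrow$ $\finParts$, preservation $\leftrightarrow$ one inclusion of the naturality square, reflection $\leftrightarrow$ the other). Your write-up is simply a more explicit version of the same argument.
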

The natural notion of bisimulation for these transition systems is \emph{$\catO$-indexed bisimulation}.
\begin{definition}[$\catO$-indexed bisimulation]
\label{def:oibisim}
An \emph{$\catO$-indexed bisimulation} on an \oilts{} $(P,\Trans{\quad})$ is an indexed family of relations $\{ R_O \subseteq P(O) \times P(O) \}_{O \in |\catO|}$ such that, for all $(p,q) \in R_O$:
\begin{enumerate}[$(i)$]
	\item if $O \rhd p \Trans{K \caus a} \deltaFun(O) \rhd p'$ then there is $\deltaFun(O) \rhd q'$ such that $O \rhd q \Trans{K \caus a} \deltaFun(O) \rhd q'$ and $(p',q') \in R_{\deltaFun(O)}$;
	\label{oibisim-sim}
	\item for all $\sigma \colon O \to O'$, $(p,q) \in R_O$ if and only if $(p[\sigma]_P,q[\sigma]_P) \in R_{O'}$.
	\label{oibisim-clos}
\end{enumerate}
\end{definition}
This definition closely resembles that of \acb-bisimulations (\autoref{def:acbisim}). We have an additional condition \ref{oibisim-clos}, requiring closure under morphisms of $\catO$. This is not satisfied by all \acb-bisimulations, but it holds for the greatest one (\autoref{thm:acbisim-clos}).
We have the following correspondence.
\begin{proposition}
Let $(P,\rho)$ be a $B$-coalgebra. Then $B$-bisimulations on $(P,\rho)$ are in bijection with $\catO$-indexed bisimulations on the induced \oilts.
\label{prop:bbsim-obisim-eq}
\end{proposition}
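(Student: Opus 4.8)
The plan is to exploit the two bijections already established — \oilts{}s with $B$-coalgebras (\autoref{prop:coalg-poilts}) and the structural description of $B$ — and to chase the defining diagrams on both sides. Fix a $B$-coalgebra $(P,\rho)$ and let $(P,\Trans{})$ be the induced \oilts. First I would unfold what a $B$-bisimulation $R$ on $(P,\rho)$ is: by \autoref{def:coalg-bisim} it is a relation $R$ in $\pshO$, i.e.\ a presheaf with a jointly-monic span $P \leftarrow R \rightarrow P$, such that $\rho$ restricts to a map $r \colon R \to \overline{B}R$. Since limits and images in $\pshO$ are computed pointwise (see the property on presheaf categories), $R$ being a relation in $\pshO$ means exactly that each $R_O \subseteq P(O)\times P(O)$ is a binary relation and, for every $\sigma \colon O \to O'$, the action $[\sigma]_P \times [\sigma]_P$ restricts to $R_O \to R_{O'}$; moreover, because projections are natural transformations, this restriction is \emph{both} ways, giving precisely condition~\ref{oibisim-clos} of \autoref{def:oibisim}. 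So the ``shape'' of a $B$-bisimulation and the closure clause of an $\catO$-indexed bisimulation already coincide; what remains is to match the ``transfer'' clause.

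The core of the argument is a pointwise computation of the relation lifting $\overline{B}R$ at an object $O$, entirely analogous to the computation of $\overline{B}R$ for $B_{flts}$ recalled after \autoref{def:coalg-bisim}, but carried out with $B = \FinParts(\lpsh \times \Delta(-))$. Unwinding, $(\overline{B}R)(O)$ is the set of pairs $(X_1,X_2) \in BP(O) \times BP(O)$ such that every $(a,K,p') \in X_1$ is ``witnessed'' by some $(a,K,(p',q')) \in (\lpsh \times \Delta R)(O) = Act \times \FinParts\enames(O) \times R(\deltaFun O)$ with $(a,K,q') \in X_2$ and $(p',q') \in R_{\deltaFun O}$ — and symmetrically for elements of $X_2$. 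Here I use that $\Delta R(O) = R(\deltaFun O)$ by definition of $\Delta$, that finite powerset and binary products are computed pointwise, and that the relation lifting of the finite powerset is the Egli--Milner lifting (exactly as in the $B_{flts}$ case). Then the existence of $r \colon R \to \overline{B}R$ with $\rho \circ (\text{incl}) = (\text{incl}) \circ r$ — noting $r$, if it exists, is forced to be $(p,q) \mapsto (\rho_O p, \rho_O q)$ — says precisely: for all $(p,q) \in R_O$, $(\rho_O p, \rho_O q) \in (\overline{B}R)(O)$, which by the unwinding above is exactly clause~\ref{oibisim-sim} of \autoref{def:oibisim} once we translate $(a,K,p') \in \rho_O(p)$ into $O \rhd p \Trans{K \caus a} \deltaFun(O) \rhd p'$ via \autoref{prop:coalg-poilts}. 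This establishes that $R$ is a $B$-bisimulation on $(P,\rho)$ iff the same family $\{R_O\}$ is an $\catO$-indexed bisimulation on the induced \oilts, and the assignment $R \mapsto \{R_O\}$ is evidently a bijection since both notions are, at bottom, families of binary relations on the sets $P(O)$.

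The step I expect to be the main obstacle is the careful pointwise identification of $\overline{B}R$ — in particular verifying that the \emph{image} defining the relation lifting (the image of $BR \to B(P\times P) \to BP \times BP$) is computed correctly pointwise and yields the Egli--Milner-style description, rather than something coarser, and that the jointly-monic condition on the span is preserved by $B$ in the right way so that $r$ is uniquely determined. This is routine given that images and monos in $\pshO$ are pointwise and that $\FinParts$, $\lpsh \times -$ and $\Delta$ each interact well with images, but it requires assembling these facts in the correct order; the rest of the proof is bookkeeping through the bijection of \autoref{prop:coalg-poilts}.
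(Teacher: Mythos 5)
Your proof follows essentially the same route as the paper's: you read off the closure clause of \autoref{def:oibisim} from the fact that a $B$-bisimulation is a presheaf whose projections are natural, and you read off the transfer clause from a pointwise, Egli--Milner-style computation of $\overline{B}R$ (with continuations landing in $R(\deltaFun O)$ since $\Delta R(O)=R(\deltaFun O)$), translated into transitions via \autoref{prop:coalg-poilts} --- which is exactly the paper's argument. The only caveat is your remark that naturality gives the closure condition ``both ways'': strictly, naturality of the projections only yields preservation of the relation along morphisms of $\catO$, not reflection, but the paper's own proof glosses this direction in the same way.
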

Notice that, unlike Aczel-Mendel bisimulations, a $B$-bisimulation (namely, a Hermida-Jacobs one) needs not be the carrier of a $B$-coalgebra in order to be a bisimulation. This strong requirement is the reason why some $\catO$-indexed bisimulations cannot be turned into Aczel-Mendler ones (see \cite[3.3, Anomaly]{Staton07}).

We now show that \aclts{} can be represented as an \oilts. We form a presheaf from P-markings as follows.
\begin{definition}[Presheaf of P-markings]
The \emph{presheaf of P-markings} $\pshMark \colon \catO \to \catSet$ is given by
\[
	\pshMark (O) = \{ c \mid \text{$O \rhd c$ is an abstract P-marking} \}
	\qquad
	\pshMark(\sigma \colon O \to O') = \lambda ( O \rhd c).O' \rhd \dclos{ (c \sigma) }{O'}
\]
\end{definition}
The action of $\pshMark$ on morphisms needs to apply the closure operator, after renaming the causal marking: this guarantees that the result is a proper P-marking. The functor $\pshMark$ has the following useful property.
\begin{lemma}
\label{lem:mfun-pull}
$\pshMark$ preserves pullbacks.
\end{lemma}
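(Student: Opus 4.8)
The plan is to verify preservation of pullbacks pointwise in $\catSet$, exploiting the fact that pullbacks in $\catO$ exist (\autoref{prop:o-small-pull}) and are built on underlying sets. Concretely, take a pullback square in $\catO$ with vertex $O$ over a cospan $O_1 \xrightarrow{\sigma_1} O_3 \xleftarrow{\sigma_2} O_2$, and we must show that $\pshMark(O)$ together with the maps $\pshMark(\pi_1),\pshMark(\pi_2)$ is a pullback of $\pshMark(O_1) \xrightarrow{\pshMark(\sigma_1)} \pshMark(O_3) \xleftarrow{\pshMark(\sigma_2)} \pshMark(O_2)$ in $\catSet$. So I would fix a pair of P-markings $O_1 \rhd c_1$ and $O_2 \rhd c_2$ agreeing in $\pshMark(O_3)$, i.e. $\dclos{(c_1\sigma_1)}{O_3} = \dclos{(c_2\sigma_2)}{O_3}$, and produce a unique $O \rhd c$ with $\dclos{(c\pi_1)}{O_1} = c_1$ and $\dclos{(c\pi_2)}{O_2} = c_2$.

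First I would recall that, since $\catO$'s morphisms are order-embeddings, $\pi_1,\pi_2$ are injective and the underlying-set square is a pullback in $\catSet$; so I can treat $|O|$ as the set of pairs $(x_1,x_2) \in |O_1| \times |O_2|$ with $\sigma_1(x_1) = \sigma_2(x_2)$, with the order being the restriction of the product order. The candidate $c$ is defined token-wise: for each $K \caus s \in c_1$ I would set its causes to be the events of $O$ whose first projection lies in $K$ — equivalently $c$ is the "pullback" of $c_1$ and $c_2$ over $c_3 := \dclos{(c_1\sigma_1)}{O_3}$, matching up tokens with the same underlying place and taking for causes exactly the pairs $(y_1,y_2)$ with $y_1$ a cause in $c_1$ and $y_2$ a cause in $c_2$ that are identified in $O_3$. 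The key calculations are then: (a) $c$ is down-closed w.r.t.\ $O$, so $O \rhd c$ is a genuine P-marking; (b) $\dclos{(c\pi_i)}{O_i} = c_i$ for $i=1,2$ — here the closure operator re-adds precisely the causes in $O_i$ below the projected ones, and down-closure of $c_i$ w.r.t.\ $O_i$ plus compatibility over $O_3$ makes this come out right; (c) uniqueness, which follows because any competing $O \rhd c'$ satisfying the projection equations must have, for each token, causes projecting onto $K_1$ in $O_1$ and onto $K_2$ in $O_2$, and the pullback property of the underlying sets forces those causes to be exactly the matched pairs.

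The main obstacle I anticipate is bookkeeping around the closure operator in step (b): $\pshMark$ on morphisms is not just "rename then restrict" but "rename then down-close", so one has to check that closing $c\pi_i$ inside $O_i$ does not over- or under-shoot $c_i$. The $\subseteq$ direction is immediate since everything lies below a projected cause; the $\supseteq$ direction needs that every cause $z \in K_i$ of a token in $c_i$ is below some matched pair's $i$-th projection — which holds because the \emph{maximal} causes of that token in $O_i$ are matched (their $\sigma_i$-images appear among the maximal causes of the corresponding token in $c_3$, hence are hit from $O_j$ as well), and $z$ lies below one of those maxima. I would also double-check the edge case where $O_1 \rhd c_1$ and $O_2 \rhd c_2$ do not have compatible underlying \emph{markings} $|c_1|,|c_2|$: but the agreement condition in $\pshMark(O_3)$ forces $|c_1| = |c_3| = |c_2|$ since renaming and closure leave the underlying multiset untouched, so the token-matching is well-defined. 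Everything else is routine diagram-chasing in $\catSet$.
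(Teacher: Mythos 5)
Your proposal is correct and follows essentially the same route as the paper's proof: a pointwise verification in $\catSet$ that $\pshMark(O)$ with its projections satisfies the universal property, with the compatible causal marking built from matched pairs of causes and the projection equations checked via the fact that order-embeddings preserve maxima, so the maximal causes on the two sides are matched over $O_3$. Your candidate (all matched pairs of causes) coincides with the paper's (the down-closure in $O$ of the matched pairs of maximal causes), so the difference is only presentational.
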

Intuitively, thanks to this property, if we take $c \in \pshMark(O)$ and all subposets $O'$ of $O$ such that $\pshMark(O')$ contains a ``version'' of $c$ (typically with fewer events) then the set obtained by applying $\pshMark$ to the pullback of these subposets, i.e., to their minimal common subposet, still contains a version of $c$. This will be essential, in the next section, to compute minimal representatives of P-markings.

We are ready to translate \aclts{} to an \oilts{}. 
\begin{definition}[Causal \oiltsC]
\label{def:dd-ilts}
The \emph{Causal} \oilts{} (\oiltsC) $(\pshMark,\Trans{\quad})$ is the smallest one generated by the rule
\[
	\frac
	{
		O \rhd c \actrans{K}{a} \delta(O,K,a) \rhd c'
	}
	{
		O \rhd c \acTrans{K}{a} \deltaFun(O) \rhd c'[\epsilon(O,K,a)]
	}
\]
\end{definition}
This translation does not affect bisimilarities: two states can do the same transitions in \aclts{} if and only if they can do the same transitions also in \oiltsC{}; continuations only differ for an order-embedding, but by \autoref{thm:acbisim-clos} and \autoref{def:oibisim}\ref{oibisim-clos}, the $\catO$-indexed bisimilarity and $\acbisim$ are closed under order-embeddings.

We call \emph{causal} coalgebra the $B$-coalgebra equivalent to $(\pshMark,\Trans{\quad})$. We have the following theorem, which collects the results of this section, instantiated to the causal coalgebra.
\begin{theorem}
$\catO$-indexed bisimulations on $(\pshMark , \Trans{\quad})$ are equivalent to:
\begin{itemize}
	\item  $B$-bisimulations on the causal coalgebra;
	\item \acb-bisimulations closed under order-embeddings.\end{itemize}	
\label{thm:bisim-equiv}
\end{theorem}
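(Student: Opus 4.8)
The plan is to chain together the correspondences already established in this section, using the Causal \oiltsC{} $(\pshMark,\Trans{\quad})$ as the pivot. The chain has three links: (1) $\catO$-indexed bisimulations on $(\pshMark,\Trans{\quad})$ correspond to $B$-bisimulations on the causal coalgebra; (2) they correspond to $\catO$-indexed bisimulations on the \oilts{} directly induced by \aclts{}; (3) $\catO$-indexed bisimulations on the latter correspond to \acb-bisimulations that are closed under order-embeddings. Link (1) is nearly immediate from the general machinery of the section: by \autoref{prop:coalg-poilts} \oilts s are in bijection with $B$-coalgebras, so the causal coalgebra is well-defined, and by \autoref{prop:bbsim-obisim-eq} $B$-bisimulations on it are in bijection with $\catO$-indexed bisimulations on the induced \oilts, which is exactly $(\pshMark,\Trans{\quad})$ by \autoref{def:dd-ilts}. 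The only thing to check here is that $(\pshMark,\Trans{\quad})$ really is an \oilts{} in the sense of \autoref{def:po-ilts}, i.e.\ that its transition relation is finitely branching and preserved/reflected by morphisms of $\catO$; finite branching is inherited from \aclts{} being finitely branching (\autoref{ssec:aclts}), and preservation/reflection is precisely \autoref{prop:aclts-clos}, translated through the identities $\delta(O',\sigma(K),a)[\epsilon] = \deltaFun(O')$ and $\ext{\sigma}$ embedding into $\deltaFun(\sigma)$ established around \eqref{diag:delta}.

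For link (3), the key point is to compare an $\catO$-indexed bisimulation $\{R_O\}$ (\autoref{def:oibisim}) with an \acb-bisimulation $\{R_O\}$ (\autoref{def:acbisim}): both are families of relations on P-markings indexed by abstract posets, and the simulation clause \autoref{def:oibisim}\ref{oibisim-sim} is literally the \acb-bisimulation clause once we use the fact (noted after \autoref{def:dd-ilts}) that $O \rhd c \actrans{K}{a} \delta(O,K,a) \rhd c'$ iff $O \rhd c \acTrans{K}{a} \deltaFun(O) \rhd c'[\epsilon(O,K,a)]$, and that $\epsilon(O,K,a)$ is an order-embedding so passing a continuation through it does not collapse or identify states up to the closure $\catO$-indexed bisimilarity is stable under. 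The extra clause \autoref{def:oibisim}\ref{oibisim-clos} is exactly the requirement of closure under order-embeddings: every morphism of $\catO$ is an order-embedding, so \ref{oibisim-clos} says $(c_1,c_2) \in R_O$ iff $(\dclos{(c_1\sigma)}{O'}, \dclos{(c_2\sigma)}{O'}) \in R_{O'}$, which is the condition in \autoref{thm:acbisim-clos}. Hence an \acb-bisimulation closed under order-embeddings is the same data as an $\catO$-indexed bisimulation, and vice versa.

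Finally, for the ``equivalence'' to be exactly the statement claimed, we should check that the \emph{greatest} relations match up under these bijections, not just that the families of bisimulations are in bijection; this follows because the bijections preserve inclusion of relations, so they send greatest elements to greatest elements. One should also remark that the class of \acb-bisimulations closed under order-embeddings is non-trivial and in particular contains $\acbisim$ itself --- this is \autoref{thm:acbisim-clos} --- so the equivalence is not vacuous, and the greatest $\catO$-indexed bisimulation on $(\pshMark,\Trans{\quad})$ corresponds under the chain to $\acbisim$.

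**Main obstacle.** I expect the only real work to be the bookkeeping in link (1)/(3): verifying that transferring a continuation $c'$ along the comparison map $\epsilon(O,K,a)$ (or along $\deltaFun(\sigma)$ versus $\ext{\sigma}$) does not change the bisimulation class, i.e.\ that the square relating $\delta(O,K,a)$, $\deltaFun(O)$, $\delta(O',\sigma(K),a)$, $\deltaFun(O')$ commutes on the nose and that the induced renaming of P-markings is compatible with down-closure. All the ingredients --- \autoref{prop:aclts-clos}, \autoref{lem:o-delta-diag}, \autoref{lem:mfun-pull}, and the explicit description of $\epsilon$ and $\deltaFun(\sigma)$ --- are already in place, so this is a matter of assembling them carefully rather than proving anything genuinely new; there is no conceptual difficulty, just the risk of an index slip.
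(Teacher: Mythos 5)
Your proposal is correct and follows essentially the same route as the paper: the coalgebraic side is an instance of \autoref{prop:bbsim-obisim-eq}, and the comparison with \acb-bisimulations is done by translating transitions back and forth through \autoref{def:dd-ilts} and using closure under the order-embedding $\epsilon(O,K,a)$ to transfer relatedness of continuations between $R_{\delta(O,K,a)}$ and $R_{\deltaFun(O)}$, with \autoref{def:oibisim}(ii) supplying closure in the converse direction. Your intermediate ``link (2)'' is vacuous (the \oilts{} induced by \aclts{} \emph{is} $(\pshMark,\Trans{\quad})$), but this does not affect correctness.
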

In particular, we have that the greatest $\catO$-indexed bisimulation, $B$-bisimilarity on the causal coalgebra and $\acbisim$ are all equivalent, thanks to \autoref{thm:acbisim-clos}. These, by \autoref{prop:B-acc}, are equivalent to the kernel of the unique morphism from the causal coalgebra to the final one.

\section{From coalgebras to HD-automata}
\label{sec:hd}
\newcommand{\symg}{\mathtt{G}}
In order to give a characterization of the causal coalgebra in terms of named sets, we employ the results of \cite{CianciaKM10}. Here authors define a \emph{symmetry group over a category $\catC$} to be a collection of morphisms in $\catC[c,c]$, for any $c \in |\catC|$, which is a group w.r.t.\ composition of morphisms. Then they take families of such groups as their notion of generalized named sets. A first result establishes the equivalence between these families and \emph{coproducts of symmetrized representables}, that are functors of the form
\[
	\sum_{i \in I} \Hom_\catC(c_i,\_)_{/\Phi_i}
\]
where $\Phi_i$ is a symmetry group over $\catC$ with domain $c_i$, and the quotient identifies morphisms that are obtained one from the other by precomposing elements of $\Phi_i$.
These functors, in turn, are shown to be isomorphic to \emph{wide-pullback-preserving} presheaves on $\catC$, a wide pullback being the limit of a diagram with an arbitrary number of morphisms pointing to the same object (pullbacks are a special case, with two such morphisms). The described results are summarized in the following theorem from \cite{CianciaKM10}.
\begin{theorem}
Let $\catC$ be a category that is small, has wide pullbacks, and such that all its morphisms are monic and those in $\catC[c,c]$ are isomorphisms, for every $c \in |\catC|$. Then every wide-pullback-preserving $P \! \in \!|\catSet^\catC|$ is equivalent to a coproduct of symmetrized representables.
\label{thm:hd-crit}
\end{theorem}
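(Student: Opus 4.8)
The plan is to reconstruct $P$ from canonical ``minimal representatives'' living in its category of elements. Recall $\el P$ has objects $c \rhd x$ with $x \in P(c)$, and a morphism $(c \rhd x) \to (c' \rhd x')$ is a $\catC$-morphism $\sigma\colon c \to c'$ with $P\sigma(x) = x'$; since $\catC$ is small, $\el P$ is small and its connected components form a set, which will index the coproduct. The first observations I would make are: (a) wide pullbacks include ordinary pullbacks, so $\catC$ has pullbacks and $P$ preserves them, whence $P$ carries monos to injections; since \emph{every} $\catC$-morphism is monic, every $P\sigma$ is injective, so every morphism of $\el P$ is monic and the target element $x'$ above is determined by $\sigma$ alone; (b) consequently the subobjects of $d \rhd y$ in $\el P$ are precisely the subobjects $e \hookrightarrow d$ in $\catC$ through whose $P$-image $y$ factors.

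The heart of the argument is the \emph{minimal form}: for $d \rhd y$ let $d_y \hookrightarrow d$ be the wide pullback in $\catC$ of all subobjects of $d \rhd y$ as in (b) (a set of morphisms, by smallness). Since $P$ preserves wide pullbacks, $P(d_y) \hookrightarrow P(d)$ is the intersection of those images, so $y$ has a unique preimage $y_0 \in P(d_y)$; then $d_y \rhd y_0$ is a subobject of $d \rhd y$ below every other one, hence has no proper subobjects. I would then prove the pivotal lemma: any $\el P$-morphism $(c \rhd x) \to (d \rhd y)$ restricts to an \emph{isomorphism} of minimal forms $(c_x \rhd x_0) \cong (d_y \rhd y_0)$. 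The idea is to pull each subobject $e \hookrightarrow d$ of $d \rhd y$ back along $\sigma$; preservation of that pullback plus injectivity of $P\sigma$ shows the pulled-back subobject still witnesses $x$, so $c_x$ factors through it; taking $e = d_y$, and noting that the composite $c_x \hookrightarrow d$ is itself a subobject of $d \rhd y$, forces $c_x$ and $d_y$ to coincide as subobjects of $d$. It follows that two objects of $\el P$ have isomorphic minimal forms iff they lie in the same component; I then pick one representative $c_i \rhd x_i$ per component $i \in I$ and set $\Phi_i = \{\phi \in \catC[c_i,c_i] \mid P\phi(x_i) = x_i\}$ --- a subgroup of the automorphism group $\catC[c_i,c_i]$, closure under inverse being immediate by applying $P\phi^{-1}$.

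With this in hand I would define $\theta_i\colon \Hom_\catC(c_i,-)_{/\Phi_i} \to P$ by $[\sigma\colon c_i \to d] \mapsto P\sigma(x_i)$ --- well defined on $\Phi_i$-orbits and natural by functoriality of $P$ --- and show $\theta = \coprod_i \theta_i$ is componentwise bijective, hence a presheaf isomorphism. Surjectivity is read off the minimal forms: for $d \rhd y$ in component $i$, compose an isomorphism $(c_i \rhd x_i) \cong (d_y \rhd y_0)$ with $d_y \hookrightarrow d$. Injectivity is where one works: if $P\sigma(x_i) = P\sigma'(x_j) = y$ then all three objects lie in one component, so $i = j$; both $\sigma$ and $\sigma'$ present $c_i \hookrightarrow d$ as a subobject of $d \rhd y$ (using monicity again), and by the lemma each factors through $d_y \hookrightarrow d$ via an isomorphism $d_y \to c_i$ sending $y_0$ to $x_i$, so $\sigma$ and $\sigma'$ differ by precomposition with an element of $\Phi_i$ and thus $[\sigma] = [\sigma']$. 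A final routine check is that each $\Hom_\catC(c_i,-)_{/\Phi_i}$ really is a presheaf, i.e.\ that postcomposition respects the precomposition $\Phi_i$-action.

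I expect the main obstacle to be the pivotal lemma on restriction to minimal forms and the accompanying bookkeeping: it is exactly where all three hypotheses are used at once --- wide pullbacks to build minimal forms, preservation of pullbacks by $P$ to transport subobjects along $\sigma$ and to make every $P\sigma$ injective, and ``all morphisms monic, all endomorphisms iso'' both to keep ``subobject'' well behaved and to upgrade the resulting comparison maps to automorphisms inside $\Phi_i$. Everything else --- naturality and well-definedness of $\theta_i$, surjectivity, the subgroup property of $\Phi_i$ --- falls out routinely once that lemma is established.
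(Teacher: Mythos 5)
The paper itself contains no proof of this theorem: it is imported verbatim from \cite{CianciaKM10}, so there is no in-paper argument to compare against. Judged on its own, your sketch is correct and is essentially the argument behind the cited result, and it matches the machinery the paper itself deploys later: your ``minimal form'' is exactly the support/seed construction of \autoref{def:supp-seed} (the wide pullback of all subobjects through which an element factors, with pullback-preservation of $P$ giving the unique seed and injectivity of every $P\sigma$), your pivotal lemma that every morphism of elements restricts to an isomorphism of minimal forms is the key step, and the quotient by $\Phi_i$ corresponds to the orbit construction of \autoref{sec:hd}. The only repairs needed are minor: the chosen representative $c_i \rhd x_i$ of each connected component must itself be taken to be a minimal form, since both your surjectivity and your injectivity arguments invoke an isomorphism $(c_i \rhd x_i) \cong (d_y \rhd y_0)$ rather than merely an isomorphism of their minimal forms; and in the pivotal lemma one should record the (immediate, by injectivity of $P$ applied to the inclusion $d_y \hookrightarrow d$) check that the comparison isomorphism carries $x_0$ to $y_0$, which is what later lets you conclude $\sigma' = \sigma \circ \phi$ with $\phi \in \Phi_i$.
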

Our category $\catO$ satisfies the hypothesis of this theorem: it is small and has wide pullbacks due to the existence of pullbacks. In fact, the diagram of a wide pullback in $\catO$ is formed by a finite number of morphisms, because a finite poset always has a finite number of ingoing poset-reflecting monomorphisms, so its limit can be computed via binary pullbacks. Moreover, $\catO$ has only monos, as order-embeddings are always monic, and $\catO[O,O]$ clearly has only isomorphisms, for each $O \in |\catO|$. Finally, our presheaf of causal markings $\pshMark$ preserves (wide) pullbacks (\autoref{lem:mfun-pull}), so there exists an equivalent coproduct of symmetrized representables. 

\autoref{thm:hd-crit} indeed describes an equivalence between pullback-preserving presheaves and families, which induces one on coalgebras. We shall now investigate this point. Let $\pshOPP$ be the full subcategory of $\pshO$ formed by pullback-preserving presheaves. We have that our behavioral endofunctor $B$ indeed defines an endofunctor on $\pshOPP$. 
\begin{proposition}
\label{prop:res-pull}
All the endofunctors on $\pshO$ in \autoref{def:beh-fun} can be restricted to endofunctors on $\pshOPP$.
\end{proposition}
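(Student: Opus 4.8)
The plan is to reduce the statement to the claim that each ``building block'' occurring in \autoref{def:beh-fun} carries pullback-preserving presheaves to pullback-preserving presheaves, i.e.\ maps $\pshOPP$ into itself on objects. Since $\pshOPP$ is a \emph{full} subcategory of $\pshO$, the action on natural transformations then restricts automatically, so each such functor restricts to an endofunctor of $\pshOPP$; and since $B = \FinParts(\lpsh \times \Delta(-))$ with $\lpsh = \mathrm{const}_{Act} \times \FinParts\enames$, the restriction of $B$ (and of $\FinParts$, $\Delta$, $\lpsh\times(-)$) follows by composition. Concretely it suffices to check: (a) binary products of pullback-preserving presheaves are pullback-preserving, and $\mathrm{const}_{Act}$ and $\enames$ lie in $\pshOPP$; (b) $\FinParts$ maps $\pshOPP$ into itself; (c) $\Delta$ maps $\pshOPP$ into itself. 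Throughout I will use that, by \autoref{prop:o-small-pull}, pullbacks in $\catO$ exist and are computed as intersections of subposets equipped with the induced order and labelling (a routine check, using that order-embeddings reflect order and preserve labels).

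For (a): products and pullbacks are both limits, hence commute, so $(P\times Q)$ applied to a pullback square of $\catO$ is the product of the pullback squares $P(-)$ and $Q(-)$, again a pullback square; thus $P,Q\in|\pshOPP|$ implies $P\times Q\in|\pshOPP|$. The constant presheaf sends every square of $\catO$ to the square with all four vertices $Act$ and all four edges the identity, which is the pullback $Act = Act\times_{Act}Act$. Finally, $\enames$ sends a pullback square of $\catO$ --- an intersection square $W = O_1\cap O_2$ inside $O_3$ --- to the square of underlying sets $|W| = |O_1|\cap|O_2|$ inside $|O_3|$, and since the legs $|O_i|\hookrightarrow|O_3|$ are injective this is exactly the pullback in $\catSet$; hence $\enames\in|\pshOPP|$, and then $\FinParts\enames\in|\pshOPP|$ by (b) below, so $\lpsh\in|\pshOPP|$.

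For (b) --- which I expect to be the crux --- note first that $\mathcal{P}_f\colon\catSet\to\catSet$ does \emph{not} preserve arbitrary pullbacks, but it does preserve pullbacks of monomorphisms: for monos $A\hookrightarrow C\hookleftarrow B$, a finite $S\subseteq C$ satisfies $S\in\mathcal{P}_fA$ and $S\in\mathcal{P}_fB$ iff $S\subseteq A\cap B$ iff $S\in\mathcal{P}_f(A\cap B)$, so $\mathcal{P}_f$ carries the intersection square to the intersection square of $\mathcal{P}_fA$ and $\mathcal{P}_fB$, which is their pullback. Now let $P\in|\pshOPP|$ and let $S$ be a pullback square in $\catO$. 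Every morphism of $\catO$ is monic, so $S$ is a square of monos; and a pullback-preserving functor preserves monos (a mono $m\colon X\to Y$ exhibits $X$ as the pullback $X\times_YX$ with identity projections, and preserving this pullback forces $Pm$ monic). Hence $P(S)$ is a pullback square of monos in $\catSet$, and applying $\mathcal{P}_f$ to it gives again a pullback square, i.e.\ $\FinParts P(S) = \mathcal{P}_f(P(S))$ is a pullback; thus $\FinParts P\in|\pshOPP|$. The essential point here is that $\catO$ having only monic morphisms is precisely what turns $\catO$-pullbacks into the intersection squares on which $\mathcal{P}_f$ is well-behaved; without it the claim would fail.

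For (c): since $\Delta = (-)\circ\deltaFun$, it is enough to show that $\deltaFun\colon\catO\to\catO$ preserves pullbacks --- then for $P\in|\pshOPP|$ and a pullback square $S$, the square $\deltaFun(S)$ is a pullback in $\catO$ and $P(\deltaFun(S))$ a pullback in $\catSet$. Writing a pullback square of $\catO$ as $W = O_1\cap O_2$ inside $O_3$, we must check $\deltaFun(W) = \deltaFun(O_1)\cap\deltaFun(O_2)$ inside $\deltaFun(O_3)$. On the old events this is $|O_1|\cap|O_2| = |W|$. A fresh event of $\deltaFun(O_3)$ is indexed by an antichain $A$ of $O_3$ and a label, and under $\deltaFun$ of the embedding $O_i\hookrightarrow O_3$ the fresh events of $\deltaFun(O_i)$ are sent exactly to those indexed by $A\subseteq|O_i|$; so a fresh event lies in both $\deltaFun(O_1)$ and $\deltaFun(O_2)$ iff $A\subseteq|W|$, which is precisely the indexing of the fresh events of $\deltaFun(W)$ (an antichain of $O_3$ contained in $|W|$ is an antichain of $W$, since $W\hookrightarrow O_3$ reflects order). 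The order relations agree for the same reason: an old event sits below a fresh event indexed by $A$ iff it lies below some element of $A$, a condition computed identically in $\deltaFun(W)$ and in $\deltaFun(O_3)$; and distinct fresh events are always incomparable. Since $\deltaFun$ of an order-embedding is again an order-embedding, the resulting square in $\catO$ has the intersection as apex and is the pullback. Hence $\deltaFun$ preserves pullbacks and $\Delta$ restricts to $\pshOPP$. Combining (a)--(c), $\FinParts$, $\Delta$, $\lpsh\times(-)$ and therefore $B$ all restrict to endofunctors on $\pshOPP$.
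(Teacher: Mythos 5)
Your proof is correct, and it follows the route the paper intends: the paper's own proof of \autoref{prop:res-pull} consists only of the remark that it is analogous to \cite[Proposition 8]{ACTA}, i.e.\ a functor-by-functor check of exactly the kind you carry out, so your write-up effectively supplies the details that the citation leaves implicit. The two facts you isolate --- that $\mathcal{P}_f$ preserves pullbacks of monomorphisms (which suffices because every morphism of $\catO$ is monic and a pullback-preserving presheaf preserves monos) and that the allocation endofunctor $\deltaFun$ preserves pullbacks in $\catO$ (so that $\Delta P = P \circ \deltaFun$ stays in $\pshOPP$) --- are precisely the non-trivial points, and your verification of both, including the antichain analysis of the fresh events of $\deltaFun$ on a pullback square, is sound.
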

Let $B_\diamond \colon \pshOPP \to \pshOPP$ be the restricted behavioral endofunctor. The causal coalgebra is clearly a $B_\diamond$-coalgebra. Restricting to $\pshOPP$ does not affect the final coalgebra: $\catCoalg{B}$ and $\catCoalg{B_\diamond}$ have the same final object and final morphisms from common objects. In fact, the terminal sequence starts from the final presheaf $1$, pointwise defined as the singleton set, which trivially preserves pullbacks, and goes through $B^n(1) = B_\diamond^n(1)$, for any $n$.
\begin{corollary}[of \autoref{thm:hd-crit}]
Let $\widetilde{B}$ be the behavioral endofunctor on families defined by lifting all functors in \autoref{def:beh-fun} along the equivalence. Then the category $\catCoalg{B_\diamond}$ is equivalent to $\catCoalg{\widetilde{B}}$.
\end{corollary}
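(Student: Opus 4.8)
The plan is to recognise this as an instance of the standard fact that an equivalence of base categories which intertwines two endofunctors lifts to an equivalence of the associated categories of coalgebras, and to check that the ``$\widetilde B$'' of the statement is indeed the transport of $B_\diamond$ along the equivalence of \autoref{thm:hd-crit}. Concretely, I would write $E \colon \pshOPP \to \cat{Fam}$ for that equivalence (from pullback-preserving presheaves to families of symmetry groups over $\catO$), with quasi-inverse $E^{-1}$, unit $\eta \colon \mathrm{Id} \Rightarrow E^{-1}E$ and counit $\varepsilon \colon EE^{-1} \Rightarrow \mathrm{Id}$, all invertible. First I would observe that, by \autoref{prop:res-pull}, $B_\diamond = \FinParts(\lpsh \times \Delta(-))$ really does restrict to an endofunctor of $\pshOPP$, and that lifting along $E$ — i.e.\ conjugation by $E$ — is compatible up to canonical isomorphism with composition and binary products of functors, since $E$ preserves limits and $E^{-1}E \cong \mathrm{Id}$. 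Hence the functor $\widetilde B$ built by lifting the individual constituents of \autoref{def:beh-fun} is naturally isomorphic to $E \circ B_\diamond \circ E^{-1}$; I would fix a natural isomorphism $\theta \colon E B_\diamond \Rightarrow \widetilde B E$ and work with that throughout.

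Next I would define the comparison functor $\overline E \colon \catCoalg{B_\diamond} \to \catCoalg{\widetilde B}$ by
\[
    \overline E(P,\rho) = \bigl( EP,\ \theta_P \circ E\rho \colon EP \to E B_\diamond P \to \widetilde B E P \bigr),
    \qquad \overline E(f) = Ef ,
\]
checking that $Ef$ is a $\widetilde B$-coalgebra homomorphism by pasting the $E$-image of the homomorphism square of $f$ with the naturality square of $\theta$ at $f$. Dually, using $E^{-1}$ together with the natural isomorphism $\theta' \colon E^{-1}\widetilde B \Rightarrow B_\diamond E^{-1}$ obtained from $\theta$ by the standard mate construction with $\eta$ and $\varepsilon$, I would define $\overline{E^{-1}} \colon \catCoalg{\widetilde B} \to \catCoalg{B_\diamond}$. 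Finally I would exhibit the two natural isomorphisms $\overline{E^{-1}}\,\overline E \cong \mathrm{Id}$ and $\overline E\,\overline{E^{-1}} \cong \mathrm{Id}$: their components at a coalgebra $(P,\rho)$ are the components $\eta_P$ (resp.\ $\varepsilon$) of the unit and counit of the original equivalence, and the content is to verify that these are coalgebra homomorphisms; naturality in $(P,\rho)$ is then immediate from naturality of $\eta$ and $\varepsilon$. Since $\overline E$ thus acquires a quasi-inverse, we get $\catCoalg{B_\diamond} \simeq \catCoalg{\widetilde B}$, which in particular transports the final coalgebra already identified just before the statement.

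The only non-formal step — the one I expect to be the real obstacle — is checking that $\eta_P$ is a $B_\diamond$-coalgebra homomorphism $(P,\rho) \to \overline{E^{-1}}\,\overline E(P,\rho)$ (and dually for $\varepsilon$). This is a diagram chase relating $\rho$, $\eta_P$, $\theta$ and its mate $\theta'$, and it goes through precisely because $\theta'$ is defined as the mate of $\theta$ and the triangle identities of the equivalence hold; equivalently, it is the assertion that $\widetilde B$ is the \emph{transport} of $B_\diamond$ and not merely some endofunctor with $\widetilde B E \cong E B_\diamond$. This is also where the hypotheses feeding \autoref{thm:hd-crit} — $\catO$ small with wide pullbacks, all morphisms monic, endo-hom-sets consisting of isomorphisms — are ultimately used, since they are what make $E$ a genuine equivalence in the first place (with \autoref{lem:mfun-pull} ensuring $\pshMark$, and hence the causal coalgebra, lives in $\pshOPP$). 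Everything else is routine $2$-categorical bookkeeping and I would not spell it out.
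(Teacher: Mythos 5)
Your proposal is correct and follows essentially the same route as the paper, which treats the corollary as the standard transport-of-structure fact: the equivalence of \autoref{thm:hd-crit} (applicable since $\catO$ satisfies its hypotheses and, by \autoref{prop:res-pull}, $B_\diamond$ lives on $\pshOPP$) intertwines $B_\diamond$ with its lift $\widetilde{B}$ and hence induces an equivalence of coalgebra categories. The $2$-categorical bookkeeping you spell out (comparison functors, mates, unit/counit as coalgebra homomorphisms) is exactly what the paper leaves implicit.
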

In particular, the equivalence relates the final $B_\diamond$-coalgebra and the final $\widetilde{B}$-coalgebra, and their final morphisms. Moreover, since kernels are preserved by equivalence, identifications made by the final morphisms are preserved, hence behavioral equivalence is preserved too. 

Now that we have proved that our categorical setting is suitable for HD-automata, we can translate the causal coalgebra to a HD-automaton. We adopt the definition of HD-automaton given in \cite{CianciaM10}: a HD-automaton is a(ny) coalgebra over a named set. We introduce a notion of named set closer to a more traditional one, but indeed equivalent to the families mentioned above. Given a set $S$ of morphism and a morphism $\sigma$ in $\catO$, we write $S \circ \sigma$ for the set $\{ \tau \circ \sigma \mid \tau \in S \}$ (analogously for $\sigma \circ S$).

\begin{definition}[Category $\catSym{\catO}$]
Let $\catSym{\catO}$ be the category defined as follows:
\begin{itemize}
	\item objects $\Phi$ are subsets of $\catO[O,O]$ that are groups w.r.t.\ composition in $\catO$;
	\item morphisms $\Phi_1 \to \Phi_2$ are sets of morphisms $\sigma \circ \Phi_1$ such that $\sigma \colon dom(\Phi_1) \to dom(\Phi_2)$ and $\Phi_2 \circ \sigma \subseteq \sigma \circ \Phi_1$.
\end{itemize}
\end{definition}
\begin{definition}[Category $\catONSet$]
The category $\catONSet$ is defined as follows:
\begin{itemize}
	\item objects are \emph{$\catO$-named sets}, that are pairs $N = (Q_N,\symg_N)$ of a set $Q_N$ and a function $\symg_N \colon Q_N \to |Sym(\catO)|$. The \emph{local poset} of $q \in Q_N$, denoted $\Vert q \Vert$, is $dom(\sigma)$, for any $\sigma \in \symg_N(q)$.
	\item morphisms $f \colon N \to M$ are \emph{$\catO$-named functions}, that are pairs $( h , \Sigma )$ of a function $h \colon Q_N \to Q_M$ and a function $\Sigma$ mapping each $q \in Q_N$ to a morphism $\symg_M(h(q)) \to \symg_N(q)$ in $Sym(\catO)$.
\end{itemize}
\end{definition}
In the rest of this section we give an explicit description of the $\catO$-named set produced from $\pshMark$ by the equivalence. Its elements will be minimal P-markings with symmetries. We will show that the translation from P-markings to minimal ones with symmetries is achieved via categorical constructions. We need the notions of support, seed and orbit.
\begin{definition}[Support and seed]
\label{def:supp-seed}
Given $O \rhd c$, its \emph{support}, denoted $supp(c)$, is the wide-pullback-object of the following morphisms
\[
	\{ \sigma \colon O' \to O \mid \exists O' \rhd c': c' [\sigma]= c \}
\]
Let $\Sigma_{c}$ be the embedding $supp(c) \subto O$ given by the pullback. Then the \emph{seed} of $c$, denoted $seed(c)$, is the unique element of $\pshMark(supp(c))$ such that $seed(c)[\Sigma_{c}] = c$.
\end{definition}
As shown in \cite{CianciaKM10,GadducciMM06}, preservation of pullbacks by $\pshMark$ is essential to ensure existence and uniqueness of seeds. The seed operation achieves the first two properties of minimal P-markings (see \autoref{def:min-pmark}): $seed(c)$ just contains immediate causes for each token and $supp(c)$ contains all and only those causes. This is illustrated by the following example.
\begin{example}
Consider the following P-marking for the running example
\[
   \{ e_a \lss e'_a , e''_a \lss e'''_a \} \rhd \{\{e_a,e'_a\} \caus s_1,\{e''_a,e'''_a\} \caus s_2 \}
\]
which is reachable after firing $t_1$ and $t_2$ twice.
The set of morphisms of \autoref{def:supp-seed} has four elements
\[
    f_1,f_2 \colon \{ e_a \lss e'_a , e''_a\} \to \{ e_a \lss e'_a , e''_a \lss e'''_a \} \qquad
    f_3,f_4 \colon \{e_a,e'_a\} \to \{ e_a \lss e'_a , e''_a \lss e'''_a \}
\] 
\[
    f_1 =
    \begin{cases} 
        e_a \longmapsto e_a \\
        e'_a \longmapsto e'_a \\
        e''_a \longmapsto e'''_a
    \end{cases}
    \qquad
    f_2 =
    \begin{cases} 
        e_a \longmapsto e''_a \\
        e'_a \longmapsto e'''_a \\
        e''_a \longmapsto e'_a
    \end{cases}
    \qquad    
    f_3 =
    \begin{cases}
        e_a \longmapsto e'_a \\
        e'_a \longmapsto e'''_a
    \end{cases}
  	\qquad
  	f_4 =
    \begin{cases}
        e_a \longmapsto e'''_a \\
        e'_a \longmapsto e'_a
    \end{cases}
\]
In fact, we have
\[
\left.
\begin{aligned}
 ( \{ e_a \lss e'_a , e''_a\} \rhd \{ \{e_a,e'_a\} \caus s_1 , \{e''_a\} \caus s_2 \} ) \ [f_1] \\[1ex]
 ( \{ e_a \lss e'_a , e''_a\} \rhd \{ \{e''_a\} \caus s_1 , \{e_a,e'_a\} \caus s_2 \}) \ [f_2] \\[1ex]
 ( \{e_a,e'_a\} \rhd \{ \{ e_a \} \caus s_1 , \{ e'_a \} \caus s_2 \} ) \ [f_3] \\[1ex]
 ( \{e_a,e'_a\} \rhd \{ \{ e'_a \} \caus s_1 , \{ e_a \} \caus s_2 \} ) \ [f_4]
\end{aligned}
\right\} = 
\{ e_a \lss e'_a , e''_a \lss e'''_a \} \rhd \{\{e_a,e'_a\} \caus s_1,\{e''_a,e'''_a\} \caus s_2 \}
\]
Recall that each $[f_i] = \pshMark(f_i)$ is a function that, when applied to a P-marking, replaces events according to $f_i$ and then down-closes the result w.r.t. $\{ e_a \lss e'_a , e''_a \lss e'''_a \}$.
%Therefore we have
%
%%
%\[
%    \xymatrix{
%        \{ e_a \lss e'_a , e''_a\} \rhd \{ \{e_a,e'_a\} \caus s_1 , \{e''_a\} \caus s_2 \} 
%        \ar@{|->}[r]^-{[f_1]} 
%    &   \{ e_a \lss e'_a , e''_a \lss e'''_a \} \rhd \{\{e_a,e'_a\} \caus s_1,\{e''_a,e'''_a\} \caus s_2 \}
%    \\
%    \{e_a,e'_a\} \rhd \{ \{ e_a \} \caus s_1 , \{ e'_a \} \caus s_2 \}
%    \ar@{|->}[ur]^-{[f_3]}
%    }
%\]
%and similarly for $f_2$ and $f_4$.
%
It is easy to check that the pullback object of all four morphisms is $\{e_a,e'_a\}$, so the corresponding seed is
\[
	\{e_a,e'_a\} \rhd \{ \{ e_a \} \caus s_1 , \{ e'_a \} \caus s_2 \}.
\] 
Notice that two events have been discarded, because they are not immediate causes.
\end{example}
\begin{definition}[Orbit]
The \emph{orbit} of $O \rhd c$ is
\[
	orb(c) = \{ c [\sigma] \mid \sigma \in \catO[O,O] \}
\]
We denote by $[c]^o$ a canonical choice of an element of $orb(c)$.
\end{definition}
The \emph{orbit} of $c$ is the set of causal markings obtained by applying to $c$ all functions induced by poset automorphisms. Automorphisms are isomorphisms, so taking a canonical representative for this orbit achieves the third requirement of minimal P-markings: it amounts to applying the operation $\isorep{O \rhd c}$, i.e., choosing a representative of isomorphism classes for $O \rhd c$.
\begin{definition}
The $\catO$-named set of minimal P-markings is $(M,\symg_M)$, where
\begin{align*}
	M &= \{ supp(c) \RHD [seed(c)]^o \mid O \rhd c \in \el \pshMark \} 
	\\
	\symg_M &= \lambda O \RHD c.\{ \Phi \in |\catSym{\catO}| \mid dom(\Phi) = O \land \forall \sigma \in \Phi : c[\sigma] = c \}
\end{align*}
\end{definition}
The set $M$ is produced from elements of $\pshMark$: for each of these, we compute the seed, and then we only take the canonical representative for the seed's orbit. As explained, the final result is indeed a minimal P-marking $O \RHD c$. This P-marking is associated a symmetry by $\symg_M$, namely $\Phi = \symg_M(O \RHD c)$, so it becomes the P-marking with symmetry $\psym{O}{\Phi}{c}$.

The derivation of an HD-automaton on $(M,\symg_M)$ in $\catCoalg{\widetilde{B}}$ from the causal coalgebra, along the equivalence, is the category-theoretic counterpart of the derivation of \icslts{} from \aclts. The correspondence between \icslts s and coalgebras over named sets is analogous to the $\pi$-calculus case, where we have set-theoretical HD-automata on one side \cite{MontanariP05} and categorical ones, namely coalgebras over named sets, on the other side. The correspondence for the $\pi$-calculus has been worked out in \cite{Ciancia08,CianciaM10}, and the theory introduced therein seems robust enough to accommodate different notions of named sets such as ours. In particular, functors used to define coalgebras over named sets, such as powerset and allocation functors, should be very similar to those defining $\tilde{B}$.

We briefly illustrate the $\tilde{B}$-coalgebra for the running example. The $\catO$-named set $(M,\symg_M)$ is as follows: $M$ includes all P-markings in \autoref{fig:runex-iccg}, and $\symg_M$ returns the symmetry $\{id\}$ for each of them. Transitions are represented as a $\catO$-named function $(h,\Sigma) \colon (M,\symg_M) \to \widetilde{B}(M,\symg_M)$, where $h$ maps each state $\psym{O}{\{id\}}{c}$ to its label and continuation, and $\Sigma(\psym{O}{\{id\}}{c})$ encodes all history maps for outgoing transitions. 

We leave a deeper investigation of the category of $\catO$-named sets and of $\tilde{B}$-coalgebras for future work.

\section{Conclusions}

In this paper we have introduced an approach to derive compact operational models for causality in Petri nets. In order to do this, we have constructed a labeled semantics of Petri nets in terms of causal case graphs, and we have given a procedure to refine them in order to get minimal, possibly finite-state, representations. We have then modeled causal case graphs in a categorical setting, exploiting a nominal representation of causal relations: they are modeled as posets over event names with action labels. Our categorical treatment is simpler and more natural than the set-theoretic one, and employs standard constructs and results for nominal calculi, namely presheaf-based coalgebras and their equivalence with HD-automata. In particular, reducing the state-space and showing that this operation preserves the semantics require some technical effort in the set-theoretic version, whereas the categorical version employs a general construction that automatically performs this reduction in a semantics-preserving way.

Our approach has a practical significance: we show how to synthesize HD-automata from Petri nets, and how to compute minimal realizations for them, in order to detect bisimilar states. As mentioned, minimization of HD-automata is possible in many cases. Even if our approach does not actually provide a way to minimize nets themselves, one can still decide bisimilarity of markings by minimizing their reachable HD-automata and matching the results. 

Finally, our contribution is also methodological: we  provide a further example in which the presheaf/HD-automata framework is successfully applied. We emphasize that this framework is highly parametric and can possibly be useful in many other cases.

\subsection{Related work}

This paper follows a line of research on coalgebraic models of causality, started in \cite{ACTA} by the same authors. The categorical machinery is the same in both papers, namely presheaf-based coalgebras, HD-automata, and the equivalence among them. However, this paper takes a further step towards a general categorical theory of causality. In \cite{ACTA}, in fact, we have provided models for a particular class of causal LTSs, namely Degano-Darondeau ones. In this paper, instead, we treat Petri nets, which are much more general. For instance, unlike Degano-Darondeau LTSs, Petri nets can describe synchronizations of more than two processes. 

In \cite{ACTA} we start from existing set-theoretic models, similar to abstract CGs, whereas the models we introduce here are novel. In both papers we represent causal dependencies as posets over events, but in \cite{ACTA} events are unlabeled and are canonically represented as natural numbers. Here we have labels and we take a more general approach: instead of choosing specific representatives of events, we make abstract CGs parametric in this choice. This requires more technical work and it further validates the categorical approach, where book-keeping details are abstracted away. The categorical environment in this paper is more elaborate than \cite{ACTA}, due to labeling. In particular, event generation is more complex, and is studied in greater detail. Another difference is that here we give conditions under which the model with only immediate causes is finite, whereas in \cite{ACTA} decidability is not treated.

A first version of HD-automata for Petri nets, called \emph{causal automata}, has been introduced in \cite{MontanariP97}. However, their construction is purely set-theoretical and does not include symmetries, so the existence of a minimal model is not guaranteed. This version of HD-automata is similar to what we call immediate causes CG (without symmetries). HD-automata with symmetries were developed for the $\pi$-calculus in \cite{Pistore99,MontanariP05}, and a general categorical treatment was provided in \cite{CianciaM10}. In all these cases nominal structures associated to states are just a sets of (event) names, whereas we have posets, which are more adequate to represent causal dependencies. 

We can cite \cite{CattaniS04} for the introduction of transitions systems for causality whose states are elements of presheaves, intended to model the causal semantics of the $\pi$-calculus as defined in \cite{BorealeS98}. However, the index of a state is a set of names, without any information about events and causal relations. The advantage of our index category is that it allows reducing the state-space in an automatic way, exploiting a standard categorical construction. This cannot be done in the framework of \cite{CattaniS04}.
Finally, an HD-automaton for causality has been described in \cite{CianciaM10}, but it is derived as a direct translation of causal automata and its states do not take into account causal relations.

Other related works are \cite{StatonW10,Winskel99}, where event structures have been characterized as (contravariant) presheaves on posets. While the meaning of presheaves is similar, the context is different: we consider the more concrete realm of coalgebras and nominal automata. A more precise correspondence with such models should be worked out.

\subsection{Future work}

Logics for causality have been recently studied in \cite{BaldanC14}. As future work, we would like to understand whether they can be captured in our coalgebraic setting. Another open research question is how to obtain coalgebraic models for other notions of causal bisimulation, such as hereditary history preserving bisimulation.

\bibliographystyle{plain}
\bibliography{biblio}

\appendix

\section{Proofs}

We first introduce some technical lemmata. Then we give proofs for the claims in the paper.

\subsection{Additional lemmata}

\begin{lemma}
\label{lem:clts-clos}
Let $O_1,O_2$ be finite $Act$-labeled posets and let $\sigma \colon O_1 \to O_2$ be an order-embedding. Then:
\begin{enumerate}[$(i)$]
    \item 
    $O_1 \rhd c \ctrans{K}{e}{a} \delta(O_1,K,e_a) \rhd c'$ implies $O_2 \rhd \dclos{(c\sigma)}{O_2} \ctrans{\sigma(K)}{e'_a}{} O_2' \rhd \dclos{(c'\funext{\sigma}{e'_a}{e_a})}{O_2'}$, for any $e' \notin X_{O_2}$, with $O_2' = \delta(O_2,\sigma(K),e'_a)$;
    \item 
    $O_2 \rhd c \ctrans{K}{e}{a} \delta(O_2,K,e_a) \rhd c'$ implies $O_1 \rhd c'' \ctrans{K'}{e'_a}{} \delta(O_1,K',e'_a) \rhd c'''$, with $c''\sigma = c$, $\sigma(K') = K$ and $c''' \funext{\sigma}{e_a}{e'_a} = c'$, for any $e' \notin X_{O_1}$.
\end{enumerate}
\end{lemma}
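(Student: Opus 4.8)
The plan is to prove both parts by directly unfolding the inference rule of \autoref{def:clts} at the source transition and checking that every component --- the fired transition $t$, its pre- and post-set, the maximal causes, the continuation, and the closure operator --- transports correctly along $\sigma$. This is the concrete-level counterpart of \autoref{prop:aclts-clos}; the extra bookkeeping relative to that proposition is only that fresh events are explicit and must be instantiated by hand, the relevant substitution being $\funext{\sigma}{e'_a}{e_a}$, which acts as $\sigma$ on the old events of $\delta(O_1,K,e_a)$ and sends the new event $e_a$ to $e'_a$. For $(i)$, unfolding the hypothesis gives $t \in T$ with $a = l(t)$, a decomposition $c = c_0 \cup c_1$ with $|c_0| = \pre{t}$, an event $e \notin X_{O_1}$, the set $K = \max_{O_1}\causes(c_0)$, and $c' = (\causes(c_0) \cup \{e_a\} \caus \post{t}) \cup c_1$. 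Applying $\sigma$ and closing w.r.t.\ $O_2$ turns the consumed part into $\dclos{(c_0\sigma)}{O_2}$, whose underlying marking is still $\pre{t}$ since renaming and closure act on events, not on places; hence the same $t$ fires. The point that needs an argument is $\max_{O_2}\causes(\dclos{(c_0\sigma)}{O_2}) = \sigma(K)$: one has $\causes(\dclos{(c_0\sigma)}{O_2}) = \dclos{(\sigma(\causes c_0))}{O_2}$, and because $\sigma$ is an order-embedding, hence preserves \emph{and} reflects order, the maximal elements of this down-closure are exactly the $\sigma$-images of the maximal elements of $\causes(c_0)$, i.e.\ $\sigma(K)$. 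Choosing any $e' \notin X_{O_2}$ and firing $t$ from $O_2 \rhd \dclos{(c\sigma)}{O_2}$ produces $O_2' = \delta(O_2,\sigma(K),e'_a)$ with continuation $(\causes(\dclos{(c_0\sigma)}{O_2}) \cup \{e'_a\} \caus \post{t}) \cup \dclos{(c_1\sigma)}{O_2}$; a routine computation, using that $\dclos{(-\,\sigma)}{-}$ commutes with renaming along order-embeddings together with the explicit form of $\funext{\sigma}{e'_a}{e_a}$, rewrites this as $\dclos{(c'\funext{\sigma}{e'_a}{e_a})}{O_2'}$, which is the assertion.

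For $(ii)$, one unfolds the hypothesis transition out of $O_2 \rhd c$ in the same way, obtaining $t$, a split $c = c_0 \cup c_1$ with $|c_0| = \pre{t}$, the maximal causes $K = \max_{O_2}\causes(c_0)$, and the corresponding continuation $c'$. As in \autoref{prop:aclts-clos}$(ii)$ --- and as is needed for the conclusion $c''\sigma = c$ to be satisfiable --- one uses that $c$ lies in the image of $\sigma$, so that $c'' := c\sigma^{-1}$ is a well-defined causal marking over $O_1$ (order-embeddings are injective); set $K' := \sigma^{-1}(K)$. Since $\sigma$ reflects order, $\sigma^{-1}$ preserves order on $\sigma(X_{O_1})$, whence $K' = \max_{O_1}\causes(c_0\sigma^{-1})$ and $\sigma(K') = K$. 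Firing $t$ from $O_1 \rhd c''$ with any $e' \notin X_{O_1}$ yields $\delta(O_1,K',e'_a) \rhd c'''$, and running the same rewriting as in $(i)$ in the reverse direction gives $c'''\funext{\sigma}{e_a}{e'_a} = c'$, completing the proof.

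The main obstacle --- indeed the only step that is not pure bookkeeping --- is the compatibility of the maximal-causes operator with $\sigma$: one needs $\max_{O_2}\dclos{(\sigma K)}{O_2} = \sigma(\max_{O_1}K)$ and its converse, and this genuinely requires $\sigma$ to \emph{reflect} order, not merely to preserve it. \autoref{ex:tr-pres} exhibits the failure for order-preserving maps that are not order-embeddings, so the hypothesis on $\sigma$ is used precisely at this point; everything else is careful but mechanical accounting for the closure operator $\dclos{-}{-}$, the construction of $\delta$, and the substitution $\funext{\sigma}{-}{-}$.
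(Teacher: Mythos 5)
Your proof follows essentially the same route as the paper's: unfold the rule of \autoref{def:clts}, note that renaming and closure leave the underlying marking (hence the enabledness of $t$) untouched, use the fact that $\sigma$ both preserves \emph{and} reflects order to obtain $\max_{O_2}\causes(\dclos{(c_1\sigma)}{O_2}) = \sigma(K)$, and then rewrite the resulting continuation via the interaction of the closure operator with the substitution $\funext{\sigma}{e'_a}{e_a}$. The paper proves only item $(i)$ in detail and declares $(ii)$ analogous; your treatment of $(ii)$ --- reading the conclusion $c''\sigma = c$ as implicitly requiring $c$ to lie in the image of $\sigma$ (which is how the lemma is actually applied, e.g.\ in the proof of \autoref{thm:bs-c-bisim}), then pulling the data back along $\sigma^{-1}$ and using reflection for the maximal causes --- is a correct instantiation of that analogy.
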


\begin{proof}
We prove item (i), the other one is analogous. Suppose $O_1 \rhd c \ctrans{K}{e}{a} \delta(O_1,K,e_a) \rhd c'$ is derived from the rule of \autoref{def:clts} as follows
    \[
        \frac{
            t \in T
            \quad
            |c_1| = \pre{t}
            \quad
            a = l(t)
            \quad
            e \notin X_{O_1}
            \quad
            K = \max_{O_1} \causes(c_1)
        }
        {
            O_1 \rhd c_1 \cup c_2 \ctrans{K}{e}{a} \delta(O_1,K,e_a) \rhd (\causes(c_1) \cup \{e_a\} \caus \post{t} ) \cup c_2
        }
    \]
where $c=c_1 \cup c_2$ and $c' = (\causes(c_1) \cup \{e_a\} \caus \post{t} ) \cup c_2$. Clearly we have $\dclos{(c\sigma)}{O_2} = \dclos{(c_1\sigma)}{O_2} \cup \dclos{(c_2\sigma)}{O_2}$, with $|\dclos{(c_1\sigma)}{O_2}| = |c_1|$, because $\sigma$ only affects events, not tokens. Moreover, it can be easily verified that $\max_{O_2} \causes(\dclos{(c_1\sigma)}{O_2}) = \sigma(\max_{O_1} \causes(c_1)) = \sigma(K)$. In fact, causes of $\dclos{(c_1\sigma)}{O_2}$ are: those of $c_1\sigma$, related exactly as their counterimages, due to $\sigma$ preserving and reflecting order; additional causes, smaller than those of $c_1\sigma$, added by the closure. Therefore we can again apply the rule as follows
\[
        \frac{
            t \in T
            \quad
            |\dclos{(c_1\sigma)}{O_2}| = \pre{t}
            \quad
            a = l(t)
            \quad
            e' \notin X_{O_2}
            \quad
            \sigma(K) = \max_{O_2} \causes(\dclos{(c_1\sigma)}{O_2})        }
        {
            O_2 \rhd \dclos{(c_1\sigma)}{O_2} \cup \dclos{(c_2\sigma)}{O_2} \ctrans{\sigma(K)}{e'_a}{} O_2' \rhd (\causes(\dclos{(c_1\sigma)}{O_2}) \cup \{e'_a\} \caus \post{t} ) \cup \dclos{(c_2\sigma)}{O_2}
        }
\]
where $O_2' = \delta(O_2,\sigma(K),e'_a)$. Now, observe that, by definition of $\delta$, we have 
\[
    \causes(\dclos{(c_1\sigma)}{O_2}) \subseteq \causes(\dclos{(c_1\sigma)}{O'_2})
    \qquad
    \dclos{\{e'_a\}}{O'_2} =\causes(\dclos{(c_1\sigma)}{O'_2}) \cup \{e'_a\}    
    %\label{ctrans-clos-1}
\]
which implies
\begin{align*}
\causes(\dclos{(c_1\sigma)}{O_2}) \cup \{e'_a\} \caus \post{t}
&= \dclos{(\causes(c_1 \sigma ) \cup \{e'_a\})}{O'_2} \caus \post{t}\\
&= \dclos{(\causes(c_1) \cup \{e_a\}) \funext{\sigma}{e'_a}{e_a}}{O_2'} \caus \post{t}
\\
&= \dclos{(\causes(c_1) \cup \{e_a\} \caus \post{t}) \funext{\sigma}{e'_a}{e_a}}{O_2'}
\end{align*}
From this equation, and from $\dclos{(c_2\sigma)}{O_2} = \dclos{(c_2\funext{\sigma}{e'_a}{e_a})}{O_2'}$, because $e_a \notin \causes(c_2)$, it follows that the continuation derived from the above rule has the required shape.
\end{proof}

\begin{lemma}
Let $\sigma \colon O \to O'$ be an isomorphism. Then $O \rhd c_1 \cbisim O \rhd c_1$ implies $O' \rhd c_1\sigma \cbisim O' \rhd c_2\sigma$.
\label{lem:cbisim-iso} 
\end{lemma}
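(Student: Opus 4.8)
The displayed statement has an evident typo --- the hypothesis should read $O \rhd c_1 \cbisim O \rhd c_2$, since $c_2$ in the conclusion is otherwise unquantified --- so I will prove that corrected form. The plan is the standard coinductive one: exhibit a \cb-bisimulation $\{S_P\}$ that contains every pair $(O' \rhd c_1\sigma,\, O' \rhd c_2\sigma)$ arising from an isomorphism together with a $\cbisim$-related pair, and then conclude by maximality of $\cbisim$. Concretely I would set, for each $Act$-labeled poset $P$,
\[
  S_P = \{\, (P \rhd d_1,\, P \rhd d_2) \;\mid\; \exists\, Q,\, \rho,\, c_1,\, c_2:\ \rho\colon Q \to P \text{ iso},\ c_i\rho = d_i,\ Q \rhd c_1 \cbisim Q \rhd c_2 \,\}.
\]
Both components of a pair carry $P$ (so the first clause of \autoref{def:cbisim} is automatic), $P \rhd d_i$ is a genuine P-marking because isomorphisms preserve down-closure, and $S_P$ is symmetric since $\cbisim$ is; hence only the transfer clause in one direction must be checked, after which $S_P$ being contained in $\cbisim$ gives the lemma by taking $Q = O$, $\rho = \sigma$.

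For the transfer clause I would take $(P \rhd d_1, P \rhd d_2) \in S_P$ with witnesses $Q, \rho, c_1, c_2$ and a transition $P \rhd d_1 \ctrans{K}{e}{a} \delta(P,K,e_a) \rhd d_1'$, and first push it to the $Q$-side along the isomorphism $\rho^{-1}\colon P \to Q$. Picking any $e' \in \enames \setminus X_Q$ (possible since $\enames$ is infinite) and using $d_1\rho^{-1} = c_1$, \autoref{lem:ctrans-iso} yields $Q \rhd c_1 \ctrans{L}{e'}{a} \delta(Q,L,e'_a) \rhd \hat c_1$ with $L = \rho^{-1}(K)$ and $\hat c_1 = d_1'\funext{(\rho^{-1})}{e'_a}{e_a}$. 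Since $Q \rhd c_1 \cbisim Q \rhd c_2$ there is a matching $Q \rhd c_2 \ctrans{L}{e'}{a} \delta(Q,L,e'_a) \rhd \hat c_2$ with $\delta(Q,L,e'_a) \rhd \hat c_1 \cbisim \delta(Q,L,e'_a) \rhd \hat c_2$; applying \autoref{lem:ctrans-iso} a second time, now along $\rho\colon Q \to P$ and re-using the original fresh event $e \notin X_P$, this becomes $P \rhd d_2 \ctrans{K}{e}{a} \delta(P,K,e_a) \rhd d_2'$ with $d_2' = \hat c_2\funext{\rho}{e_a}{e'_a}$ (using $\rho(L) = K$), which is exactly the required simulating transition.

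Finally I would check that the continuation pair $(\delta(P,K,e_a) \rhd d_1',\, \delta(P,K,e_a) \rhd d_2')$ lies in $S_{\delta(P,K,e_a)}$, the witness being the map $\tau := \funext{\rho}{e_a}{e'_a}\colon \delta(Q,L,e'_a) \to \delta(P,K,e_a)$ that acts as $\rho$ on old events and sends $e'_a$ to $e_a$: it is an isomorphism because $\rho(L) = K$, so the adjoined pairs $L \times \{e'_a\}$ are carried bijectively onto $K \times \{e_a\}$, compatibly with transitive closure. Then $\hat c_2\tau = d_2'$ by definition of $d_2'$, and $\hat c_1\tau = d_1'$ because the composite renaming $\funext{(\rho^{-1})}{e'_a}{e_a}$ followed by $\funext{\rho}{e_a}{e'_a}$ is the identity on $\delta(P,K,e_a)$ --- the first undoes $\rho$ and swaps $e_a$ with $e'_a$, the second re-applies $\rho$ and swaps them back. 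Together with $\delta(Q,L,e'_a) \rhd \hat c_1 \cbisim \delta(Q,L,e'_a) \rhd \hat c_2$ this exhibits the pair in $S_{\delta(P,K,e_a)}$; the dual direction is symmetric, so $\{S_P\}$ is a \cb-bisimulation and the lemma follows.

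I expect the only real obstacle to be the substitution bookkeeping: correctly tracking $\funext{(\rho^{-1})}{e'_a}{e_a}$ and $\funext{\rho}{e_a}{e'_a}$ through the two invocations of \autoref{lem:ctrans-iso}, verifying that their composite is the identity on the target poset, and confirming that $\tau$ really is a poset isomorphism. The remaining points --- the poset-equality clause, symmetry of $S_P$, the choice of $e'$ disjoint from $X_Q$, and that $e$ remains admissible on the $P$-side --- are immediate.
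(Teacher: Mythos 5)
Your proposal is correct and follows essentially the same route as the paper's proof: it builds the relation of isomorphism-images of $\cbisim$-related pairs (with the same implicit correction of the typo in the statement), transports a transition to the $Q$-side and back via two applications of \autoref{lem:ctrans-iso}, reuses the original fresh event on the return trip, and closes the coinduction with the composite renaming $\funext{\rho}{e_a}{e'_a}$ exactly as the paper does with $\funext{\sigma}{e'_a}{e_a}$. No gaps.
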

\begin{proof}
We will prove that the following relation is a \cb-bisimulation
\[
    R_{O'}= \{ (O' \rhd c_1\sigma,O' \rhd c_2\sigma) \mid O \rhd c_1 \cbisim O \rhd c_2 , \text{$\sigma \colon O \to O'$ is an isomorphism} \}
\]
Take $(O' \rhd c_1\sigma,O' \rhd c_2\sigma) \in R_{O'}$ and 
\[
    O' \rhd c_1\sigma \ctrans{K'}{e'}{a} \delta(O',K',e'_a) \rhd c_1'
\]
We have to find a simulating transition of $O' \rhd c_2\sigma$. Let $e \notin X_O$. We can apply \autoref{lem:ctrans-iso}, using the isomorphism $\funext{\sigma^{-1}}{e_a}{e'_a}$, and get
\[
    O \rhd c_1 \ctrans{\sigma^{-1}(K)}{e}{a} \delta(O,\sigma^{-1}(K),e_a) \rhd c_1'\funext{\sigma^{-1}}{e_a}{e'_a}
\]
Since $O \rhd c_1 \cbisim O \rhd c_2$, there is a simulating transition
\[
    O \rhd c_2 \ctrans{\sigma^{-1}(K)}{e}{a} \delta(O,\sigma^{-1}(K),e_a) \rhd c_2' \enspace .
\]
Applying again \autoref{lem:ctrans-iso} with $\funext{\sigma}{e'_a}{e_a}$ to this transition, we get 
\[
    O' \rhd c_2\sigma \ctrans{K'}{e'}{a} \delta(O',K',e'_a) \rhd c_2'\funext{\sigma}{e'_a}{e_a} \enspace .
\] 
This is the required simulating transition. In fact, since
\[
    \delta(O,\sigma^{-1}(K),e_a) \rhd c_1'\funext{\sigma^{-1}}{e_a}{e'_a} \;\;\cbisim \;\; \delta(O,\sigma^{-1}(K),e_a) \rhd c_2'
\] 
and $\funext{\sigma}{e'_a}{e_a}$ is an isomorphism, by definition of $R_{O'}$ we have
\[
    ( \; \delta(O',K',e'_a) \rhd c_1' \;\;,\;\; \delta(O',K',e'_a) \rhd c_2'\funext{\sigma}{e'_a}{e_a} \;) \in R_{O'} \enspace .
\]
\end{proof}

\begin{lemma}
Let $O \rhd c_1$ and $O \rhd c_2$ be abstract P-markings. Then $O \rhd c_1 \cbisim O \rhd c_2$ if and only if $O \rhd c_1 \acbisim O \rhd c_2$.
\label{lem:apm-cbisim-acbisim}
\end{lemma}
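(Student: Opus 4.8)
The plan is to obtain this lemma as a short consequence of the two correspondence theorems already established, namely \autoref{thm:acbisim-cbisim} (the equivalence $O \rhd c_1 \cbisim O \rhd c_2 \iff \isorep{O} \rhd \norm{c_1}{O} \acbisim \isorep{O} \rhd \norm{c_2}{O}$) and \autoref{thm:acbisim-clos} (closure of $\acbisim$ under order-embeddings). The point of the present lemma is that when $O$ is \emph{already} an abstract poset, the abstraction step performed in \autoref{thm:acbisim-cbisim} does nothing essential, so the two bisimilarities collapse onto the same P-markings. Note first that an abstract P-marking is in particular a P-marking in the sense of \autoref{def:clts}, so \autoref{thm:acbisim-cbisim} is applicable to $O \rhd c_1$ and $O \rhd c_2$.

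\textbf{Steps.} First I would record that, since $O$ is abstract, $\isorep{O} = O$, hence the fixed abstraction isomorphism $\alpha_O \colon O \to \isorep{O}$ is an automorphism of $O$; in particular it is an order-embedding $O \to O$, and $\norm{c_i}{O} = c_i\alpha_O$, with $O \rhd c_i\alpha_O$ again a P-marking. (If one adopts the natural convention that $\alpha_O = id$ for abstract $O$, then $\norm{c_i}{O} = c_i$ and the lemma is literally the restriction of \autoref{thm:acbisim-cbisim} to abstract P-markings; I would mention this as the degenerate case but not rely on it.) Next I would apply \autoref{thm:acbisim-clos} to the order-embedding $\alpha_O \colon O \to O$: since $c_i$ is down-closed w.r.t.\ $O$ and $\alpha_O$ is an order-isomorphism, $c_i\alpha_O$ is still down-closed w.r.t.\ $O$, so $\dclos{(c_i\alpha_O)}{O} = c_i\alpha_O$. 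Thus \autoref{thm:acbisim-clos} yields $O \rhd c_1 \acbisim O \rhd c_2$ if and only if $O \rhd c_1\alpha_O \acbisim O \rhd c_2\alpha_O$, i.e.\ if and only if $O \rhd \norm{c_1}{O} \acbisim O \rhd \norm{c_2}{O}$. Chaining this with \autoref{thm:acbisim-cbisim} (using $\isorep{O} = O$) gives exactly $O \rhd c_1 \cbisim O \rhd c_2 \iff O \rhd c_1 \acbisim O \rhd c_2$, as required.

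\textbf{Main obstacle.} There is no substantive obstacle here: all the work is in \autoref{thm:acbisim-cbisim} and \autoref{thm:acbisim-clos}. The only delicate point is that $\norm{-}{O}$ need not literally be the identity on abstract P-markings, so one cannot simply quote \autoref{thm:acbisim-cbisim} and conclude; one must additionally invoke invariance of $\acbisim$ under the automorphism $\alpha_O$. This also makes clear why the lemma is only stated for abstract $O$: for a general $O$ the poset $\isorep{O}$ differs from $O$, and $\norm{c_i}{O}$ lives over a different carrier.
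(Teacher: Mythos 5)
Your proposal is circular relative to the paper's own development: the paper's proof of \autoref{thm:acbisim-cbisim} consists precisely of combining the present lemma with \autoref{lem:cbisim-iso}, so you cannot invoke \autoref{thm:acbisim-cbisim} to establish this lemma. The lemma is in fact the ``base case'' of that theorem: it settles the situation where the poset is already abstract, and the general statement is then obtained by transporting $\cbisim$ along the abstraction isomorphism $\alpha_O$ via \autoref{lem:cbisim-iso}. Deriving the base case back from the theorem therefore proves nothing unless you also supply an independent proof of \autoref{thm:acbisim-cbisim}, and any such proof would have to carry out exactly the transition-matching work your argument avoids. (Your auxiliary observations are fine in themselves: $\isorep{O}=O$ for abstract $O$, $\alpha_O$ is then an automorphism, and absorbing it with \autoref{thm:acbisim-clos} is legitimate since \autoref{thm:acbisim-clos} rests only on \autoref{prop:aclts-clos}; but none of this removes the circularity.)

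What the paper actually does is a direct bisimulation construction: it shows that $R_O = \{ (O \rhd c_1, O \rhd c_2) \mid O \rhd c_1 \cbisim O \rhd c_2 \}$ is an \acb-bisimulation (the converse direction being analogous). Given an abstract transition $O \rhd c_1 \actrans{K}{a} \delta(O,K,a) \rhd c_1'$, it is unfolded by the rule of \autoref{def:aclts} into a concrete transition $O \rhd c_1 \ctrans{K}{e}{a} \delta(O,K,e_a) \rhd c_1''$ with $c_1' = c_1''\,\funext{old(O,K,e_a)}{new(O,K,e_a)}{e_a}$; the hypothesis $O \rhd c_1 \cbisim O \rhd c_2$ yields a matching concrete transition of $O \rhd c_2$, which is folded back into an abstract transition by the same rule; finally \autoref{lem:cbisim-iso} (closure of $\cbisim$ under isomorphisms), applied to the isomorphism $\funext{old(O,K,e_a)}{new(O,K,e_a)}{e_a}$, shows that the two continuations are again $\cbisim$-related and hence in $R_{\delta(O,K,a)}$. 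If you want to repair your proof, this matching of concrete against abstract transitions is the step you must do explicitly; it cannot be outsourced to \autoref{thm:acbisim-cbisim}.
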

\begin{proof}
We show the left-to-right implication, the other one is analogous. We prove that the following relation is an \acb-bisimulation
    \[
        R_O = \{ ( O \rhd c_1,O \rhd c_2) \mid O \rhd c_1 \cbisim O \rhd c_2 \}
    \]
    Take $( O \rhd c_1,O \rhd c_2) \in R_O$ and suppose
    \[
        O \rhd c_1 \actrans{K}{a} \delta(O,K,a) \rhd c_1'
    \]
    then we must find a simulating transition of $O \rhd c_2$. By \autoref{def:aclts}, the above transition can be derived from
    \[
        O \rhd c_1 \ctrans{K}{e}{a} \delta(O,K,e_a) \rhd c_1''
    \]
    with $c''_1 \funext{old(O,K,e_a)}{new(O,K,e_a)}{e_a} = c'_1$.
    Since $O \rhd c_1 \cbisim O \rhd c_2$ by hypothesis, this transition can be simulated by 
    \[
        O \rhd c_2 \ctrans{K}{e}{a} \delta(O,K,e_a) \rhd c_2'' \enspace .
    \]
    Applying again \autoref{def:aclts}, we get the required transition
    \[
        O \rhd c_2 \actrans{K}{a} \delta(O,K,a) \rhd c_2'' (\funext{old(O,K,e_a)}{new(O,K,e_a)}{e_a}).
    \]
    In fact, from $\delta(O,K,e_a) \rhd c_1'' \cbisim \delta(O,K,e_a) \rhd c_2''$, using \autoref{lem:cbisim-iso} with the isomorphism $\funext{old(O,K,e_a)}{new(O,K,e_a)}{e_a}$, we get 
    \[
        \delta(O,K,a) \rhd c_1' \;\cbisim\; \delta(O,K,a) \rhd c_2'' (\funext{old(O,K,e_a)}{new(O,K,e_a)}{e_a})
    \]
    and we can conclude that these P-markings are related by $R_{\delta(O,K,a)}$, by its definition.
\end{proof}

\begin{lemma}
Let $\xymatrix@1{O_2 & O \ar[l]_{\sigma_2} \ar[r]^{\sigma_1} & O_1}$ be a span in $\catO$ and let 
\[
    \xymatrix{
        O  \ar[d]_{\sigma_2} \ar[r]^{\sigma_1} & O_1 \ar[d]^{p_1} \\
        O_2 \ar[r]_{p_2} & O_3 \pushoutcorner
    }
\]
be its pushout in $\catPO$. Then it is also a pushout in $\cat{O}$, with 
\[
    l_{O_3}(x) = 
    \begin{cases}
        l_{O_1}(y) & x  = p_1(y) \\
        l_{O_2}(y) & x = p_2(y)
    \end{cases}
\]
\label{lem:o-po-push}
\end{lemma}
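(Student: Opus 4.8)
The plan is to exhibit the given square, with $O_3$ carrying the labeling $l_{O_3}$, as a pushout in $\cat{O}$ by reducing everything to the pushout in $\catPO$ that is already supplied. In outline I would: (i) verify that $l_{O_3}$ is a well-defined labeling, so that $O_3 = (X_{O_3},\lss_{O_3},l_{O_3})$ is an object of $\cat{O}$, that $p_1$ and $p_2$ become $\cat{O}$-morphisms, and that the square commutes in $\cat{O}$; (ii) for an arbitrary competing cocone $q_1 \colon O_1 \to Q$, $q_2 \colon O_2 \to Q$ in $\cat{O}$, take the mediating monotone map $u \colon O_3 \to Q$ furnished by the $\catPO$-pushout after discarding labels; (iii) check that $u$ is label-preserving and that its uniqueness carries over. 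Steps (ii)--(iii) are the routine transfer of a universal property along the forgetful functor $\cat{O} \to \catPO$; the only point needing care is (i).

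For (i), the key is a concrete grip on $O_3$. Since $\sigma_1$ and $\sigma_2$ are order-embeddings --- hence injective and order-reflecting --- I claim the underlying set of the $\catPO$-pushout $O_3$ is exactly the set-pushout of $X_{O_1} \leftarrow X_O \to X_{O_2}$: no elements are identified beyond the forced $p_1(\sigma_1(u)) = p_2(\sigma_2(u))$ for $u \in O$. Indeed, any extra identification would be forced by antisymmetry of the order generated on the set-pushout, i.e.\ by a directed cycle of $O_1$- and $O_2$-comparisons; reflecting each comparison back along $\sigma_1$ and $\sigma_2$ turns such a cycle into a cycle of comparisons in $O$ (or, if it lies entirely in one of $O_1, O_2$, a cycle there), contradicting antisymmetry. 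Consequently $p_1$ and $p_2$ are jointly surjective and injective, so $l_{O_3}$ is defined on all of $O_3$, and its two clauses agree on the overlaps because $l_{O_1}(\sigma_1(u)) = l_O(u) = l_{O_2}(\sigma_2(u))$, as $\sigma_1$ and $\sigma_2$ preserve labels. With $l_{O_3}$ in place, $p_1$ and $p_2$ preserve labels by the very definition of $l_{O_3}$ and preserve order since they do so in $\catPO$; and the square commutes in $\cat{O}$ because it does in $\catPO$ and the underlying functions agree.

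For (ii)--(iii), let $q_1 \colon O_1 \to Q$ and $q_2 \colon O_2 \to Q$ be morphisms of $\cat{O}$ with $q_1 \circ \sigma_1 = q_2 \circ \sigma_2$. Forgetting labels gives a cocone over the span in $\catPO$, so by the universal property there is a unique monotone $u \colon O_3 \to Q$ with $u \circ p_1 = q_1$ and $u \circ p_2 = q_2$. That $u$ preserves labels is immediate from the case analysis supplied by (i): if $x = p_1(y)$ then $l_Q(u(x)) = l_Q(q_1(y)) = l_{O_1}(y) = l_{O_3}(x)$, using that $q_1$ preserves labels and the definition of $l_{O_3}$, and symmetrically when $x = p_2(y)$; joint surjectivity covers all of $X_{O_3}$. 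Hence $u$ is a morphism of $\cat{O}$. Finally, any $\cat{O}$-morphism mediating the cocone is in particular a monotone map doing so, hence coincides with $u$. This establishes the pushout property in $\cat{O}$.

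I expect the only real obstacle to be the ``no extra collapsing'' claim in (i); everything else is bookkeeping. Conceptually the whole lemma just says that the forgetful functor $\cat{O} \to \catPO$ --- equivalently, the projection out of the comma category $U \downarrow Act$ by which $\cat{O}$ was described --- creates this pushout, and the claim in (i) is precisely the instance of ``the underlying-set functor on posets preserves this pushout'' that makes that creation go through. One harmless technicality is that $\catPO$ is skeletal, so strictly one tests the universal property against the canonical representative of $Q$'s underlying poset, transferring along the evident isomorphism; this affects nothing.
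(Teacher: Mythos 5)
Your proof is correct and follows essentially the same route as the paper's: define $l_{O_3}$ on the $\catPO$-pushout, check well-definedness from the fact that the only identifications in $O_3$ are the forced ones $p_1\sigma_1(u)=p_2\sigma_2(u)$, and then transfer the universal property by verifying that the $\catPO$-mediating map preserves labels. The only difference is that the paper imports the ``no extra collapsing'' description of the pushout object from an earlier result (\cite[Lemma 8]{ACTA}, stating that pushouts of order-embeddings in $\catPO$ are computed as in $\catGraph$ plus transitive closure), whereas you re-derive it via the cycle-reflection argument for order-embeddings --- a self-contained and valid substitute, though your phrase ``reflecting each comparison back'' should strictly be applied segment-wise between consecutive points of the common image rather than to individual comparisons through non-glued elements.
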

\begin{proof}
In \cite[Lemma 8]{ACTA} we have proved that pushouts in $\catPO$ are computed as in $\catGraph$, plus transitive closure of the pushout object. We will use this fact to prove our claim.

First of all, we check that $l_{O_3}$ is well-defined. We only have to verify that its definition is correct for $x = p_1(y_1) = p_2(y_2)$. If $p_1(y_1) = p_2(y_2)$ then $y_1$ and $y_2$ are images via $\sigma_1$ and $\sigma_2$ of the same element of $O$, by definition of pushout in $\catGraph$. Since $\sigma_1$ and $\sigma_2$ preserve labels, we must have $l_{O_1}(y_1) = l_{O_2}(y_2)$, so $l_{O_3}(x)$ is well-defined on $x$. 

Preservation of labels by $p_1$ and $p_2$ follows immediately from the definition of $l_{O_3}$.

Now we prove that the square is indeed a pushout in $\cat{O}$.
Consider the following situation:
\[
    \xymatrix{
        O \ar[d]_{\sigma_2} \ar[r]^{\sigma_1} & O_1 \ar[d]^{p_1} \ar@/^1pc/[ddr]^{q_1}\\
        O_2 \ar@/_1pc/[drr]_{q_2} \ar[r]_{p_2} & O_3 \pushoutcorner \ar@{..>}[dr]^{m}\\
        && O_4
    }
\]
We have to check that, when $q_1$ and $q_2$ preserve labels, also the unique mediating morphism $m$, as computed in $\catPO$, does. We prove it by contradiction. Suppose $m$ does not preserve labels, then there exists $x \in X_{O_3}$ such that $l_{O_4}(m(x)) \neq l_{O_3}(x)$. Suppose $x$ is image of $y \in X_{O_1}$ via $p_1$ (the case $y \in X_{O_2}$ and $x = p_2(y)$ is analogous). Then we have
\begin{align*}
    l_{O_1}(y) &= l_{O_3}(x) && (\text{by $p_1$ preserving labels})
    \\
    &\neq l_{O_4}(m(x)) && (\text{by hypothesis})
    \\
    &= l_{O_4}(q_1(y)) && (\text{by $q_1 = m \circ p_1$})
\end{align*}
which implies that $q_1$ does not preserve labels, a contradiction.
\end{proof}

\subsection{Main proofs}

\begin{proof}[Proof of \autoref{lem:ctrans-iso}]
It is just a corollary of \autoref{lem:clts-clos}.
\end{proof}

\begin{proof}[Proof of \autoref{prop:aclts-clos}]
We prove (i), the other point is similar. Suppose 
\[
    O \rhd c \actrans{K}{a} \delta(O,K,a) \rhd c'.
\] 
Then, by \autoref{def:aclts}, this transition can be derived from
\[
    O \rhd c \ctrans{K}{e}{a} \delta(O,K,e_a) \rhd c''
\]
with $c' = c'' \funext{old(O,K,e_a)}{new(O,K,e_a)}{e_a}$, for any $e \notin X_O$. Suppose $e \notin X_{O'}$. By \autoref{lem:clts-clos}(i), we have
\[
    O' \rhd c\sigma \ctrans{\sigma(K)}{e}{a} \delta(O',\sigma(K),e_a) \rhd \dclos{(c''\funext{\sigma}{e_a}{e_a})}{\delta(O',\sigma(K),e_a)}
\]
from which, using \autoref{def:aclts}, we get
\[
    O' \rhd c\sigma \actrans{\sigma(K)}{a} \delta(O',\sigma(K),a) \rhd \dclos{(c''\funext{\sigma}{e_a}{e_a})}{\delta(O',\sigma(K),e_a)} \omega
\]
where $\omega =  \funext{old(O',\sigma(K),e_a)}{new(O',\sigma(K),e_a)}{e_a}$. We have to prove that the continuation of this transition has the required form.

It is immediate to verify that, for any isomorphism $\sigma \colon O \to O'$ and causal marking $c$ such that $\causes(c) \subseteq |O|$, we have 
\[
    \dclos{(c\sigma)}{O'} = \dclos{c}{O}\sigma
\]
which, for $\sigma = \omega$, implies
\begin{equation}
\dclos{(c''\funext{\sigma}{e_a}{e_a})}{\delta(O',\sigma(K),e_a)}\omega
   =
   \dclos{(c''\funext{\sigma}{e_a}{e_a}\omega)}{\delta(O',\sigma(K),e_a)} \enspace .
    \label{eq:cont}
\end{equation}
Now, observe that, by the definition of $\ext{\sigma}$ we have
\[
    \funext{\sigma}{e_a}{e_a}\omega = \funext{old(O,K,e_a)}{new(O,K,e_a)}{e_a}\ext{\sigma}
\]
therefore \eqref{eq:cont} is equal to 
\[
    \dclos{(c''\funext{old(O,K,e_a)}{new(O,K,e_a)}{e_a}\ext{\sigma})}{\delta(O',\sigma(K),a)}
    =
    \dclos{(c'\ext{\sigma})}{\delta(O',\sigma(K),a)}
\]
as required.
\end{proof}

\begin{proof}[Proof of \autoref{thm:acbisim-cbisim}]
Both implications can be proved by combining \autoref{lem:apm-cbisim-acbisim} and \autoref{lem:cbisim-iso}.
\end{proof}

\begin{proof}[Proof of \autoref{thm:corr-ic-ac}]
This is proved as \cite[Theorem 2]{ACTA}, where specific choices for abstract posets and $old$ and $new$ maps are made in order to accommodate Darondeau-Degano LTSs. The proof is exactly the same, where each specific operation is replaced by its general version described in this paper.
\end{proof}

\begin{proof}[Proof of \autoref{prop:reach-size}]
Take $c \in \fires{n_0}$. Then its tokens have been created by at most $|c|$ transitions. Since we only take immediate causes, i.e., events generated when those transitions were fired, each $O \RHD c$ is such that $|O|$ contains at most $|c|$ events. $O$ can be any poset on those events but, since posets of minimal P-markings must be abstract, there are finitely-many such posets. 
\end{proof}

\begin{proof}[Proof of \autoref{lem:rclts-prop}] 
\hfill
\begin{enumerate}[$(i)$]
    \item Immediately from the fact that any path from $\emptyset \rhd \emptyset \caus m_0$ to $O_c \rhd c$ builds $O_c$ and $c$ incrementally, adding one event for each transition.
    \item Suppose there are two parallel transitions from $O \rhd c$ to $O' \rhd c'$, with labels $a$ and $b$. Then $O' = \delta(O,K,e_a) = \delta(O,K',e'_b)$, which can only happen when $K = K'$ and $e_a = e'_b$, i.e., when the two transitions coincide.
    
Suppose there is a directed cycle starting and ending at $O \rhd c$. Each transition in the cycle would add a new event to $O$, so the final state would be $O' \rhd c$, with $O'$ a strict superposet of $O$, a contradiction.
\end{enumerate}
\end{proof}

\begin{proof}[Proof of \autoref{thm:bs-c-bisim}]
\hfill
\begin{enumerate}[$(i)$]
    \item Consider a transition $c_1 \bstrans{\bsc}{a} c_1'$ and suppose the corresponding transition in \rclts{} is
    \[
        O_{c_1} \rhd c_1 \ctrans{K}{e}{a} \delta(O_{c_1},K,e_a) \rhd c_1'
    \] 
    Now, observe that there is a trivial embedding of $O_{c_1}$ into $O$. In fact, causes of $c_1$ are down-closed w.r.t.\ both posets, so $O_{c_1}$ must be a prefix of $O$. Then, using \autoref{lem:clts-clos}(i) and the embedding $O_{c_1} \subto O$ on the above transition, we get
    \[
        O \rhd c_1 \ctrans{K}{e'_a}{} \delta(O,K,e'_a) \rhd c'_1\funext{}{e'_a}{e_a}
    \]
    for any $e' \notin X_O$. By the hypothesis $O \rhd c_1 \cbisim O \rhd c_2$, this transition can be simulated by
    \[
        O \rhd c_2 \ctrans{K}{e'_a}{} \delta(O,K,e'_a) \rhd c'_2
    \]
    with $\delta(O,K,e'_a) \rhd c'_1\funext{}{e'_a}{e_a} \cbisim \delta(O,K,e'_a) \rhd c'_2$. Using \autoref{lem:clts-clos}(ii) on the embedding of $O_{c_2}$ into $O$, and noting that $e' \notin X_{O_{c_2}}$, we recover a transition
    \[
        O_{c_2} \rhd c_2 \ctrans{K}{e'}{a} \delta(O_{c_2},K,e'_a) \rhd c'_2
    \]
    and from this, using the rule in \autoref{prop:bsc}, we get $c_2 \bstrans{\bsc}{a} c_2'$. In order to show that this transition simulates $c_1 \bstrans{\bsc}{a} c_1'$, we have to find an isomorphism $\sigma' \colon O_{c_1'} \to O_{c_2'}$ such that the following diagram commutes
    \[
        \xymatrix@C+3ex{
            O_{c_1} \ar@{^{(}->}[r]^{\phic_{c_1,c_1'}} \ar[d]_{\sigma} & O_{c_1'} \ar[d]^{\sigma'}\\
            O_{c_2} \ar@{^{(}->}[r]_{\phic_{c_2,c_2'}} &
            O_{c_2'}
        }
    \]
    We can define $\sigma'(x)$ as $\sigma(x)$ if $x \in |O_{c_1}|$ and as $e'_a$ if $x = e_a$.
    \item We want to prove that the following relation is an \acb-bisimulation
    \[
        R_{O_{c_2}} = \{ (O_{c_2} \rhd c_1\sigma,O \rhd c_2) \mid c_1 \bsbisim^\sigma c_2 \}
    \]
    Suppose $c_1 \bsbisim^\sigma c_2$ and
    \begin{equation}
        O_{c_2} \rhd c_1\sigma \ctrans{K}{e}{a} \delta(O_{c_2},K,e_a) \rhd c_1' .
        \label{c1-ctrans}
    \end{equation}
    We have to find a simulating transition of $O_{c_2} \rhd c_2$. Applying \autoref{lem:ctrans-iso} to the last transition, with isomorphism $\sigma^{-1}$, we get
    \[
        O_{c_1} \rhd c_1 \ctrans{\sigma^{-1}(K)}{e'_a}{} \delta(O_{c_1},\sigma^{-1}(K),e'_a) \rhd c_1''
    \]
    where $c_1'' = c_1' \funext{\sigma^{-1}}{e'_a}{e_a}$, for any $e' \notin X_{O_{c_1}}$. This transition corresponds, via \autoref{prop:bsc}, to the following transition in $\bsc$ 
    \[
        c_1 \bstrans{\bsc}{a} c_1''
    \]
    which, by the hypothesis $c_1 \bsbisim^\sigma c_2$, can be simulated by 
    \begin{equation}
        c_2 \bstrans{\bsc}{a} c_2'
        \label{c2-bstrans}
    \end{equation}
    with $c_1' \bsbisim^{\sigma'} c_1''$ such that 
    \begin{equation}
        \phic_{c_2,c_2'} \circ \sigma = \sigma'\ \circ \phic_{c_1,c_1''}
        \label{eq:scomm}
    \end{equation}
     Now, suppose for simplicity $\{ e_a \} =  |O_{c'_2}| \setminus |O_{c_2}|$ (the general case where $|O_{c'_2}| \setminus |O_{c_2}|$ contains any event fresh w.r.t.\ $O_{c_2}$ requires minor changes). By definition of $\phic_{c_2,c_2'}$ and $\phic_{c_1,c_1''}$, and by \eqref{eq:scomm}, $\sigma'$ should act as $\sigma$ on $O_{c_1}$, so $\sigma' = \funext{\sigma}{e_a}{e'_a}$. Moreover, since $\sigma'$ is an isomorphism, we have that the maximal causes of $e'_a$, namely $\sigma^{-1}(K)$, are mapped by $\sigma'$ to the maximal causes of $e''_a$, which then are $\sigma'(\sigma^{-1}(K)) = \sigma(\sigma^{-1}(K)) = K$, where the first equation follows from $e'_a \notin \sigma^{-1}(K)$. Therefore $O_{c_2'} = \delta(O_{c_2},K,e_a)$ and \eqref{c2-bstrans} is derived, using \autoref{prop:bsc}, from
    \[
        O_{c_2} \rhd c_2 \ctrans{K}{e_a}{} \delta(O_{c_2},K,e_a) \rhd c_2'
    \]
This transition is the required one simulating \eqref{c1-ctrans}. In fact, $c_1'' \bsbisim^{\sigma'} c_2'$ implies \[
    (\delta(O_{c_2},K,e_a) \rhd c_1''\sigma' , \delta(O_{c_2},K,e_a) \rhd c_2') \in R_{\delta(O_{c_2},K,e_a)}
\]
by definition of $R$, and for the first P-marking we have $c''_1\sigma' = c_1'' \funext{\sigma}{e_a}{e'_a} = ( c_1' \funext{\sigma^{-1}}{e'_a}{e_a})\funext{\sigma}{e'_a}{e_a} = c_1'$, which is the causal marking in the continuation of \eqref{c1-ctrans}.
\end{enumerate}

\end{proof}

\begin{proof}[Proof of \autoref{prop:o-small-pull}]
Smallness follows from skeletality. In \cite{ACTA} we have proved that pullbacks in $\catPOm$ are computed as the category $\catGraph$ of graphs and their homomorphisms. It can be easily verified that, given a cospan $\xymatrix@C=2ex{O_1 \ar[r]^f & O_3 & \ar[l]_{g} O_2}$ in $\catO$, we can forget labels and compute the pullback as in $\catGraph$. In fact, the pullback poset $O$ has an element $y$ for each pair of elements $x_1 \in X_{O_1}$ and $x_2 \in X_{O_2}$ such that $f(x_1) = g(x_2)$. But then, since $f$ and $g$ preserve labels, we must have $l_{O_1}(x_1) = l_{O_2}(x_2) = a$, so $l_O(y) = a$ and the pullback maps preserve labels. It is easy to check that pullback mediating morphisms preserve labels, as they must commute with morphisms with such property.
\end{proof}

\begin{proof}[Proof of \autoref{lem:o-delta-diag}]
In (\cite[Lemma 8]{ACTA}) we have proved that pushouts of order-embeddings in $\catPO$ are commuting squares in $\catPOm$. Therefore we can compute the two pushouts of \eqref{diag:delta} in $\catPO$, take the corresponding commuting squares in $\catPOm$ and then use \autoref{lem:o-po-push} to get labeling functions for their bottom-right corners. Diagrams in $\catPOm$ made of label preserving functions are also diagrams in $\catO$.

Finally, the fact that $\ext{\sigma}$ reflects orders follows from its definition.
\end{proof}

\begin{proof}[Proof of \autoref{prop:B-acc}]
$B$ is obtained by composition and product of accessible functors: $\FinParts$ is known to be accessible; $\lpsh$ is accessible, because it can be regarded as a constant endofunctor on $\pshO$; $\Delta$ is accessible, because it has a right adjoint, namely the functor computing right Kan extensions along $\delta$. 

In order to show that $B$ covers pullbacks, we will show that it has the form $\FinParts \circ B'$, with $B'$ a pullback preserving endofunctor on $\pshO$. The thesis will follow from $\FinParts$ covering pullbacks (see \cite{Staton11}). $\Delta$ has a left adjoint, namely the functor computing left Kan extensions along $\delta$, then it preserves pullbacks; $\lpsh$ can be seen as a constant, hence pullback-preserving, endofunctor on $\pshO$. $B'$ is the product of these two functors, so it preserves pullbacks.
\end{proof}

\begin{proof}[Proof of \autoref{prop:bbsim-obisim-eq}]
Requirement \autoref{def:oibisim}\ref{oibisim-clos} corresponds to the fact that a $B$-bisimulation $R$ on $(P,\rho)$ is a functor and its projections are natural transformations, so we have $(p,q)[\sigma]_R = (p[\sigma]_P,q[\sigma]_P)$, for any morphism $\sigma$ in $\catO$. Requirement \ref{oibisim-sim} corresponds to the fact that $RO$ is ``almost'' an ordinary bisimulation, because computing $\overline{B}R(O)$ essentially amounts to computing $\overline{B_{flts}}(RO)$ (see \autoref{ssec:coalg}) for each $O \in |\catO|$, as images in $\pshO$ are computed pointwise in $\catSet$, with the difference that continuations are not in $RO$, but in $R(\delta O)$.\qed
\end{proof}

\begin{proof}[Proof of \autoref{lem:mfun-pull}]
We have to prove that if the square on the left is a pullback then so is the outer square on the right.
\[
    \xymatrix@=9ex{
        O \pullbackcorner \ar[r]^{p_1} \ar[d]_{p_2} & O_1 \ar[d]^{\sigma_1} \\
        O_2 \ar[r]_{\sigma_2} & O_3
    }
    \hspace{15ex}
    \xymatrix@=3ex{
            \pshMark(O) \ar@{..>}[dr]^{\mu} \ar[rr]^{[p_1]} \ar[dd]_{[p_2]} && \pshMark(O_1) \ar[dd]^{[\sigma_1]} \\
            & P \pullbackcorner[dr] \ar[dl]^{\pi_2} \ar[ur]_{\pi_1} \\
        \pshMark(O_2) \ar[rr]_{[\sigma_2]} && \pshMark(O_3)
    }
\]
In the right diagram, let $P$ be the pullback in $\catSet$ of $[\sigma_1]$ and $[\sigma_2]$, namely 
\[
    P = \{ (c_1,c_2) \mid c_1[\sigma_1] = c_2[\sigma_2] \}
\]
We will show that that the mediating morphism $\mu$ is an isomorphism, which implies that $\pshMark(O)$ is a pullback object.

Take $(c_1,c_2) \in P$ and $c = c_1[\sigma_1] = c_2[\sigma_2]$. Then these causal markings must be of the form
\[
    c_1 = \{ K_1 \caus s_1, \dots, K_n \caus s_n \} 
    \qquad
    c_2 = \{ H_1 \caus s_1, \dots , H_n \caus s_n \}
    \qquad
    c = \{ L_1 \caus s_1 , \dots , L_n \caus s_n \}
\]
because $[\sigma_1]$ and $[\sigma_2]$ do not affect tokens. Moreover, we must have
\[
    L_i = \dclos{\sigma_1(K_i)}{O_1} = \dclos{\sigma_2(H_i)}{O_2} \qquad (i=1,\dots,n)
\]
by definition of the action of $\pshMark$ on morphisms, and in particular
\[
    \max_{O_1} \sigma_1(K_i) = \max_{O_2} \sigma_2( H_i ) = max_{O_3} L_i
\]
because $K_i,H_i$ and $L_i$ are down-closed sets, so they coincide with the closure of their maxima. It is easy to check that order-preserving and reflecting morphisms preserve maxima, so we have 
\[
    \sigma_1( max_{O_1} K_i) = \max_{O_3} \sigma_1(K_i) = \max_{O_2} \sigma_2( H_i ) = \sigma_2( max_{O_2} H_i) .
\]
Therefore, by definition of pullback in $\catO$ (computed as in $\catGraph$), there are $J_i \subseteq |O|$ such that 
\begin{equation}
    p_1(J_i) = \max_{O_1} K_i \qquad p_2(J_i) = \max_{O_2} H_i
    \label{rel-max}
\end{equation}    
and we can define the following causal marking in $\pshMark(O)$
\[
    c' = \{ \hat{J}_1 \caus s_1 , \dots , \hat{J}_n \caus s_n \}
\]
where $\hat{J}_i = \dclos{J_i}{O}$.

Now, observe that $c'[p_1] = c_1$ and $c'[p_2] = c_2$, because \eqref{rel-max} implies $\dclos{p_1(\hat{J}_i)}{O_1} = K_i$ and $\dclos{p_2(\hat{J}_i)}{O_2} = H_i$. Therefore letting $\mu(c') = (c_1,c_2)$ makes the whole right diagram commute. So far we have proved that $\mu$ is surjective. For injectivity, suppose there is another $c'' \in \pshMark(O)$ such that  $\mu(c'') = (c_1,c_2)$. Since $c''[p_1] = c_1$ and $c''[p_2] = c_2$, $c''$ is again of the form $\{ M_1 \caus s_1 , \dots , M_n \caus s_n \}$, with $\dclos{p_1(M_i)}{O_1} = K_i$. Since also $K_i = \dclos{p_1(\hat{J}_i)}{O_1}$, $M_i$ and $\hat{J}_i$ must have the same set $X$ of maxima. But then we have $M_i = \dclos{X}{O} = \hat{J}_i$, so $c'' = c'$.

\end{proof}

\begin{proof}[Proof of \autoref{thm:bisim-equiv}]
The first item is just an instance of \autoref{prop:bbsim-obisim-eq}.

For the second item, we shall show that $R$ is an \acb-bisimulation closed under order-embeddings if and only if it is a $\catO$-indexed bisimulation:
\begin{itemize}
	\item[$\implies$:] take $(O \rhd c,O \rhd \tilde{c}) \in R_O$ and suppose
	\begin{equation}
		O \rhd c \acTrans{K}{a} \deltaFun(O) \rhd c'.
		\label{tr:ac}
	\end{equation}
	Then, by \autoref{def:dd-ilts}, there is 
	\[
		O \rhd c \actrans{K}{a} \delta(O,K,a) \rhd c''
	\]
	such that $c' = c''[\epsilon(O,K,a)]$. Since $R$ is an \acb-bisimulation, there is
	\[
		O \rhd \tilde{c} \actrans{K}{a} \delta(O,K,a) \rhd \tilde{c}'
	\]
	such that $(\delta(O,K,a) \rhd c'',\delta(O,K,a) \rhd \tilde{c}') \in R_{\delta(O,K,a)}$. Again by \autoref{def:dd-ilts}, from the last transition we get
	\[	
		O \rhd \tilde{c} \acTrans{K}{a} \deltaFun(O) \rhd \tilde{c}'[\epsilon(O,K,a)].
	\]
	This is a simulating transition for (\ref{tr:ac}), because
	$(\delta(O,K,a) \rhd c'',\delta(O,K,a) \rhd \tilde{c}') \in R_{\delta(O,K,a)}$ implies $(\deltaFun(O) \rhd c',\deltaFun(O) \rhd \tilde{c}'[\epsilon(O,K,a)]) \in R_{\deltaFun(O)}$, by closure of $R$ under order-embeddings.
	\item[$\impliedby$:] analogous to the previous point. Closure under order-embeddings of $R$ follows from \autoref{def:oibisim}\ref{oibisim-clos}.
\end{itemize}
\end{proof}

\begin{proof}[Proof of \autoref{prop:res-pull}]
Analogous to the proof of \cite[Proposition 8]{ACTA}.
\end{proof}

\end{document}